\UseRawInputEncoding
\documentclass[11pt,letterpaper]{article}
\usepackage{graphicx} 
\usepackage[margin=1in]{geometry}

\usepackage[numbers]{natbib}
\usepackage{hyperref}
\usepackage{fluff-thm}
\usepackage{caption}
\usepackage{subcaption}
\usepackage{tikz}
\usetikzlibrary{arrows.meta,decorations.pathreplacing,shapes.geometric}
\newtheorem{repeatedtheorem}{Theorem}
\newenvironment{restatedtheorem}[1]
    {\begingroup
        \renewcommand{\therepeatedtheorem}{\ref{#1}.}%
        \begin{repeatedtheorem}}
    {\end{repeatedtheorem}\endgroup}
\newcommand{\Paren}[1]{\left(#1\right)}

\newcommand{\CrBr}[1]{\left\{#1\right\}}
\newcommand{\PP}{\mathbb{P}}
\newcommand{\EE}{\mathbb{E}}
\newcommand{\1}{\mathbbm{1}}
\usepackage[T1]{fontenc}
\usepackage{graphicx}
\begin{document}
\title{On the Geographic Incentives of Multiple Concurrent Proposers}

\author{
    Kushal Babel\thanks{Category Labs, \texttt{babel@cs.cornell.edu}}
    \and
    Jason Milionis\thanks{Category Labs, \texttt{jm@cs.columbia.edu}}
    \and
    Eric Xue\thanks{Princeton University, \texttt{ex3782@princeton.edu}. Work done while interning at Category Labs.}
}

\maketitle
\begin{abstract}
We study whether systems with multiple concurrent proposers (MCP) incentivize the geographic decentralization of proposers.
In our model, concurrent proposers strategically choose their locations to maximize their captured market share of users, who decide where to send transactions by comparing each proposer's time to inclusion.
The time to inclusion between a user and a proposer consists of two terms: the distance between them, and the distance between the proposer and her nearest quorum of attesters, the latter of which we term the \textit{quorum radius}.
We formally prove that proposers' quorum radii are the determinant of their incentives: proposers optimally choose where to locate in order to minimize their quorum radii.
We obtain as a result co-location when proposers also serve as attesters, but geographic diversity when these roles are separate.
We also examine pre-confirmations and faster intervalidator connections, showing that they attenuate the influence of the quorum radius and further destabilize co-location incentives.
\end{abstract}
\section{Introduction}

``Decentralization'' serves an important purpose in permissionless blockchains.
At the time of the technology's conceptualization, its purpose was to enable transaction without reliance on a centralized intermediary \cite{Nakamoto2008}.
While it continues to serve this purpose today, our understanding of the concept and its role has matured and become more nuanced.
We now understand, among others, how to think about and measure decentralization \cite{GencerBEvRS2018,OvezikKK2024,SaiBFG2021,SrinivasanL2017}, how financial motives such as maximizing extractable value (MEV) can threaten it \cite{BahraniGR2024,DaianGKLZBBJ2020,MilionisMRZ2022,MilionisMR2023,YangNZ2025}, its role in censorship resistance \cite{WahrstatterEYZQTSSCBG2024}, and its economic benefits \cite{LevyWZ2025}.

More recently, prior work has also begun to investigate the geographic topology of blockchain systems \cite{MotepalliGZJ2026,MotepalliJ2023,RoeschlinMBK2026,YangOWZ2026,OzWCYML2026}.
Interest in geographic decentralization stems from the desire for a network to be geographically robust to correlated regional failures, proximal to users, and independent of idiosyncratic factors like local decisions.
Nonetheless, there are strong reasons why current validators of major blockchains such as Ethereum and Solana heavily concentrate in the U.S. and Europe, such as infrastructural advantages and the financial edge that comes from minimizing one's latency to payoff-relevant parties and other validators: the ability to strategically delay block broadcasting (also termed ``timing games''), which results in greater MEV \cite{alpturer2025timing,OzKVKMT2023,SchwarzSchillingSTPSM2023}.
This temporary monopoly power of a single rotating leader in most contemporary consensus protocols to censor, reorder, and insert transactions has often led to negative outcomes for users.
Therefore, the original motivation for the adoption of protocols with multiple concurrent proposers (MCP) \cite{BabelEHJKKMSS2026,GarimidiNR2025,KniepRSW2026} has exactly been to curb the monopoly influence of a single leader, reducing the impact of MEV on these ecosystems.

Oftentimes, there has been a commensurate belief that, should there be multiple proposers disjointly building parts of the block, these parties will be better motivated to diversify their locations.
The underlying reasoning given is that if a user can send to any proposing validator, then she would prefer to send to the closest one.
Validators would then spread out geographically to better serve and capture these users and opportunities.
In fact, if one accepts the premise that users prefer closer proposers (or that those who capture opportunities are those who are closest to them), prior work on location games does suggest that validators would spread out in equilibrium \cite{EatonL1975}.\footnote{
Hotelling \cite{Hotelling1929} shows that two firms competing to capture consumers on a line who patron the closest firm co-locate at equilibrium, giving rise to the ``principle of minimum differentiation.'' However, Eaton and Lipsey \cite{EatonL1975} show that in the same model, with at least four firms, the firms locate uniformly along the line at equilibrium.}
This leads us to the underpinning questions motivating this work:
\begin{question}
Would users necessarily prefer the closest proposer?
\end{question}
\begin{question}
Would the proposer who captures an opportunity necessarily be the one closest to it?
\end{question}
We answer both questions in the negative.
At a high level, in most blockchains, including Ethereum and those based on Byzantine Fault Tolerant (BFT) protocols, not only does a proposer need to hear about a transaction before the block deadline, but she also needs to propagate it to a quorum, typically a supermajority, of her peers by the same deadline. 

Proposed MCP protocols do not change this requirement: user preference over concurrent proposers and the capture of opportunity remain dependent on both proposer latency from sources as well as their latency to their nearest quorums.
The time at which a sophisticated user (who sends transactions as late as possible to maximize her informational advantage) must send her transaction is not determined by only her latency to a proposer nor only the time it takes for a proposer to collect a quorum of votes.
Rather, the time depends on the sum of these two terms, so such a user would send to the proposer who minimizes this sum (which allows the user to wait the longest).
When retail users broadcast their transactions, it is also this sum that determines the proposers who are the first to include these transactions in a block.
In either case, to capture more transaction volume, a proposer cannot simply optimize for either latency term.

That is to say, latency to source and latency to quorum offer competing proposer incentives: the desire to minimize latency to globally distributed sources of revenue from user transaction flow incentivizes dispersion, while the desire to minimize intervalidator propagation latency incentivizes concentration.
This paper investigates which incentive is more salient, and the conditions that determine saliency, for a proposer deciding where to locate when she no longer has monopoly power over the contents of her block.
Our findings are as follows.

\begin{theorem}[Informal]
Proposers choose locations that minimize the time to collect a quorum of votes, which we call the quorum radius.
This motivation results in co-location when proposers are the attesters but can promote geographic diversity when attesters are a separate set of nodes.
Faster intervalidator connections and preconfirmations can incentivize proposers to spread out by weakening the influence of the quorum radius on their market shares.
\end{theorem}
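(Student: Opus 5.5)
The statement is informal, so the first step is to fix a formal model in which each of its three clauses becomes a precise proposition. Users are drawn from a measure $\mu$ on a metric space $(X,d)$, for instance a circle or sphere. Proposers choose locations $p_1,\dots,p_m\in X$, and the attesters form a finite multiset $A\subset X$. The quorum radius is
\[
r(p)=\min\{t : |\{a\in A : d(p,a)\le t\}|\ge q\},
\]
the distance to the $q$-th nearest attester, where $q$ is the quorum size. A user at $u$ sends to a proposer minimizing $d(u,p_i)+r(p_i)$, splitting ties equally. Proposer $i$'s payoff is the $\mu$-mass of users she captures. Cells of this form are additively weighted Voronoi (Apollonius) cells, which I plan to use as the basic geometric object. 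The first lemma is that $r$ is $1$-Lipschitz, since the $q$-th nearest distance moves by at most $d(p,p')$. Consequently the cost $f_u(p)=d(u,p)+r(p)$ is $2$-Lipschitz, and it satisfies the key inequality $f_u(p')\ge d(u,p)+r(p)$ whenever $r(p')\ge r(p)+d(p,p')$. This inequality is what makes the quorum radius, and not proximity to users, the deciding quantity.

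\emph{Clause 1 (the quorum radius drives incentives).} I will show that a proposer at $p$ with $r(p)-r(p^\ast)\ge d(p,p^\ast)$ is weakly dominated by relocating to a minimizer $p^\ast$ of $r$. For every user $u$, the triangle inequality gives $f_u(p^\ast)\le d(u,p)+d(p,p^\ast)+r(p^\ast)\le f_u(p)$, so the move loses no user. In equilibrium, therefore, every proposer sits at a point where local gains in user proximity are exactly offset by growth of $r$. When the attesters are concentrated enough that $r$ has a sharp minimum, this forces every proposer onto the minimizers of $r$.

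\emph{Clause 2, case (a): proposers are the attesters.} Here $A=\{p_1,\dots,p_m\}$, and I claim full co-location at any point $c$ is a Nash equilibrium whenever $q\le m-1$. All non-deviators then have $r=0$. A deviator moving to $p'$ has $r(p')\ge d(p',c)$, so $f_u(p')\ge d(u,p')+d(p',c)\ge d(u,c)=f_u(c)$. She therefore never strictly wins any user, and with an atomless $\mu$ she captures measure at most what she had.

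\emph{Clause 2, case (b): attesters form a separate set.} Here I will give constructions in which $r$ has several well-separated minimizers, such as $q$-clusters of attesters placed on distinct continents of a circle. Co-locating all proposers at one minimizer is then beaten by moving to another minimizer, which captures the users in that region at the same $r$. The equilibria therefore spread across the minimizers. This is a geographically diverse outcome that is still governed by the quorum radius, as Clause 1 predicts.

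\emph{Clause 3 (faster links and preconfirmations).} I model both effects as scaling: the user cost becomes $d(u,p)+\alpha\,r(p)$ with $\alpha<1$. The co-location argument above fails, because a deviator to $p'$ now pays only $\alpha d(p',c)$. She strictly captures every user with
\[
d(u,p')+\alpha d(p',c)<d(u,c),
\]
and this set has positive measure whenever $\mu$ has mass near $p'$ away from $c$. Co-location is then not an equilibrium, and I will compute the threshold on $\alpha$ for simple distributions such as the uniform measure on a circle.

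The main obstacle is Clause 1 in full generality. The dominance argument only applies when $r$ grows at least as fast as the distance to $p^\ast$, and for diffuse attester sets a proposer might trade a small increase in $r$ for a large gain in proximity. I would first prove the statement under a concentration condition on $A$ (or for $\alpha=1$ in clustered configurations), and state the general case as a characterization of best responses via the Apollonius cells. Tie-breaking and atoms of $\mu$ need careful handling throughout.
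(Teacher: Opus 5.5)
Your sufficiency argument for co-location and your mechanism for faster links match the paper. The central gap is that you never establish \emph{strict} profitability of a deviation. Moving to a minimizer $p^\ast$ of $r$ is only a weak improvement, and weakly dominated locations are not excluded from equilibrium. This weak dominance is exactly Lemma~\ref{lemma:attester-midpoints-weakly-dominant}, which the paper explicitly does not use. The paper instead combines Lemma~\ref{lemma:profitable-deviation-helper} with the interval structure of markets (Lemma~\ref{lemma:general-market-share-intervals}). The move is strictly profitable when another proposer borders the mover's market on the side she moves toward. She then breaks a tie with a co-located peer, pushes her boundary with a neighbor outward, or starts tying with a neighbor who used to beat her (Theorem~\ref{theorem:attesters-Hotelling-reduction}).

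If no other proposer lies weakly on that side, she already weakly wins every user there and the move need not raise her share. Extremal proposers can therefore sit off the minimizers. This is why Theorem~\ref{theorem:reduction-to-hotelling-formal} needs tail-mass conditions and a global deviation by the smallest-share proposer, and why your claim that equilibrium ``forces every proposer onto the minimizers'' fails as stated. (The regularity condition you want is not concentration but equal diameters of minimal quorums, which yields $D(x)=d(x,X)+D^\ast$ by Lemma~\ref{lemma:attester-quorum-radius-identity}.)

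In case (a) you only show that co-location is \emph{an} equilibrium, which is the easy Theorem~\ref{theorem:colocation-sufficient}. The paper's claim is uniqueness, and your dominance tool cannot give it because a moving validator changes her peers' radii. The paper needs three further ingredients. First, Lemma~\ref{lemma:profitable-deviation} adds the requirements that no peer's radius shrinks and that the mover passes no one. Second, a careful analysis of whose radius depends on whom. Third, a non-local deviation with explicit market-share computations to rule out the configuration with about a third of the validators at each of two points and the rest at their midpoint, which is immune to local deviations (Lemmas~\ref{lemma:block-time-specific-configuration} and~\ref{lemma:block-time-specific-configuration-n=4}). In case (b), disjoint $q$-clusters on different continents require $q\le |A|/2$. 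That is incompatible with supermajority quorums, any two of which share more than a third of the attesters. The paper's diversity comes instead from all quorum midpoints being equally good, which reduces the game to Hotelling over $X$.

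In Clause 3, a positive-measure set of strictly captured users does not make the deviation profitable. At co-location the deviator already holds $1/m$ (your $m$ being the number of proposers), and by moving she forfeits the tied users on the far side, so the captured mass must exceed $1/m$. The paper gets this by stepping infinitesimally toward the side carrying at least half the user mass (Theorem~\ref{theorem:fast-connections-no-colocation}), which relies on the line. On a sphere with uniform users, a small step captures only users whose direction from $c$ lies within angle roughly $\arccos\alpha$ of the step. That set has mass about $\arccos(\alpha)/\pi<1/m$ for $\alpha$ near $1$, so your inference breaks down there.

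Preconfirmations are also not a rescaling of $r$. The paper models them as a mixture of inclusion-sensitive users and nearest-proposer users, and the two models disagree. With uniform users on $[0,1]$ and sophisticated timing, co-location at $1/2$ survives the mixture whenever the inclusion-sensitive fraction is at least $1-1/(m-1)$ (compare Theorem~\ref{theorem:preconf-no-colocation}). Your rescaled cost destabilizes it for every factor below $1$. Finally, you do not treat the block-time model, where the paper's instability result for faster links needs a factor below $1-2/m$. You also omit the covering-radius bounds at mixed equilibria, which are what substantiate ``spreading out'' beyond merely breaking co-location.
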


\subsection{Model} 
We study this question via a model reminiscent of the classic Hotelling model of spatial competition \cite{Hotelling1929}.
In our model, users are exogenously distributed along the unit interval,\footnote{In practice, some users, e.g., searchers, endogenously decide their locations. However, this only makes co-location more likely, so an exogenous distribution only strengthens any co-location result. Also, an exogenous distribution is reasonable for retail users.} and validators choose where to locate to maximize their market shares.\footnote{Alternatively, one can think of transaction locations as iid samples from this distribution and a validator's market share as her expected share of the sampled transactions.}
The distance between two points represents their latency.
Given the locations of validators, a user sends her transaction to the validator that minimizes the sum of two distances: the distance between them and the distance between the validator and her $m$-th nearest peer (including herself) where $m$ is the number that determines a supermajority.
The former quantity represents the latency between the user and the validator, while the latter, which we term the \textit{quorum radius}, represents the validator's (minimum) propagation latency.
The sum of these two quantities then represents the total time that passes between when the transaction leaves the user and when it is received by a quorum of validators.
If multiple validators tie, then the user picks a random one to send her to.
A validator's market share is the fraction of users who send to her.

This model captures sophisticated users who wait as long as possible before sending their transactions: the sum of the two latencies dictates exactly how late a sophisticated user can send her transaction if she wants the validator to receive and propagate her transaction before the block deadline.
One can think of each location as a group of users competing for the same opportunity.
The model is time-agnostic in the sense that the exact time that a transaction gets sent does not affect which validator it gets sent to, so one is free to think of each location as sending at a different time (and optimizing for a different block deadline).

To model retail users who broadcast their transactions to all validators (who then compete to include these transactions in the earliest block), we introduce a temporal aspect to our model.
Let $\Delta > 0$ denote the time between deadlines, that is, the protocol's block time.
Users send transactions at a constant rate for $\Delta$ time.\footnote{As in the non-block time model, one is free to think of sampling from the distribution of users instead of each location as sending a transaction at each point in time. 
}
It is possible to include a transaction sent at time $\tau \in [0, \Delta)$ in block $k$ if there exists a validator who can receive and propagate the transaction to a quorum before $k \cdot \Delta - \tau$ time passes.
A validator is the first to include this transaction if she minimizes $k$.
Ties are broken randomly.
A validator's market share is the fraction of transactions that she includes first.

\subsection{Extensions}
We also investigate how preconfirmations and dedicated network infrastructure such as DoubleZero could influence the location decisions of validators. 
A preconfirmation is a promise from a validator that a user's transaction will be included in a future block.
The validator can make this promise as soon as she hears about the transaction.
In particular, unlike inclusion, she does not need to communicate the transaction with any of her peers nor wait until the block deadline. 
The time it takes for a user to receive a preconfirmation from a validator is therefore just the latency between them, so a preconfirmation user would send her transaction to her closest validator.

To model the fact that some users may be happy with a preconfirmation, we assume that only an $\alpha$ fraction of the users at each location determine which validator to send to based on the time to inclusion.
The remaining $1 - \alpha$ fraction send to their closest validator (tie-breaking at random).
When $\alpha = 1$, we recover our original models.
When $\alpha = 0$, the model coincides with Hotelling's \cite{Hotelling1929}.

Dedicated network infrastructure such as DoubleZero provides validators with faster connections to each other.
Because these networks are less congested and offer more direct pathways than public internet infrastructure, they offer a fixed percentage improvement in latency per unit of distance in expectation.
In other words, the time it takes to cover some distance on these specialized networks is some percentage of the time it takes to cover the same distance on the public internet.

To model the effect of faster intervalidator connections, we modify user preferences as follows.
A user determines which validator to send to using the sum of two terms: the distance between them and $\mu \in [0,1]$ times the distance between the validator and her $m$-th nearest peer.
One should think of the distance between two points in this model as their latency on public internet infrastructure.
This distance scaled down by $\mu$ is then their latency on dedicated rails.
When $\mu = 1$, we recover our original models. 
When $\mu = 0$, the non-block time model coincides with that of Hotelling \cite{Hotelling1929}.

\subsection{Attester-Proposer Separation}
Currently, most Proof-of-Stake (PoS) protocols ask validators to both propose and vote on blocks.
So far, we have only considered MCP protocols that do the same.
However, an MCP protocol could instead divide validators into disjoint sets of attesters and proposers, e.g., by requiring stake to be registered separately for the two roles.\footnote{In permissionless settings, a validator may participate in both roles using Sybils. It is unclear whether doing so provides a meaningful advantage as a proposer since the number of attesters is likely large. We leave this question for future work.}
In particular, proposers do not come from the set of attesters.
We study how this design decision influences the location decisions of proposers.

Under such a design, proposers no longer propagate their blocks to each other but to quorums of attesters.
We model the distribution of attesters as exogenous.\footnote{Depending on the protocol, attesters need not have location-based incentives if they are separate from proposers. 
If a protocol explicitly rewards attesters for their locations, then an exogenous attester distribution is an even more reasonable assumption.}
The distribution could be discrete (if the number of attesters is finite), continuous (if the number of attesters is large), or possess both discrete and continuous components (if the number of attesters is large but significant fractions locate at the same points).
A user determines which proposer to send to using the sum of two distances: the distance between her and the proposer and the distance between the proposer and her nearest quorum of attesters.
We continue to refer to the latter quantity as the proposer's \textit{quorum radius}.

\subsection{Results} 
In its most basic form,\footnote{That is, without preconfirmations, i.e., $\alpha = 1$, without faster intervalidator connections, i.e., $\mu = 1$, and without attester-proposer separation.} our model suggests that despite having to compete with other validators for transactions in any given time slot, when deciding their locations, validators care more about their latencies to each other than their latencies to users.
We show that all validators co-locate at equilibrium as a result.\footnote{A profile of validator locations is an equilibrium if no validator can strictly increase her market share by re-locating, fixing the locations of her peers.}
These conclusions hold in both sophisticated and retail models and irrespective of block time.

Both preconfirmations and faster intervalidator connections destabilize co-location by attenuating the influence of the quorum radius: preconfirmation users disregard this quantity, while faster connections make user-to-validator latency more important.
For a continuous uniform user distribution, we bound the distance between any user and her nearest validator at equilibrium, also known as the \textit{covering radius} of the validators.
A smaller covering radius means the validators are more spread out.
Qualitatively speaking, the covering radius decreases as the fraction of the preconfirmation users increases, as intervalidator connections get faster, and as the number of validators increases.
With faster intervalidator connections, shorter block times also reduce the covering radius, although only up to a certain extent.
Since preconfirmations are independent of block time, the covering radius does not vary with it in our results.

Separating attestation from proposing can restore Hotelling dynamics and incentivize proposer dispersion.
For example, if attesters are uniformly distributed, either as discrete points or as a continuum, then proposers only locate at the midpoints of attester quorums.
Since attesters are uniformly distributed, these points have the same quorum radius.
A user's preference therefore only depends on her distance to each proposer since no proposer locates outside of these points.
We characterize the set of equilibria for these two attester distributions when additionally the user distribution is continuous and uniform.
The reduction to Hotelling dynamics over the midpoints of attester quorums holds more generally.
In particular, it holds whenever the diameter of every attester quorum coincides.

\subsection{Practical Takeaways}
We expect our qualitative conclusions to be more robust than their quantitative counterparts.
Our results suggest that no MCP protocol in which validators serve as both attesters and proposers inherently incentivizes them to locate farther apart.
What this means practically is that the adoption of MCP alone may not result in any deviation from the status quo.

The popularity of preconfirmations and the reach and speed of dedicated intervalidator connections determine both which levers are available for promoting geographic decentralization and how effective those levers are.
With sufficiently many preconfirmation users or sufficiently faster connections,
increasing the number of proposers, and in the latter case, tightening the block time, are two mechanisms through which one can encourage validators to spread out.

Our results also suggest that attester-proposer separation can promote geographic diversity.
When the two roles are assigned to disjoint sets, proposers tend to mirror the geographic distribution of attesters.
By incentivizing attesters to spread out, a protocol can encourage its proposers to do the same.

The key quantity is the latency required to propagate a block to a quorum, i.e., the quorum radius.
When users weigh this quantity equally against their latency to proposers, proposers are highly incentivized to minimize it.
This desire promotes co-location when proposers and attesters coincide, but geographic diversity when they are separate.
Both preconfirmations and faster intervalidator connections attenuate the importance of quorum radius.
Co-location becomes unstable as a result of this weakened influence.

Our work points to explicit location incentives, e.g., rewards, as likely important for geographic decentralization.
Without reducing the influence of the quorum radius, our results suggest that an MCP protocol would either have to directly incentivize its proposers to spread out or indirectly by incentivizing its attesters to do so instead.
We suspect the latter is cheaper since proposer incentives are inherently tied to location, whereas attester incentives are less so.

\subsection{Related Work}

We give brief comparisons to prior work.
See Appendix~\ref{app:related} for detailed discussions.

\subsubsection{Hotelling Spatial Competition}
In Hotelling spatial competition, firms compete to attract users who prefer the firm closest to them.
While two firms prefer to co-locate~\cite{Hotelling1929}, when there are four or more firms, they indeed spread out in some equilibria of the game~\cite{EatonL1975}. For some parameters (e.g., three firms), no pure strategy equilibrium exists, but firms spread out (in expectation) under the unique symmetric mixed strategy equilibrium.\footnote{
A pure strategy chooses a location deterministically, while a mixed strategy allows this choice to be random. An equilibrium is symmetric if each firm locates according to the same strategy.} Our work differs from the classic Hotelling literature due to the intervalidator (correspondingly inter-firm) distance not only featuring in user preferences, but also playing a salient role.

While our equilibrium characterizations in our attester-proposer separation model loosely relate to the equilibrium results of~\cite{BuchelK2016,Fournier2019,FournierS2019,Nunez2016,NunezS2017}, they are not implied by them. More specifically, the restriction to finite locations in the case of finite uniformly distributed attesters and an interval in the case of a uniformly distributed continuum of attesters is endogenous in our model, whereas it is exogenous in prior work. That proposers restrict themselves to particular locations in our model, which then induces Hotelling dynamics over these locations, is one of our key insights.
We defer variations of Hotelling problem and their relation to our work in Appendix~\ref{app:related}.

Prior work~\cite{FrongilloHMT2026} proposes a general class of games called position-optimization games that have Hotelling games as a special case.
While our attester-proposer separation model is a special case of position-optimization games, results in~\cite{FrongilloHMT2026} crucially depend on two assumptions which do not apply in our setting: (1) each user has a unique location that optimizes her time to inclusion and (2) the number of these locations are finite.
See Appendix~\ref{app:related} for a detailed discussion.

\subsubsection{Geographic Decentralization}

Yang et al. study the geographic incentives of validators under Ethereum's single-proposer protocol design~\cite{YangOWZ2026}.
They find that whether validators build their own blocks or outsource block construction, they are incentivized to co-locate with payoff-relevant parties to minimize propagation latencies.
The payoff-relevant parties are information sources when validators build their own blocks and builders and relays when they outsource block construction.
The authors propose reducing the single proposer's monopoly power as one of the potential mitigations, whose effectiveness we show depends on the setting in which multi-proposer protocols are deployed.

A separate line of work~\cite{MotepalliGZJ2026,MotepalliJ2023,RoeschlinMBK2026} studies how to give explicit rewards and design mechanisms that promote geo-diversity among validators. The analysis in our work identifies when explicit incentives may be needed in multi-proposer protocols and when validators organically spread out to capture market share.

Like us, the authors of \cite{OzWCYML2026} ask whether MCP incentivizes geographic diversity among validators.
The key difference between our model and the model of \cite{OzWCYML2026} is that their model does not account for the time it takes for validators to propagate blocks to a quorum, which we show plays a crucial role.
Also, our model allows a transaction not heard by any validator before the block deadline to be included in a later block, whereas theirs does not.
The nature of our results differ as well: our results focus on equilibrium characterization, while the results of \cite{OzWCYML2026} focus on the loss of welfare due to transactions getting dropped in their model.

\section{Model}

We first present the model in its most basic form.
Users are exogenously distributed along the unit interval according to some probability measure $\nu$.\footnote{Our main results make no assumptions on $\nu$ other than that it has full support.
In particular, it can place arbitrarily small mass on some regions of the market.}
One can think of $\nu$ either as the distribution from which each transaction's location is drawn or as the spatial distribution of users.
We adopt the second interpretation.

Each of $n$ proposers chooses a location to maximize her market share.
Given a profile of proposer locations $p \in [0,1]^n$, the total latency, or time to inclusion, of a user at $u \in [0,1]$ when she sends her transaction to proposer $i$ is given by
\[
    L_i(u, p) = d(u, p_i) +  D_i(p).
\]
Here, $d$ is the Euclidean metric and represents the latency between two points. 
$D_i(p)$ denotes proposer $i$'s \textit{quorum radius} with respect to the configuration $p$.
Recall that a proposer's quorum radius is the time it takes to reach a quorum, typically a supermajority, of attesters, whether the attesters are the other proposers or a disjoint set of nodes.

When validators both attest and propose, the quorum radius of proposer $i$ if she locates at $x \in [0,1]$ and her peers locate according to $p_{-i}$ is
\[
    \textstyle D_i(x, p_{-i}) = \min_{S \subseteq [n] : |S| = m} \max_{j \in S} d(x, p_j)
\]
where $m = \lfloor 2n/3 \rfloor + 1$.
That is, $D_i(x, p_{-i})$ represents the latency to reach the furthest validator in proposer $i$'s nearest supermajority if she locates at $x$.
When validators serves as both attesters and proposers, we simply refer to proposers as validators and use $v$ to denote their locations instead of $p$.

When attesters and proposers are disjoint sets, the quorum radius of proposer $i$ if she locates at $x \in [0,1]$ and her peers locate according to $p_{-i}$ is
\[
    D_i(x,p_{-i}) 
        = \textstyle \inf \{\sup_{y \in S} d(x,y) : S \subseteq [0,1]: \lambda(S) > 2/3\}
\]
where $\lambda$ denotes the exogenous probability measure of attesters on $[0,1]$.
Unlike the user distribution $\nu$, the interpretation of $\lambda$ is inflexible: one should interpret $\lambda$ as the spatial distribution of attesters, not as some distribution from which an attester's location is drawn.
When there are $k \in \mathbb{N}$ attesters and their locations are given by $a \in [0,1]^k$, the quorum radius adopts a familiar form:
\[
    \textstyle D_i(x,p_{-i}) = \min_{S \subseteq [k] : |S| = m} \max_{j \in S} d(x, a_j).
\]
Here, the number required for a supermajority $m = \lfloor 2k/3 \rfloor + 1$ depends on the number of attesters $k$, not the number of proposers $n$.
When $\lambda$ has continuous components (e.g., to approximate a large attester set), the nearest quorum may not be well-defined.
For example, when $\lambda$ is continuous and uniform, $[0,2/3]$ is not itself a quorum, but $[0, 2/3+\varepsilon]$ is a quorum for all $\varepsilon > 0$.
For this reason, the definition of quorum radius takes an infimum and a supremum.
In the case that $\lambda$ has full support and no point masses, the quorum radius is simply
\[
    D_i(x,p_{-i}) 
        = \textstyle \min \{\sup_{y \in S} d(x,y) : S \subseteq [0,1]: \lambda(S) = 2/3\}.
\]
Note that the quorum radius when attesters and proposers are separated depends only on the location $x$, not on the identity of the proposer nor the locations of her peers.
Consequently, we use $D(x)$ to denote $D_i(x,p_{-i})$ specifically when working with the attester-proposer separation model.

Looking ahead, while proposer location incentives change based on the definition of quorum radius $D_i$, i.e., whether proposers propagate to other proposers or to a disjoint set of attesters, the preferences of users can be specified using only the total latencies $L_i$.

\subsection{Sophisticated Users}

A sophisticated user aims to send her transaction as late as possible without missing her target block deadline.
If she sends from location $u$ to proposer $i$, then to make her desired deadline, she must send her transaction with at least $L_i(u,p)$ time remaining.
To give herself as much time as possible, she will send to a proposer in
\[
    \textstyle S(u,p) = \arg\min_i L_i(u,p).
\]
That is, she will choose a proposer that minimizes the total latency of her transaction.
If multiple validators tie, then the user breaks the tie randomly.
It follows that proposer $i$'s share of the users at $u$ and total market share are respectively 
\[
    \textstyle M_i(u, p) = \frac{\1(i \in S(u,p))}{|S(u,p)|} \quad \text{and} \quad M_i(p) = \int_0^1 M_i(u,p) \,\mathrm{d}\nu(u).
\]

The discrete nature of block time does not matter for a sophisticated user: the time at which she sends her transaction is always relative to the deadline of her target block.
Thus, the model so far does not feature a block time parameter.
The model applies equally when different users target different blocks.

\subsection{Retail Users}

Unlike her sophisticated counterpart, a retail user sends her transaction as soon as she generates it.
If the protocol requires her to pay her inclusion fee to each proposer who includes her transaction, then she may choose to send only to a proposer who can include her transaction in the earliest block.
If the protocol instead splits her inclusion fee among the proposers who first include her transaction in their blocks, then she may choose to broadcast it to all proposers.
Meanwhile, they compete to include her transaction in the earliest block.
Our model captures both settings, although our language favors the latter. 

The discrete nature of block time is important to a retail model: a transaction sent as soon as it is generated is included by any proposer who can receive and propagate it before the block deadline, not just those who minimize total latency. 
Let $\Delta > 0$ denote the block time.
Users send transactions at a constant rate for $\Delta$ time.\footnote{The game is periodic: a transaction sent in the next time slot will be first included by the same proposers that first included a transaction sent $\Delta$ time earlier.}
A transaction sent from location $u$ to proposer $i$ at time $\tau \in [0, \Delta)$ can be included in block
\[
    \lceil (L_i(u,p) + \tau) / \Delta \rceil,
\]
that is, the first block whose deadline falls after the earliest time that proposer $i$ can receive and propagate the transaction.
Whether the user chooses a proposer who can include her transaction in the earliest block or she broadcasts it to all proposers, the ones who can capture the transaction are those in
\[
    \textstyle S_\tau (u,p) = \arg\min_i \lceil (L_i(u,p) + \tau) / \Delta \rceil.
\]
If multiple proposers tie, then each gets an equal share of the transaction.

Proposer $i$'s share of the users at $u$ and total market share are therefore
\[
    \textstyle M^\Delta_i(u, p) = \frac{1}{\Delta} \int_0^\Delta \frac{\1(i \in S_\tau(u,p))}{|S_\tau(u,p)|} \,\mathrm{d}\tau \quad \text{and} \quad M^\Delta_i(p) = \int_0^1 M^\Delta_i(u,p) \,\mathrm{d}\nu(u),
\]
respectively.
We drop the $\Delta$ superscript when the context is clear.

The retail model contains the sophisticated user model by taking $\Delta$ to 0: for all locations $u$ and configurations $p$, $S_\tau(u,p) = S(u,p)$ whenever $\Delta$ is sufficiently small.
Thus, for fixed $u$ and $p$, $M^\Delta_i(u,p) = M_i(u,p)$ eventually, and by bounded convergence, $M^\Delta_i(p)$ converges to $M_i(p)$.
Henceforth, when we write $\Delta = 0$, we mean the sophisticated user model.

\subsection{Preconfirmations}

A preconfirmation is a promise from a proposer to a user that her transaction will be included in a future block.
A proposer does not need to propagate the transaction to a quorum to give a preconfirmation and can provide one at any time, so the time to obtain one is determined by the latency between her and the user.
A preconfirmation user at location $u$ sending at time $\tau$ then chooses a proposer in
\[
    P(u, p) = \arg\min_i d(u, p_i).
\]
If multiple proposers tie, then the user sends to a random one.
We assume that only an $\alpha$ fraction of the users at each location care about the time to inclusion.
Preconfirmations suffice for the remaining $1 - \alpha$ fraction.
When $\alpha = 0$, we recover the classic Hotelling model.

With an $\alpha$ fraction of users choosing proposers based on the time to inclusion, and the remaining choosing based on the time to preconfirmation, proposer $i$'s share of users at $u$ and total market share are given by
\begin{align*}
    M^\alpha_i(u, p) 
        &= \textstyle \frac{\alpha}{\Delta} \int_0^\Delta \frac{\1(i \in S_\tau(u,p))}{|S_\tau(u,p)|} \,\mathrm{d}\tau + (1 - \alpha) \cdot \frac{\1(i \in P(u,p))}{|P(u,p)|} \quad \text{and} \\
    M^\alpha_i(p) 
        &= \textstyle \int_0^1 M^\alpha_i(u,p) \,\mathrm{d}\nu(u),
\end{align*}
respectively.
Based on context, we drop the $\alpha$ superscript.

\subsection{Faster Intervalidator Connections}

Dedicated network infrastructure such as DoubleZero offers latency improvements to proposers by providing routes that are more direct and less congested compared to those of the public internet.
We model the distance-dependent improvement using a parameter $\mu \in [0,1]$, which represents the ratio of latency per unit distance on the dedicated network to that on the public internet.
In particular, if $D_i(p)$ represents proposer $i$'s latency of propagating to her nearest quorum via public internet, then $\mu \cdot D_i(p)$ represents her latency of doing so via dedicated rails.
We assume all proposers are on the dedicated network.
This changes the total latency, or time to inclusion, experienced by a user at $u$ when she sends her transaction to proposer $i$ to
\[
    L^\mu_i(u, p) = d(u, p_i) +  \mu \cdot D_i(p).
\]
The rest of the model is unchanged.
We drop the $\mu$ superscript when the context is clear.
When it is not, we attach the superscript to the set of optimal proposers $S$ and the market shares $M$ as well.

\subsection{Equilibrium}

Our results concern the equilibria of our models.
A configuration of proposers $p \in [0,1]^n$ is a pure (strategy) equilibrium if, by locating at $p_i$, proposer $i$ maximizes her market share when the other proposers locate at $p_{-i}$.
That is, for all proposers $i$ and alternative locations $p'_i \in [0,1]$,
\[
    M_i(p) \geq M_i(p'_i, p_{-i}).
\]

A mixed strategy allows a proposer to randomize her location independently of her peers.
A random configuration of proposers $p$ is a mixed (strategy) equilibrium if, by randomizing her location according to $p_i$, proposer $i$ maximizes her market share in expectation when the other proposers randomize their locations according to $p_{-i}$.
That is, for all proposers $i$ and alternative locations $p'_i \in [0,1]$,
\[
    \EE_p [M_i(p)] \geq \EE_{p_{-i}} [M_i(p'_i, p_{-i})].
\]

\section{Results}

In this section, we overview and provide intuition for our main results.
In Section~\ref{section:main-body-minimizing-quorum-radius}, we explain how the desire to reduce one's quorum radius dictates the location decisions of proposers.
In Section~\ref{section:main-body-weakening-quorum-radius}, we discuss how weakening the influence of the quorum radius destabilizes co-location and promotes geographic diversity, even when the identities of proposers and attesters coincide.
For simplicity, $\Delta = 0$ throughout, that is, users are sophisticated and send their transactions to proposers who minimize total latency.

\begin{figure}
\centering
\begin{subfigure}[T]{.48\textwidth}
    \centering
    \resizebox{\linewidth}{!}{%
        \begin{tikzpicture}[
          x=.625cm,
          y=.65cm,
          line cap=round,
          line join=round,
          every node/.style={font=\large},
          >={Stealth[length=5pt,width=5pt]}
        ]
          \path[use as bounding box] (-1.50,-4.90) rectangle (11.50,1.80);
        
          \coordinate (qL) at (0,0);
          \coordinate (u)  at (1.4,0);
          \coordinate (p)  at (3.2,0);
          \coordinate (pp) at (5,0);
          \coordinate (qR) at (10,0);
        
          \draw[line width=.85pt] (-.35,0) -- (10.35,0);
          \draw[line width=.85pt] (qL) ++(0,-.16) -- ++(0,.32);
          \draw[line width=.85pt] (qR) ++(0,-.16) -- ++(0,.32);
        
          \draw[->,line width=.85pt] (p) .. controls (2.55,1.13) and (.65,1.13) .. (qL);
          \draw[->,line width=.85pt] (p) .. controls (4.25,1.55) and (8.55,1.55) .. (qR);
        
          \fill (1.4,.31) circle (.075);
          \fill[rounded corners=.8pt] (1.30,.05) rectangle (1.50,.23);
          \node[above=1pt] at (1.4,.40) {$u$};
        
          \fill (p) circle (.085);
          \draw[line width=.85pt,fill=white] (pp) circle (.085);
          \node[anchor=east] at (3.45,-.45) {$p_i$};
          \node[anchor=west] at (4.75,-.45) {$p'_i$};
        
          \draw[->,dashed,line width=.7pt] (3.65,-.45) -- (4.55,-.45);
        
          \draw[<->,line width=.65pt] (1.4,-1.65) -- (3.2,-1.65)
            node[midway,above=2pt] {$d(u,p_i)$};
          \draw[<->,line width=.65pt] (3.2,-1.65) -- (10,-1.65)
            node[midway,above=2pt] {$D_i(p)$};
          \node[anchor=east] at (1.20,-2.20) {$L_i(u,p)$};
        
          \draw[<->,line width=.65pt] (1.4,-2.95) -- (5,-2.95)
            node[midway,above=2pt] {$d(u,p'_i)$};
          \draw[<->,line width=.65pt] (5,-2.95) -- (10,-2.95)
            node[midway,above=2pt] {$D_i(p'_i,p_{-i})$};
          \node[anchor=east] at (2.80,-3.65) {$L_i(u,p'_i,p_{-i})$};
        \end{tikzpicture}
    }
    \caption{Time to inclusion for a user at $ u \leq p_i$ before and after proposer $i$ deviates to $p'_i$.}
    \label{fig:time-to-inclusion-left-same} 
\end{subfigure}%
\hfill
\begin{subfigure}[T]{.48\textwidth}
    \centering
    \resizebox{\linewidth}{!}{%
        \begin{tikzpicture}[
          x=.625cm,
          y=.65cm,
          line cap=round,
          line join=round,
          every node/.style={font=\large},
          >={Stealth[length=5pt,width=5pt]}
        ]
          \path[use as bounding box] (-1.50,-4.90) rectangle (11.50,1.80);
        
          \coordinate (qL) at (0,0);
          \coordinate (p)  at (3.2,0);
          \coordinate (pp) at (5,0);
          \coordinate (u)  at (7.4,0);
          \coordinate (qR) at (10,0);
        
          \draw[line width=.85pt] (-.35,0) -- (10.35,0);
          \draw[line width=.85pt] (qL) ++(0,-.16) -- ++(0,.32);
          \draw[line width=.85pt] (qR) ++(0,-.16) -- ++(0,.32);
        
          \draw[->,line width=.85pt] (p) .. controls (2.55,1.13) and (.65,1.13) .. (qL);
          \draw[->,line width=.85pt] (p) .. controls (4.25,1.55) and (8.55,1.55) .. (qR);
        
          \fill (7.4,.31) circle (.075);
          \fill[rounded corners=.8pt] (7.30,.05) rectangle (7.50,.23);
          \node[above=1pt] at (7.4,.40) {$u$};
        
          \fill (p) circle (.085);
          \draw[line width=.85pt,fill=white] (pp) circle (.085);
          \node[anchor=east] at (3.45,-.45) {$p_i$};
          \node[anchor=west] at (4.75,-.45) {$p'_i$};
        
          \draw[->,dashed,line width=.7pt] (3.65,-.45) -- (4.55,-.45);
        
          \node[anchor=east] at (2.80,-1.69) {$L_i(u,p)$};
          \draw[->,line width=.65pt] (7.4,-1.55) -- (3.2,-1.55)
            node[pos=.15,above=2pt] {$d(u,p_i)$};
          \draw[->,line width=.65pt] (3.2,-1.82) -- (10,-1.82)
            node[pos=.85,above=2pt] {$D_i(p)$};
        
          \node[anchor=east] at (2.80,-3.19) {$L_i(u,p'_i,p_{-i})$};
          \draw[->,line width=.65pt] (7.4,-3.05) -- (5,-3.05)
            node[pos=.15,above=2pt] {$d(u,p'_i)$};
          \draw[->,line width=.65pt] (5,-3.32) -- (10,-3.32)
            node[pos=.68,below=2pt] {$D_i(p'_i,p_{-i})$};
        \end{tikzpicture}
    }
    \caption{Time to inclusion for a user at $ u > p_i$ before and after proposer $i$ deviates to $p'_i$.}
    \label{fig:time-to-inclusion-right-less}
\end{subfigure}
\caption{Time to inclusion does not change for users to the left of $p_i$ but decreases for users to the right when proposer $i$ decreases her quorum radius. Arrows out of $p_i$ indicate the endpoints of $p_i$'s nearest quorum.}
\label{fig:minimizing-quorum-radius}
\end{figure}

\subsection{Minimizing Quorum Radius Dictates Proposer Locations}\label{section:main-body-minimizing-quorum-radius}

Throughout Section~\ref{section:main-body-minimizing-quorum-radius}, $\alpha = \mu = 1$, i.e., there are no preconfirmation users and intervalidator connections are no faster than public internet.
Figure~\ref{fig:minimizing-quorum-radius} contains the intuition for why a proposer $i$ benefits from reducing her quorum radius: she weakly reduces her total latency for all users, while strictly reducing it for those she moves closer to.
If this relocation does not reduce the quorum radii of the other proposers, then it increases proposer $i$'s market share: for all users, only proposer $i$'s total latency decreases.
The relocation may therefore make her a latency-minimizing proposer for a user for whom she was not one before, but it cannot cause her to lose this position for any user.
Section~\ref{section:appendix-quorum-radius} in the appendix formalizes this argument.

\subsection{Co-location}

In this section, proposers also serve as attesters, so a proposer's quorum radius is determined by her distance to her $m$-th nearest proposer (including herself) where $m = \lfloor 2n/3 \rfloor + 1$.
That is,
\[
    \textstyle D_i(p) = \min_{S \subseteq [n] : |S| = m} \max_{j \in S} d(p_i, p_j)
\]
In this case, co-location is the unique pure equilibrium, up to location.

\begin{theorem}
Let $\alpha = \mu = 1$, i.e., there are no preconfirmation users and intervalidator connections are no faster than public internet.
Suppose $n \geq 4$ and that the user distribution $\nu$ has full support.
When proposers also serve as attesters, a configuration $p$ of proposers is a pure equilibrium if and only if $p_i = p_j$ for all proposers $i$ and $j$, that is, all proposers are co-located.
\end{theorem}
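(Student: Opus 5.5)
The proof splits into the two directions of the equivalence. In both, the key fact is that a proposer's latency to a user decomposes into her distance to the user plus her quorum radius, with the quorum radius defined by $m=\lfloor 2n/3\rfloor+1\ge 3$ when $n\ge 4$.

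\emph{Co-location is an equilibrium.} Suppose all proposers sit at a common point $c$. Then every $D_j(p)=0$ and every proposer gets share $1/n$ of every user, so $M_i(p)=1/n$. Let proposer $i$ deviate to $x\neq c$. Since $m\ge 2$, her nearest quorum must contain at least one other proposer, all of whom are at $c$, so her new quorum radius is $|x-c|$. The other proposers keep radius $0$, because $m-1\le n-1$ of them remain at $c$.

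For a user at $u$, the triangle inequality gives $|u-x|+|x-c|\ge |u-c|$, with equality iff $x$ lies between $u$ and $c$. So after deviating, proposer $i$ is never strictly best. She ties with all $n$ proposers exactly on the set of users $u$ lying on the far side of $x$ from $c$, and loses everywhere else. Her share is therefore at most $\nu(\cdot)/n\le 1/n$, so the deviation is not profitable.

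\emph{Any equilibrium is co-located.} Fix a configuration $p$ that is not co-located. First, some proposer has $D_i(p)>0$. Indeed, if every $D_j(p)=0$, then each proposer shares her location with at least $m-1$ others, so each occupied point carries at least $m>n/2$ proposers. Hence there is a single occupied point, a contradiction.

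Next I would exhibit a strictly profitable deviation, built on the argument of Section~\ref{section:main-body-minimizing-quorum-radius} (formalized in Section~\ref{section:appendix-quorum-radius}). That argument says that if proposer $i$ relocates so that her own quorum radius strictly drops, her new location lies between her old location and every user on one side, and no other proposer's quorum radius decreases, then $L_i(u,\cdot)$ weakly decreases for every $u$ while every $L_k$, $k\ne i$, is unchanged. Her share then weakly increases, and it increases strictly on the set of users where she moves from losing or tying to winning outright. Full support of $\nu$ will make that set have positive mass as soon as it contains an open interval.

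Concretely, I would take $i$ to be a proposer at an extreme occupied point, say the leftmost point $a=\min_j p_j$, with $D_i(p)>0$. If no proposer at the leftmost point has positive radius, I would use the rightmost point, or pass to the outermost occupied point whose proposers have positive radius. I would then move $i$ a small distance $\varepsilon$ to the right, toward her nearest quorum. Because her quorum's far endpoint lies to the right of $a$, her radius drops by exactly $\varepsilon$ for small $\varepsilon$. Meanwhile her distance to every user $u\ge a+\varepsilon$ drops by $\varepsilon$, and her distance to users left of $a$ rises by $\varepsilon$. Thus $L_i$ is unchanged on the left and strictly smaller by $2\varepsilon$ on an interval to the right.

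On users to the right, $i$ was previously in $S(u,p)$ only on some set. The goal is to show that after the move she captures a positive-measure interval she did not fully own before. One route is a boundary of her winning region where $L_i$ crosses another proposer's latency. If instead $i$ currently wins nowhere to the right, the $2\varepsilon$ improvement near the point where her latency is closest to the lower envelope should still give a gain.

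\emph{Main obstacle.} The hard step is ensuring the other proposers' quorum radii do not also fall when $i$ moves, since $i$ moving right brings her closer to everyone. For a proposer $k$ to the right whose nearest quorum includes $i$, $D_k$ could shrink by up to $\varepsilon$, which would erode $i$'s gain.

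I would first try to show that the choice of extreme proposer forces $D_k\le d(p_k,a)$ to be attained by a quorum not using $i$ as its farthest member, or to handle the case where $i$ is the farthest member of $k$'s quorum separately. In that case, $k$'s radius falls by at most $\varepsilon$ while $i$'s latency falls by $2\varepsilon$ on the relevant users, so $i$ still gains strictly relative to $k$.

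If this local perturbation becomes too messy, my fallback deviation is to have $i$ jump onto the location of a proposer $j$ of minimal total latency for some user region. Her radius then becomes at most $D_j(p)$ and she ties $j$ everywhere, and I would compare the resulting share against $M_i(p)$ using the equilibrium inequalities for both $i$ and $j$.
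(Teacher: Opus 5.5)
\textbf{Gap.} Your sufficiency direction is the paper's argument. One slip: the proposers left at $c$ keep radius $0$ because $m\le n-1$ of them remain there, and this is exactly where $n\ge 4$ enters; $m-1\le n-1$ is not the right condition.

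The necessity direction stops at the step that carries the proof.
\begin{itemize}
\item Arranging that no other radius drops cannot work in general. Take $n=5$, $m=4$, two proposers at $x$, one at $(x+y)/2$ and two at $y$. When either $x$-proposer moves right by $\varepsilon$, the radius of the proposers at $y$ falls by exactly $\varepsilon$. This is the configuration for which the paper needs Lemma~\ref{lemma:block-time-specific-configuration}.
\item Your alternative, that $L_i$ falls by $2\varepsilon$ while $L_k$ falls by at most $\varepsilon$, only covers users $u\ge a+\varepsilon$. For $u\le a$, where $i$ is always a minimizer, $L_i$ does not move. On $(a,a+\varepsilon)$ it falls by only $2(u-a)$. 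Meanwhile $L_k$ may fall by up to $\varepsilon$, so as written nothing prevents $i$ from losing those users or newly having to share them.
\item The strictness step (``should still give a gain'') and the fallback jump are not arguments.
\end{itemize}

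\textbf{How to close it for $\Delta=0$.} Since $m>n/2$, some extreme location, say the leftmost $a$, holds fewer than $m$ proposers. Take $i$ there, so $D_i(p)>0$, and let $p'$ move her to $a+\varepsilon$. For $p_k>a$ put $g_k=d(a,p_k)+D_k(p)-D_i(p)\ge 0$ and $\delta_k=D_k(p)-D_k(p')\in[0,\varepsilon]$; peers at $a$ keep their radius.
\begin{itemize}
\item For every $u\le a+\varepsilon$, one has $L_k(u,p')-L_i(u,p')=g_k-\delta_k$.
\item The inequality behind Lemma~\ref{lemma:market-share-intervals}, applied to the \emph{deviated} profile, reads $D_i(p')\le d(a+\varepsilon,p_k)+D_k(p')$. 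It gives $g_k-\delta_k\ge0$, so $g_k=0$ forces $\delta_k=0$.
\end{itemize}
Take $\varepsilon<\min\{g_k:g_k>0\}$. Then:
\begin{itemize}
\item on $[0,a]$, $S(u,p')=S(u,p)$;
\item on $(a,a+\varepsilon)$, $S(u,p')=\{i\}\cup\{k:g_k=0\}$, and any such $k$ strictly beat $i$ there under $p$;
\item for $u\ge a+\varepsilon$ with $i\in S(u,p)$, $S(u,p')=\{i\}$.
\end{itemize}
So her share weakly rises at every $u$.

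For strictness let $b=\max U_i(p)$.
\begin{itemize}
\item If $b=a$, she gains on $(a,a+\varepsilon)$.
\item If $b>a$, then $b<1$: compare $L_i(1,p)$ with the rightmost proposer's latency at $1$, using that the $m$-th smallest location exceeds $a$. Since $\min_{k\ne i}L_k(\cdot,p)$ is $1$-Lipschitz and equals $L_i$ at $b$, taking also $\varepsilon<b-a$, she newly enters $S(u,p')$ on $(b,b+\varepsilon/2)$.
\end{itemize}
Full support finishes.

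\textbf{Comparison with the paper.} Completed this way, your proof is genuinely different from and much shorter than the paper's. The paper uses only deviations that lower no other radius (Lemma~\ref{lemma:profitable-deviation}), classifies the profiles immune to them, and eliminates the survivors by a brute-force midpoint jump. What its route buys is validity for every block time $\Delta>0$. There your pointwise comparison fails, because a drop in the other radii lowers $i$'s share even at users she still wins.
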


Section~\ref{section:appendix-colocation} in the appendix contains the full proof.
We give an outline of the main ideas here.
That co-location at any location $x$ is an equilibrium when there are at least four proposers essentially follows from the triangle inequality: if a proposer locates elsewhere while the other proposers locate at $x$, then she must forward any transaction sent to her to the cluster in order to collect a quorum of votes.
A user could instead send her transaction directly to the cluster, which consists of at least a quorum of proposers and thus, has a quorum radius of 0.
In other words, the cluster is weakly preferred over the lone proposer by all users (and strictly by some).
It follows that the lone proposer's market share cannot exceed that of co-locating with the cluster.\footnote{The reader may notice that this argument does not use any properties of the line. Indeed, co-location is an equilibrium for any metric space.}

\begin{figure}
\centering
\begin{subfigure}{.5\textwidth}
    \centering
    \resizebox{0.9\linewidth}{!}{%
        \begin{tikzpicture}[
          x=.575cm,
          y=.85cm,
          line cap=round,
          line join=round,
          every node/.style={font=\large},
          >={Stealth[length=5pt,width=5pt]}
        ]
          \path[use as bounding box] (-.9,-2.70) rectangle (10.9,1.90);
        
          \coordinate (p1) at (1,0);
          \coordinate (p2) at (3,0);
          \coordinate (p3) at (5,0);
          \coordinate (p4) at (7,0);
          \coordinate (p5) at (9,0);
        
          \draw[line width=.85pt] (-.9,0) -- (10.9,0);
          \foreach \x/\i in {1/1,3/2,5/3,7/4,9/5}{
            \fill (\x,0) circle (.085);
            \node[anchor=base] at (\x,-.42) {$p_{\i}$};
          }
        
          \draw[->,line width=.75pt] (p1) .. controls (2.55,1.62) and (5.55,1.62) .. (p4);
          \draw[->,line width=.75pt] (p2) .. controls (4.00,1.16) and (6.00,1.16) .. (p4);
          \draw[->,line width=.75pt] (p3) .. controls (6.00,1.25) and (8.05,1.25) .. (p5)
            node[pos=.55,above=2pt] {$D_3(p)$};
          \draw[->,line width=.75pt] (p4) .. controls (6.00,-1.02) and (4.00,-1.02) .. (p2);
          \draw[->,line width=.75pt] (p5) .. controls (7.55,-1.55) and (4.45,-1.55) .. (p2);
        
          \draw[line width=.65pt,fill=white] (4,0) circle (.055);
          \draw[line width=.65pt,fill=white] (6,0) circle (.055);
          \draw[densely dotted,line width=.45pt] (4,-.08) -- (4,-1.75);
          \draw[densely dotted,line width=.45pt] (6,-.08) -- (6,-1.75);
          \draw[decorate,decoration={brace,mirror,amplitude=5pt},line width=.65pt]
            (4,-1.76) -- (6,-1.76)
            node[midway,below=7pt] {market of $p_3$};
        \end{tikzpicture}
    }
\end{subfigure}%
\hfill
\begin{subfigure}{.5\textwidth}
    \centering
    \resizebox{0.9\linewidth}{!}{%
        \begin{tikzpicture}[
          x=.575cm,
          y=.85cm,
          line cap=round,
          line join=round,
          every node/.style={font=\large},
          >={Stealth[length=5pt,width=5pt]}
        ]
          \path[use as bounding box] (-.9,-2.70) rectangle (10.9,1.90);
        
          \coordinate (p1) at (1,0);
          \coordinate (p2) at (3,0);
          \coordinate (p3) at (5.5,0);
          \coordinate (p4) at (7,0);
          \coordinate (p5) at (9,0);
        
          \draw[line width=.85pt] (-.9,0) -- (10.9,0);
          \foreach \x/\i in {1/1,3/2,7/4,9/5}{
            \fill (\x,0) circle (.085);
            \node[anchor=base] at (\x,-.42) {$p_{\i}$};
          }
          \fill (p3) circle (.085);
          \node[anchor=base] at (5.5,-.42) {$p'_3$};
        
          \draw[->,line width=.75pt] (p1) .. controls (2.55,1.62) and (5.55,1.62) .. (p4);
          \draw[->,line width=.75pt] (p2) .. controls (4.00,1.16) and (6.00,1.16) .. (p4);
          \draw[->,line width=.75pt] (p3) .. controls (6.35,1.14) and (8.10,1.14) .. (p5)
            node[pos=.55,above=2pt] {$D_3(p'_3,p_{-3})$};
          \draw[->,line width=.75pt] (p4) .. controls (6.15,-1.02) and (4.15,-1.02) .. (p2);
          \draw[->,line width=.75pt] (p5) .. controls (7.55,-1.55) and (4.45,-1.55) .. (p2);
        
          \draw[line width=.65pt,fill=white] (4,0) circle (.055);
          \draw[line width=.65pt,fill=white] (6.5,0) circle (.055);
          \draw[densely dotted,line width=.45pt] (4,-.08) -- (4,-1.75);
          \draw[densely dotted,line width=.45pt] (6.5,-.08) -- (6.5,-1.75);
          \draw[decorate,decoration={brace,mirror,amplitude=5pt},line width=.65pt]
            (4,-1.76) -- (6.5,-1.76)
            node[midway,below=7pt] {market of $p'_3$};
        \end{tikzpicture}
    }
\end{subfigure}
\caption{In the sophisticated user model ($\Delta = 0$) without attester-proposer separation, proposer 3 increases her market share by decreasing her quorum radius. Arrows indicate proposers who determine another proposer's quorum radius. Note that proposer 3 does not determine anyone else's quorum radius.}
\label{fig:colocation}
\end{figure}

That proposers co-locate at any pure equilibrium requires significantly more effort to show.
We use a proposer's desire to reduce her quorum radius without decreasing those of the other proposers to show that only one class of non-co-located configurations can contain pure equilibria.
Figure~\ref{fig:colocation} illustrates the argument in the sophisticated user model: if proposer $i$ is not at the midpoint of her nearest quorum, then there are users in the direction of the proposer that determines proposer $i$'s quorum radius for which proposer $i$ does not minimize total latency.
By reducing her quorum radius without decreasing those of her peers, proposer $i$ captures the users who only slightly favored another proposer.

The class of non-co-located configurations that survive this argument place a third of the proposers at one location, a third at another location, and the remaining third at the midpoint of the two locations.
We show that while a local deviation may not increase any proposer's market share, there is still a profitable deviation for some proposer.
With no non-co-located configurations left, it follows that the only pure equilibrium is co-location.

\subsection{Attester-Proposer Separation}

In this section, attesters and proposers are separate entities.
For simplicity of presentation, we center our exposition on the case in which $k$ attesters are uniformly distributed throughout the unit interval for now.
That is, there is an attester at each of $a_\ell = \ell/(k-1)$ for all $\ell \in \{0, \dots, k-1\}$.
A proposer's quorum radius is determined by her distance to her nearest quorum of attesters.
That is,
\[
    D_i(p) 
        = \min_{S \subseteq [k] : |S| = \lfloor 2k/3 \rfloor + 1} \max_{\ell \in S} d(p_i, a_\ell).
\]
In this case, we show that proposers only locate at the midpoints of attester quorums at equilibrium.
Since each midpoint has the same quorum radius when the diameter of each attester quorum is the same, a user then determines which proposer to send her transaction to based only on her latency to each proposer.
In particular, the quorum radius of a proposer does not influence this decision.

\begin{theorem}[Informal; see Theorems~\ref{theorem:attesters-Hotelling-reduction} and~\ref{theorem:reduction-to-hotelling-formal}]\label{theorem:APS-reduction-informal}
Let $\alpha = \mu = 1$.
Suppose $k \geq 4$ attesters are uniformly distributed along the unit interval, i.e., there exists an attester at each of $\ell / (k-1)$ for $\ell \in \{0, \dots, k-1\}$.
If the user distribution $\nu$ has full support and assigns sufficient mass to the left and right tails beyond the outermost attester-quorum midpoints, then proposers only locate at the midpoints of attester quorums at any pure equilibrium, and each user determines which proposer to send to based only on her latency to each proposer.
When $\Delta = 0$, the set of pure equilibria in this game coincides with the set of pure equilibria in a Hotelling game where the strategy space of firms is restricted to these midpoints.
\end{theorem}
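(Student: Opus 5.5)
The plan is to first compute the quorum radius $D(x)$ explicitly for uniformly spaced attesters, then show that any proposer off a quorum midpoint can profitably move onto one, and finally read off the reduction to Hotelling.

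\emph{Computing $D$.} With attesters at $a_\ell=\ell/(k-1)$ and $m=\lfloor 2k/3\rfloor+1$, a nearest quorum of any point can be taken to be a contiguous window $\{a_j,\dots,a_{j+m-1}\}$. The window starting at $a_j$ gives radius
\[
\max(x-a_j,\ a_{j+m-1}-x)=r+|x-c_j|,
\]
where
\[
c_j=\tfrac{a_j+a_{j+m-1}}{2}, \qquad r=\tfrac{m-1}{2(k-1)}.
\]
Hence
\[
D(x)=r+d(x,C), \qquad C=\{c_0,\dots,c_{k-m}\}.
\]
So $D$ is $1$-Lipschitz, attains its minimum $r$ exactly on $C$, and rises with slope $1$ outside $[c_0,c_{k-m}]$. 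Since $D$ does not depend on $p_{-i}$, a proposer's relocation never changes anyone else's total latency. This is what makes the quorum-radius monotonicity argument of Section~\ref{section:appendix-quorum-radius} apply cleanly.

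\emph{Dominance of midpoints.} Fix $p_i\notin C$ and let $c\in C$ be the nearest midpoint, at distance $\delta=d(p_i,c)>0$. For every user $u$,
\[
d(u,c)+r\le d(u,p_i)+\delta+r=L_i(u,p).
\]
This holds with equality when $p_i$ lies between $u$ and $c$, and strictly otherwise, in particular for all $u$ on the $c$-side of $p_i$. Because only $L_i$ changes, $i$ never leaves $S(u,p)$ for any $u$ where she was already present, and $M_i$ weakly increases (for $\Delta>0$ the same holds pointwise in $\tau$, since $\lceil\cdot\rceil$ is monotone).

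The main work, and the step I expect to be the obstacle, is strictness. I need a set of positive $\nu$-mass on the $c$-side where $i$ was not the unique minimizer before the move but is afterwards. I would split into two cases. If $p_i$ lies between two midpoints, or some other proposer's latency ties or beats $L_i$ somewhere on the $c$-side, full support of $\nu$ together with continuity of the $L_j$ gives an interval of users that $i$ gains or stops sharing.

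Otherwise $i$ is the unique minimizer on the whole $c$-side, which I can only see happening when $p_i$ sits in a tail beyond $c_0$ or $c_{k-m}$. There I would switch to another proposer $j$ and use the tail-mass assumption. Moving $j$ to the midpoint between $p_i$ and the tail, or co-locating $j$ at $c$, should let $j$ capture the tail users. The assumption that $\nu$ puts enough mass beyond the outermost midpoints is exactly what should make this steal strictly exceed $j$'s current share, contradicting equilibrium. I would first try to make this quantitative by comparing $j$'s share against the tail mass $\nu([0,c_0))$ (respectively $\nu((c_{k-m},1])$). This gives that every pure equilibrium has $p\in C^n$.

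\emph{Reduction to Hotelling.} On $C^n$ all proposers have $D_i=r$, so
\[
S(u,p)=\arg\min_i d(u,p_i),
\]
which is the second claim of the theorem. For $\Delta=0$ I then show the two equilibrium sets coincide. If $p$ is a full-game equilibrium, it lies in $C^n$ and is a fortiori an equilibrium of the Hotelling game restricted to $C$. Conversely, if $p\in C^n$ is a restricted equilibrium, any deviation of $i$ to $x\notin C$ is weakly dominated by moving to the nearest midpoint of $x$, by the weak part of the dominance step, which needs no strictness. So no unrestricted deviation beats the best restricted one, and $p$ is a full-game equilibrium.
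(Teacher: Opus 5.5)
Your overall structure matches the paper's: the identity $D(x)=r+d(x,C)$, weak dominance of the nearest midpoint, a strict-gain step, a tail-stealing step, and the reduction. The case split in your strictness step, however, is wrong. You claim two things:
\begin{itemize}
\item if $p_i$ lies strictly between two midpoints, moving to the nearest one $c$ gains a positive-measure set of users;
\item $i$ can be the unique minimizer on the whole $c$-side only when $p_i$ lies beyond $c_0$ or $c_{k-m}$.
\end{itemize}
Both fail when $i$ is the extreme proposer on the $c$-side.

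Let $i$ be the rightmost proposer, with $p_i\in(c_j,c_{j+1})$ strictly nearer to $c_{j+1}$ and all other proposers strictly to her left. $D$ has slope $-1$ on a left neighbourhood of $p_i$. Hence for every $\ell\neq i$ and every $u\ge p_i$ we get $L_\ell(u)-L_i(u)=d(p_i,p_\ell)+D(p_\ell)-D(p_i)>0$. So $i$ already serves all of $[p_i,1]$ alone. Moving to $c_{j+1}$ leaves her latency unchanged on the left and keeps her the unique minimizer on the right, so for $\Delta=0$ her share does not change at all.

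Concretely, take $k=7$, so $C=\{1/3,1/2,2/3\}$ and $r=1/3$. Put the other proposers at $1/3$ and set $p_i=9/20$. Her market is $(5/12,1]$ both before and after moving to $1/2$. As written, your argument never excludes such configurations.

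The correct dichotomy is the paper's, in Theorem~\ref{theorem:attesters-Hotelling-reduction}: does some other proposer lie weakly on the $c$-side of $p_i$?

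If one does, the move is strictly profitable. For $\Delta=0$ your continuity sketch can be made rigorous by noting that $u\mapsto\min_{\ell\neq i}L_\ell(u)-L_i(u)$ is nonincreasing on the $c$-side. This holds because $L_i$ has slope $1$ there while the minimum is $1$-Lipschitz.

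If none does, then every proposer lies strictly inside the outermost midpoint on that side, whether $p_i$ is in a tail or between interior midpoints. The outer tail, $[c_{k-m},1]$ or $[0,c_0]$, is then uncontested, and this is where your Case~B argument belongs. You must take the deviator to be the proposer with the smallest share, which is at most $1/n$. At the outermost midpoint she uniquely captures the tail, of mass at least $2/n$ under the formal hypothesis, plus a neighbourhood of positive mass. This is exactly the argument in the proof of Theorem~\ref{theorem:reduction-to-hotelling-formal}.

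With this re-split your forward direction goes through. Your converse, using only weak dominance, is correct and slightly cleaner than the paper's, which first re-derives $p_1=p_2=\min X$ and then appeals to the strict theorem.
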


\begin{figure}
\centering
\begin{subfigure}{.5\textwidth}
    \centering
    \resizebox{0.9\linewidth}{!}{%
        \begin{tikzpicture}[
          x=.775cm,
          y=.75cm,
          line cap=round,
          line join=round,
          every node/.style={font=\large},
          attester/.style={regular polygon,regular polygon sides=3,fill=black,inner sep=0pt,minimum size=6.5pt},
          proposer/.style={circle,fill=black,inner sep=0pt,minimum size=5.5pt},
          boundary/.style={circle,draw=black,fill=white,inner sep=0pt,minimum size=5pt,line width=.65pt},
          >={Stealth[length=5pt,width=5pt]}
        ]
          \path[use as bounding box] (-.5,-2.45) rectangle (7.5,1.85);
        
          \draw[line width=.85pt] (0,0) -- (7,0);
          \foreach \x/\i in {0/1,1/2,2/3,3/4,4/5,5/6,6/7,7/8}{
            \node[attester] at (\x,0) {};
          }
        
          \foreach \x/\i in {2.5/1,3.5/2,4.5/3}{
            \draw[densely dashed,line width=.55pt] (\x,0) -- (\x,.53);
            \node at (\x,.68) {$x_{\i}$};
          }
        
          \node[proposer] at (2.5,0) {};
          \node[proposer] at (4.5,0) {};
          \node[anchor=base east] at (2.85,-.62) {$p_1,p_2,p_3$};
          \node[anchor=base west] at (4.15,-.62) {$p_5,p_6,p_7$};
        
          \node[proposer] at (3.25,0) {};
          \node[anchor=base] at (3.25,-.62) {$p_4$};
        
          \draw[->,line width=.75pt] (3.25,.06)
            .. controls (3.95,1.30) and (5.35,1.30) .. (6,.08)
            node[pos=.55,above=2pt] {$D_4(p)$};
        
          \node[boundary] at (3.75,0) {};
          \draw[densely dotted,line width=.45pt] (3,-.42) -- (3,-1.44);
          \draw[densely dotted,line width=.45pt] (3.75,-.08) -- (3.75,-1.44);
          \draw[decorate,decoration={brace,mirror,amplitude=5pt},line width=.65pt]
            (3,-1.45) -- (3.75,-1.45)
            node[midway,below=5pt] {market of $p_4$};
        \end{tikzpicture}
    }
\end{subfigure}%
\hfill
\begin{subfigure}{.5\textwidth}
    \centering
    \resizebox{0.9\linewidth}{!}{%
        \begin{tikzpicture}[
          x=.775cm,
          y=.75cm,
          line cap=round,
          line join=round,
          every node/.style={font=\large},
          attester/.style={regular polygon,regular polygon sides=3,fill=black,inner sep=0pt,minimum size=6.5pt},
          proposer/.style={circle,fill=black,inner sep=0pt,minimum size=5.5pt},
          >={Stealth[length=5pt,width=5pt]}
        ]
          \path[use as bounding box] (-.5,-2.45) rectangle (7.5,1.85);
        
          \draw[line width=.85pt] (0,0) -- (7,0);
          \foreach \x in {0,1,2,3,4,5,6,7}{
            \node[attester] at (\x,0) {};
          }
        
          \foreach \x/\i in {2.5/1,3.5/2,4.5/3}{
            \draw[densely dashed,line width=.55pt] (\x,0) -- (\x,.53);
            \node at (\x,.68) {$x_{\i}$};
          }
        
          \node[proposer] at (2.5,0) {};
          \node[proposer] at (3.5,0) {};
          \node[proposer] at (4.5,0) {};
          \node[anchor=base east] at (2.85,-.62) {$p_1,p_2,p_3$};
          \node[anchor=base] at (3.5,-.62) {$p'_4$};
          \node[anchor=base west] at (4.15,-.62) {$p_5,p_6,p_7$};
        
          \draw[->,line width=.75pt] (3.5,.06)
            .. controls (4.15,1.24) and (5.35,1.24) .. (6,.08)
            node[pos=.55,above=2pt] {$D_4(p'_4,p_{-4})$};
        
          \draw[densely dotted,line width=.45pt] (3,-.42) -- (3,-1.44);
          \draw[densely dotted,line width=.45pt] (4,-.42) -- (4,-1.44);
          \draw[decorate,decoration={brace,mirror,amplitude=5pt},line width=.65pt]
            (3,-1.45) -- (4,-1.45)
            node[midway,below=5pt] {market of $p'_4$};
        \end{tikzpicture}
    }
\end{subfigure}
\caption{In the sophisticated user model ($\Delta = 0$) with attester-proposer separation, proposer 4 increases her market share by locating at the nearest midpoint of an attester quorum, which decreases her quorum radius. Triangles indicate attesters. The arrow indicates the attester that determines proposer 4's radius. $x$ indicates the midpoints of the attester quorums, which consist of 6 (out of 8) attesters.}
\label{fig:APS}
\end{figure}

Figure~\ref{fig:APS} captures the main intuition behind Theorem~\ref{theorem:APS-reduction-informal} in the sophisticated user model.
A proposer always strictly increases her market share by reducing her quorum radius if her market boundary in that direction is determined by another proposer (as opposed to the boundary of the entire market).
Unlike when proposers also serve as attesters, we need not require that the quorum radii of the other proposers does not decrease: these quantities depend only on attester locations and thus cannot be affected by another proposer's relocation.

Theorem~\ref{theorem:APS-reduction-informal} holds whenever each quorum of attesters has the same diameter, not just when a finite number of attesters are located uniformly along the line.
We use it to characterize the set of pure equilibria in this case and when the distribution of attesters is continuous and uniform (which approximates a large finite attester set).
These characterizations are for a continuous and uniform user distribution.
See Section~\ref{section:appendix-APS} for the proofs of these results and Theorem~\ref{theorem:APS-reduction-informal}.

\subsection{Weakening Influence of Quorum Radius Incentives Dispersion: Faster Intervalidator Connections and Preconfirmations}\label{section:main-body-weakening-quorum-radius}

\begin{figure}
    \centering
    \resizebox{0.75\linewidth}{!}{%
        \begin{tikzpicture}[
          x=1.20cm,
          y=1.05cm,
          line cap=round,
          line join=round,
          every node/.style={font=\large},
          proposer/.style={circle,fill=black,inner sep=0pt,minimum size=5.5pt},
          >={Stealth[length=5pt,width=5pt]}
        ]
          \coordinate (u) at (1.5,0);
          \coordinate (p1) at (4,0);
          \coordinate (cluster) at (5,0);
        
          \draw[line width=.85pt] (0,0) -- (10,0);
          \draw[line width=.85pt] (0,-.16) -- (0,.16);
          \draw[line width=.85pt] (10,-.16) -- (10,.16);
          \node[proposer] at (p1) {};
          \node[proposer] at (cluster) {};
          \node[anchor=base] at (4,-.42) {$p_1$};
          \node[anchor=base] at (5,-.42) {$p_2,p_3,p_4$};
        
          \fill (1.5,.31) circle (.075);
          \fill[rounded corners=.8pt] (1.40,.05) rectangle (1.60,.23);
          \node[above=1pt] at (1.5,.40) {$u$};
        
          \draw[->,line width=.75pt] (p1)
            .. controls (4.25,.75) and (4.75,.75) .. (cluster)
            node[midway,above=2pt] {$D_1(p)$};
        
          \node[anchor=east] at (1.20,-1.18) {$L_1(u,p)$};
          \draw[<->,line width=.65pt] (1.5,-1.18) -- (4,-1.18)
            node[midway,above=2pt] {$d(u,p_1)$};
          \draw[<->,line width=.65pt] (4,-1.18) -- (5,-1.18)
            node[midway,above=2pt] {$\mu\cdot D_1(p)$};
        
          \node[anchor=east] at (1.20,-1.95) {$L_{-1}(u,p)$};
          \draw[<->,line width=.65pt] (1.5,-1.95) -- (5,-1.95)
            node[midway,above=2pt] {$d(u,p_{-1})$};
        \end{tikzpicture}
    }
    \caption{In the sophisticated user model ($\Delta = 0$) without attester-proposer separation, proposer 1 becomes the unique proposer who minimizes time to inclusion for every user in $[0,p_1]$ when $\mu < 1$.}
    \label{fig:fast-connections-no-colocation}
\end{figure}

In this section, proposers serve as the attesters, so a proposer's quorum radius is determined by her distance to her $m$-th nearest (including herself) proposer where $m = \lfloor 2n/3 \rfloor + 1$. 
Figure~\ref{fig:fast-connections-no-colocation} demonstrates why co-location is not an equilibrium in the sophisticated user model when proposers have access to faster connections to each other.
A proposer who separates from an otherwise co-located cluster can capture almost an entire side of the market.
Any user for whom the lone proposer lies between her and the cluster strictly prefers routing through the line proposer: the two routes cover the same total distance, but the final leg uses the faster intervalidator connection.
The story is similar when a fraction of the users determine who to send to based on time to preconfirmation instead of time to inclusion.
In both cases, the increased preference for closer proposers allows us to bound the distance between a user and her closest proposer at equilibrium.

\newcommand{\FastConnectionsSummary}{
Suppose the number of proposers $n \geq 4$, and let $\mu$ denote the ratio between the latency per unit distance of the dedicated network and that of public internet.
If the user distribution has full support and no point masses, then co-location of all validators is not an equilibrium when $\mu < 1 - 2/n$ for all block times $\Delta > 0$.
This conclusion holds for $\mu < 1$ when users are sophisticated ($\Delta = 0$).
Moreover, when users are uniformly distributed, the (expected) distance between any user and her nearest validator at any (mixed) equilibrium is at most
\[
    \textstyle O\Paren{\frac{1}{1-\mu} \cdot \Paren{\sqrt{\frac{\max\{\Delta, 1/n\}}{n}} + \frac{\log \Paren{1 + \frac{1-\mu}{\max\{\Delta, 1/n\}}}}{n}}}.
\]
}

\begin{restatedtheorem}{theorem:fast-connections-summary}
\FastConnectionsSummary
\end{restatedtheorem}

\newcommand{\PreconfirmationsSummary}{
Suppose the number of proposers $n \geq 4$, and let $\alpha$ denote the fraction of users at each location who care about time to inclusion.
If the user distribution has full support and no point masses, then co-location of all validators is not an equilibrium when $\alpha < 1$ for all block times $\Delta > 0$.
This conclusion holds for $\alpha < 1 - 1/(n-1)$ when the users who care about time to inclusion are sophisticated ($\Delta = 0$).
Moreover, when users are uniformly distributed, the (expected) distance between any user and her nearest validator at any (mixed) equilibrium is at most
\[
    \textstyle O\Paren{\frac{1 + \log (\max \{1, n(1-\alpha)\})}{\max \{1, n(1-\alpha)\}}}.
\]
}

\begin{restatedtheorem}{theorem:preconfirmations-summary}
\PreconfirmationsSummary
\end{restatedtheorem}

\section*{AI Usage}

The authors used Claude and ChatGPT to work through examples, verify mathematical arguments, aid with literature review, and generate figures.
The exposition, model, results, and proofs are the work of the authors.
The authors take full responsibility of the content of this paper.

\section*{Acknowledgments}

The authors are grateful to Michael Setrin for useful discussions.


\bibliographystyle{alpha}
\bibliography{references}

\appendix

\allowdisplaybreaks

\section{Extended Related Work}\label{app:related}

\subsection{Hotelling Spatial Competition}

In the classic Hotelling model, a continuum of consumers is uniformly distributed along the unit interval, and each firm chooses its location to maximize the fraction of consumers who visit it. 
Each consumer visits her closest firm (tie-breaking at random).
\cite{Hotelling1929} considers a duopoly and observes that both firms locate at $1/2$ at equilibrium, giving rise to the highly influential ``principle of minimum differentiation.'' 

\cite{EatonL1975} studies the same model and characterizes the set of equilibria when there are at least four firms.
A profile of firm locations is an equilibrium if and only if peripheral firms are paired and no firm's market share is smaller than another firm's half-market.
A firm's half-market refers to either the consumers to its left that visit it or those to its right who do so.
A firm is peripheral if it is closest to the market's boundary.

This characterization yields the following equilibrium configurations.
With four firms, two firms locate at each of $1/4$ and $3/4$.
With five firms, two locate at each of $1/6$ and $5/6$, while the remaining firm locates at $1/2$.
With $n \geq 6$ firms, equilibria are no longer unique.
One such equilibrium places firms at each of $(2k - 1)/(2(n-2))$ for all $k \in [n-2]$, with two firms at each outermost location and a single firm at each interior location.
That is, contrary to the principle of minimum differentiation, firms at equilibrium in the classic Hotelling model are in fact spread out.
This model offers a theoretical foundation for the claim that validator spread out if users prefer closer validators.

A pure strategy equilibrium does not exist when there are exactly three firms.
At the unique symmetric mixed strategy equilibrium, each firm locates uniformly at random between $1/4$ and $3/4$ \cite{Shaked1982}.\footnote{
A pure strategy chooses a location deterministically, while a mixed strategy allows this choice to be random. An equilibrium is symmetric if each firm locates according to the same strategy.}
In general, the symmetric mixed equilibrium approaches the distribution of consumers as the number of firms increases \cite{OsborneP1986}.

There has since been a long line of work on Hotelling games and its variants.
\cite{Ewerhart2015} studies mixed equilibria in the classic model with a finite number of firms.
\cite{Fournier2019,FournierS2019,Nunez2016,NunezS2017} study Hotelling games over finite locations and equilibria in large markets.
\cite{BuchelK2016} studies Hotelling games in which the strategy space of each of two firms is restricted to an interval.

These works loosely relate to the equilibrium characterizations we obtain in our attester-proposer separation model but do not imply them.
More specifically, the restriction to finite locations in the case of finite uniformly distributed attesters and an interval in the case of a uniformly distributed continuum of attesters is endogenous in our model, whereas it is exogenous in prior work.
That proposers restrict themselves to particular locations in our model, which then induces Hotelling dynamics over these locations, is our key insight, and the bulk of the work consists of demonstrating this fact.
Even with this fact in hand, our characterizations still do not follow from the existing literature, which, as mentioned earlier, focuses on large markets, duopolies, or price of anarchy (PoA).

Other work includes location-price games in which firms set prices after choosing locations \cite{dAspremontGT1979,Economides1984,Economides1986,IrmenT1998,KohlbergN1982,OsborneP1987,dePalmaGPT1985,Smithies1941,ThisseV1988}, sequential games \cite{AhnCCGvO2004,Hakimi1983,Hay1976,Neven1987,PrescottV1977}, and games in which each firm has multiple facilities to locate \cite{BenPoratT2018}.
We refer the reader to \cite{BiscaiaM2013,DreznerE2024,GabszewiczT1992,Graitson1982} for surveys of spatial competition and competitive facility location and to \cite{KressP2012} for a survey of sequential competitive location problems.

\subsection{Attractiveness and Other Firms}

When validators serve as both attesters and proposers, a validator's attractiveness depends on the locations of other validators.
This feature also appears in several models of spatial competition, although in different forms. 

\cite{EatonL1979,EatonL1982,LuerVillagraMWM2022,MarianovEL2020,MarianovEL2018,MendezVogelMLE2023,Thill1992} study retail location under multipurpose or comparison shopping.
In models of multipurpose shopping, consumers purchase products from multiple firms in a single trip.
In models of comparison shopping, they visit multiple sellers of substitute products before settling on where to purchase.
In either case, a consumer's choice depends on the cost of the entire trip.
As a result, she may forego visiting a nearby firm in favor of visiting a group of firms that can be visited together at lower total cost.
This creates an incentive for firms to locate near one another, although whether and how strongly they do so depends on how consumers trade off transportation costs against the benefits of visiting additional firms.
Complete co-location of all firms is generally not an equilibrium of these models.

When validators serve as both attesters and proposers, our models and conclusions most closely resemble those from this line of work.
However, these models do not capture the problem studied in this paper, so the degree to which their insights apply is unclear.
In particular, trips in these models are generally limited to a small number of firms, which is sometimes endogenously determined.
Moreover, a consumer's transportation cost sums the costs of the individual legs of her trip, whereas a validator's propagation latency has a min-max structure and only depends her distance to the furthest member of her nearest quorum.
Thus, while the literature on multipurpose and comparison shopping may provide useful intuition, it is unclear whether it yields qualitative predictions beyond those offered by intuitions already held by the blockchain community.

There are also models in which a firm's attractiveness depends on the locations of other firms but through mechanisms other than its distance from them.
In \cite{Konishi2005}, firms have a finite number of locations to choose from, and
consumers are uncertain about the utility of visiting certain firms.
Consumers prefer locations with many firms because it increases the chance of a desirable outcome.
Firms are therefore incentivized to co-locate.
The mechanism incentivizing firms to co-locate in \cite{AnshelevichLP2025} is more direct: a consumer's transportation cost to a location is given by her distance to it, divided by the number of firms that exactly locate there.

\subsection{Attractiveness and Fixed Spatial Features}

When proposers propagate to attesters instead of their peers, a proposer's attractiveness depends on the locations of fixed spatial features, namely, attesters.
\cite{Baudewyns2000,Baudewyns2001,ThisseW1992} have a similar flavor but differ significantly in model and scope.
 
\cite{Baudewyns2000} studies the influence of urban amenities on the location and pricing decisions of two firms.
When deciding which firm to visit, a consumer weighs her utility for the firm's product and the urban amenities available at the firm's location against transportation costs, which are quadratic in distance.
Consumers have an exogenous common value for the amenities at each location.
\cite{Baudewyns2000} considers two value functions: a triangular value function that peaks at the midpoint and a value function that linearly increases toward one end of the market.
Broadly speaking, the conclusion is that when transportation costs are low (relative to consumers' value for amenities), both firms prefer to locate with the most valuable amenities; when transportation costs are high, they prefer to spread out.\footnote{In the classic location-price game with quadratic transportation costs, the two firms prefer to locate further away from each to soften price competition \cite{dAspremontGT1979}.}

While one could define a common value function for amenities that coincides with each location's quorum radius in our attester-proposer separation model, its shape may not correspond to the two considered by \cite{Baudewyns2000}.\footnote{The attester distributions that correspond to triangular and linear value functions have all attesters located at the midpoint and the boundary, respectively.}
Moreover, \cite{Baudewyns2000} studies a game with two firms, quadratic transportation costs, and a pricing component, whereas our game has an arbitrary number of proposers, linear transportation costs, and no pricing component.

\cite{Baudewyns2001,ThisseW1992} study the effect that a single public facility has on the location (and in the case of \cite{Baudewyns2001}, pricing) decisions of two firms.
The model of \cite{Baudewyns2001} resembles that of \cite{Baudewyns2000}.
The difference is that instead of urban amenities, there is a single public facility whose location is exogenously determined. 
A consumer who visits a firm located at the public facility can also obtain the value of the public service on the same trip.
If consumers weigh transportation costs much more than the value of the public service, then the two firms locate at opposite ends of the market; much less, then both locate at the public facility.
When the weight of these two factors are roughly equal, one firm locates at the facility, while the other locates at the boundary of the market.
\cite{ThisseW1992} fixes prices and endogenizes consumer location through a competitive land market.
With linear transportation costs, the authors find that the public facility can attract both firms, although the exact equilibrium configuration depends on the parameters of the model, e.g., how much consumers weigh transportation costs against the value of the public service, and the location of the public facility.

\cite{FrongilloHMT2026} proposes a general class of games called position-optimization games that have Hotelling games and forecasting competitions as special cases.
A position-optimization game consists of players choosing \textit{positions} in $\mathcal{X}$ to capture \textit{targets} in $\mathcal{Y}$ that are distributed according to some distribution.
The game comes equipped with a proximity function that takes as input position-target pairs.
A player captures a target if her chosen position minimizes proximity to the target among all players.
Define the set of pseudo-targets as the positions that minimize proximity to some target.
The authors show that players will only choose among these positions in equilibrium if the following two conditions hold: each target has a unique position that minimizes proximity and the number of pseudo-targets is finite.

While this model contains our attester-proposer separation model as a special case,\footnote{Take $\mathcal{X} = \mathcal{Y} = [0,1]$ and proximity to be the sum of two terms: the distance between the user (target) and the location (position) and the location's quorum radius. Also, note that because the proximity function cannot depend on the positions of other players, our model for validators that both attest and propose is not captured.} neither of the two conditions required for their conclusion hold in our setting, so \cite{FrongilloHMT2026} does not imply our results.\footnote{When the attester distribution is uniform and continuous, not only do some users have uncountably many proximity-minimizing locations, but also the set of pseudo-targets ends up being the entire market.}
Their reduction to a game over pseudo-targets relies on the possibility that each pseudo-target is chosen by some player, so their finite-ness condition is fundamental to their approach.\footnote{To apply the results of \cite{FrongilloHMT2026} to Hotelling games, either the set of locations available to firms or the number of consumers is assumed to be finite.}
In our model, the number of pseudo-targets can be uncountably infinite.
We nonetheless show that proposers restrict themselves to the midpoints of attester quorums when the diameter of each quorum is the same.

\subsection{Geographic Decentralization}

\cite{MotepalliGZJ2026,RoeschlinMBK2026} propose mechanisms to incentivize geographic decentralization: the former adjusts a validator's voting power based on her contribution to the network's geographic diversity, while the latter rewards nodes for verifiably reporting and diversifying their locations.
\cite{MotepalliJ2023} show that geographically isolated validators may be mistaken for faulty nodes and penalized because network latency reduces their participation in consensus.
They propose exempting verified geographic minorities from such penalties.

\cite{YangOWZ2026} studies the geographic incentives of validators under Ethereum's single-proposer protocol design.
They find that regardless of whether validators build their own blocks (local building) or outsource block construction (external building), they are incentivized to co-locate with payoff-relevant parties to minimize propagation latencies.
The payoff-relevant parties are information sources when validators build their own blocks and builders and relays when they outsource block construction.

To promote geographic decentralization, the authors propose reducing the single proposer's monopoly power, attenuating reward sensitivity to timing advantages, and diversifying suppliers, relays, builders, and information sources.
Increasing Ethereum's attestation threshold strengthens co-location incentives when validators outsource block construction but can weaken these incentives when validators build their own blocks.
The reason is that a higher attestation threshold forces validators to balance their proximity to information sources against their proximity to a wider range of attesters.
Meanwhile, shorter block times did not weaken co-location incentives.



\section{Quorum Radius}\label{section:appendix-quorum-radius}

In this section, we establish some facts regarding the influence of quorum radius on user preference.
To begin, we remind the reader of the relevant notation.
\begin{enumerate}
    \item $L_i(u,v) = d(u, v_i) + \mu \cdot D_i(v)$ denotes the time to inclusion for a user at $u$ who sends to proposer $i$. 
    Recall that $\mu \in [0,1]$ represents the faster intervalidator connections.
    \item $S(u, v) = \arg\min_i L_i(u,v)$ denotes the set of proposers who minimize the time to inclusion for location $u$.
    \item $S_\tau (u,v) = \arg\min_i \lceil (L_i(u,v) + \tau) / \Delta \rceil$ denotes the set of proposers who can first include a transaction sent from $u$ at time $\tau$ in a block.
    \item Proposer $i$'s share of users at location $u$ and total market share are given by
    \[
        M_i(u, v) = \frac{1}{\Delta} \int_0^\Delta \frac{\1(i \in S_\tau(u,v))}{|S_\tau(u,v)|} \,\mathrm{d}\tau \quad \text{and} \quad M_i(v) = \int_0^1 M_i(u,v) \,\mathrm{d}\nu(u),
    \]
    respectively.
    Recall that the users are distributed according to the probability measure $\nu$.
\end{enumerate}
We will also need the following new notation. 
\begin{enumerate}
    \item Let $\Delta_i(u,v) = L_i(u,v) - \min_j L_j(u,v)$ denote how much longer it takes proposer $i$ to propagate a transaction from location $u$ compared to her optimal proposer.
    \item Let $U_i(v) = \{u : i \in S(u, v)\}$ denote the set of users whose latency-minimizing proposers include proposer $i$.
\end{enumerate}

First, a basic fact about the continuity of a user's minimum inclusion latency and her latency-minimizing proposers.

\begin{lemma}
$\min_i L_i(u,v)$ and $S(u,v)$ are continuous and upper-hemicontinuous, respectively, in $u$.
\end{lemma}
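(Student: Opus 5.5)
The plan is to fix the configuration $v$ and view everything as a function of $u$ alone. Since $v$ is fixed, each quorum radius $D_i(v)$ is a constant in $u$. Each $L_i(u,v) = d(u,v_i) + \mu D_i(v)$ is therefore a constant plus the distance function $u \mapsto |u - v_i|$, which is $1$-Lipschitz.

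For the first claim, I will use the elementary fact that a pointwise minimum of finitely many $1$-Lipschitz functions is $1$-Lipschitz. Concretely, for any $u,u'$ and any $i$ we have $L_i(u,v) \le L_i(u',v) + |u-u'|$. Taking the minimum over $i$ on both sides gives $\min_i L_i(u,v) \le \min_i L_i(u',v) + |u-u'|$. Swapping the roles of $u$ and $u'$ gives the reverse inequality, so $\min_i L_i(u,v)$ is $1$-Lipschitz and in particular continuous.

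For upper-hemicontinuity of $S(u,v)$, note that $S(\cdot,v)$ takes values in the finite set of subsets of $[n]$ with the discrete topology. It therefore suffices to show the following: for each $u_0$ there is a neighborhood $N$ of $u_0$ with $S(u,v) \subseteq S(u_0,v)$ for all $u \in N$.

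Fix $u_0$. Every $i \notin S(u_0,v)$ has strict gap $\Delta_i(u_0,v) = L_i(u_0,v) - \min_j L_j(u_0,v) > 0$. Let $\varepsilon$ be the minimum of these gaps over the finitely many such $i$, so $\varepsilon > 0$. If $S(u_0,v) = [n]$ there is nothing to show.

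Each $\Delta_i(\cdot,v)$ is a difference of two $1$-Lipschitz functions, so it is $2$-Lipschitz. Hence for $|u-u_0| < \varepsilon/2$ every $i \notin S(u_0,v)$ still has $\Delta_i(u,v) > 0$, meaning $i \notin S(u,v)$. This gives $S(u,v) \subseteq S(u_0,v)$ on that neighborhood, which is upper-hemicontinuity.

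An equivalent route is to show that the graph $\{(u,i) : i \in S(u,v)\}$ is closed, using continuity of $L_i - \min_j L_j$ and finiteness of the range. The steps are routine. The only point needing care is that $D_i(v)$ does not depend on $u$, and this is immediate from its definition as a function of the configuration alone; it holds in both the co-located-attester and separated-attester versions and for any $\mu \in [0,1]$.
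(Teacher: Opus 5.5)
Your proof is correct, and it is essentially the paper's argument made explicit: the paper cites continuity of each $L_i(\cdot,v)$ plus the maximum theorem (a continuous objective minimized over the constant finite choice set $[n]$), and your strict-gap argument proves that special case by hand. Doing it by hand also gives a bit more (that $\min_i L_i(\cdot,v)$ is $1$-Lipschitz, and that $S(u,v)\subseteq S(u_0,v)$ on the explicit neighborhood $|u-u_0|<\varepsilon/2$), whereas the citation is shorter and would apply to any continuous $L_i$.
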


\begin{proof}
Follows from the continuity of $L_i(u,v)$ in $u$ and the maximum theorem.
\qed \end{proof}

\subsection{Some Structural Lemmas}

The lemmas in this section are useful for getting intuition for the model.
The next lemma says that the locations for which proposer $i$ minimizes time to inclusion forms a closed interval that contains her location.

\begin{lemma}\label{lemma:general-market-share-intervals}
If $D_i(v) - D_j(v) \leq d(v_i,v_j)$ for all proposers $i$ and $j$, then
\begin{enumerate}
    \item $U_i(v)$ is a closed interval containing $v_i$.
    \item If the interval $(v_i, v_j)$ contains no validators, then $U_i(v) \cap U_j(v) \not= \varnothing$.
    \item If $\mu < 1$, then $S(v_i, v) = \{j : v_j = v_i\}$.
\end{enumerate}
\end{lemma}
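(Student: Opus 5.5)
The hypothesis $D_i(v)-D_j(v)\le d(v_i,v_j)$ will be used in one way only: it says proposer $j$ cannot beat proposer $i$ at $i$'s own location, up to the $\mu$ factor. Write $L_k(u)=d(u,v_k)+\mu D_k(v)$. Since $\mu\le 1$, the hypothesis gives
\[
L_i(v_i)=\mu D_i(v)\le \mu D_j(v)+\mu\, d(v_i,v_j)\le \mu D_j(v)+d(v_i,v_j)=L_j(v_i).
\]
This already shows $i\in S(v_i,v)$. If $\mu<1$ and $v_j\neq v_i$, the second inequality is strict because $d(v_i,v_j)>0$, so $j\notin S(v_i,v)$. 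On the other hand, any $j$ with $v_j=v_i$ satisfies $D_j(v)=D_i(v)$, since the quorum radius depends only on location in both models; hence $L_j\equiv L_i$. Together these give part 3.

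For part 1, we know $v_i\in U_i(v)$, and $U_i(v)$ is closed: it equals $\{u: L_i(u)-\min_k L_k(u)\le 0\}$, and $\min_k L_k$ is continuous by the preceding lemma. For convexity I work on one side of $v_i$, say $u\ge v_i$; the left side is symmetric. Take $u\in U_i(v)$ with $u>v_i$ and any $w\in[v_i,u]$. Then $L_i(w)=L_i(u)-(u-w)$, since $L_i$ has slope $1$ to the right of $v_i$. Because each $L_k$ is $1$-Lipschitz, $L_k(w)\ge L_k(u)-(u-w)\ge L_i(u)-(u-w)=L_i(w)$, so $w\in U_i(v)$. Thus $U_i(v)\cap[v_i,1]$ is an interval starting at $v_i$, and likewise on the left, so $U_i(v)$ is a closed interval containing $v_i$. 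The step to watch is exactly this Lipschitz comparison. It has to go from a point $u$ already in the set back toward $v_i$, not outward; going back toward $v_i$ is what makes it work without further case analysis.

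For part 2, suppose $v_i<v_j$ and no proposer lies in $(v_i,v_j)$. The case $v_i=v_j$ is immediate, since then $L_i\equiv L_j$ and $v_i\in U_i(v)\cap U_j(v)$. Otherwise, define $g(u)=L_i(u)-L_j(u)$ on $[v_i,v_j]$. Then $g(v_i)\le 0\le g(v_j)$ by the displayed inequality applied in both directions, and $g$ is continuous, so there is $u^*\in[v_i,v_j]$ with $L_i(u^*)=L_j(u^*)$. It remains to show that no third proposer $k$ is strictly better at $u^*$. Because $(v_i,v_j)$ contains no proposers, $v_k\le v_i$ or $v_k\ge v_j$. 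Suppose $v_k\le v_i$. Then
\[
L_k(u^*)=d(u^*,v_i)+d(v_i,v_k)+\mu D_k(v)=d(u^*,v_i)+L_k(v_i)\ge d(u^*,v_i)+L_i(v_i)=L_i(u^*).
\]
The case $v_k\ge v_j$ is symmetric, with $j$ in place of $i$. Hence $\min_k L_k(u^*)=L_i(u^*)=L_j(u^*)$, so $u^*\in U_i(v)\cap U_j(v)$.

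The main care point is part 2: the triangle inequality must be applied as an equality along the line, using that $v_i$ lies between $u^*$ and $v_k$. The rest is routine.
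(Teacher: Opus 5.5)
Your proof is correct. Parts 1 and 3 follow the paper's argument. The paper uses the same chain $L_i(v_i,v)\le \mu\,(d(v_i,v_j)+D_j(v))\le L_j(v_i,v)$, with strictness when $\mu<1$. For convexity it compares each $u\in(a,v_i)$ with the endpoint $a=\min U_i(v)$ via $L_i(a,v)=d(a,u)+L_i(u,v)$ and $L_j(a,v)\le d(a,u)+L_j(u,v)$. That is your $1$-Lipschitz comparison, anchored at an endpoint instead of an arbitrary member of $U_i(v)$.

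Part 2 is where your route differs. The paper argues by contradiction and relies on part 1. If the sets are disjoint, then $\max U_i(v)<\min U_j(v)$. An optimal $k$ at any user strictly between these two points must sit beyond $v_i$ or beyond $v_j$, where it is weakly beaten by $i$ or $j$ respectively. You instead find an explicit common point via the intermediate value theorem applied to $L_i-L_j$ on $[v_i,v_j]$, and then apply the same additive-distance domination at that point. Your version is self-contained, since it does not need the interval structure from part 1. It also covers proposers co-located with $v_i$ or $v_j$ directly through the cases $v_k\le v_i$ and $v_k\ge v_j$; the paper's write-up ($v_k\in[0,v_i)\cup(v_j,1]$) leaves that case implicit. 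The paper's version, in exchange, avoids the intermediate value theorem and reuses endpoints it has already established.

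One small point: you justify $D_j(v)=D_i(v)$ for co-located proposers by appealing to the two concrete models. It actually follows from the lemma's hypothesis applied in both directions with $d(v_i,v_j)=0$. Using that keeps the lemma valid for any $D$ satisfying the hypothesis.
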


\begin{proof}
That $v_i \in U_i(v)$ essentially follows from the triangle inequality.
For any other proposers $j$,
\begin{align*}
    L_i(v_i, v) 
        &\leq \mu \cdot D_i(v) \\
        &\leq \mu \cdot (d(v_i, v_j) + D_j(v)) \\
        &\leq L_j(v_i, v) \tag{$\mu \in [0,1]$}
\end{align*}
Note that if $\mu < 1$ and $v_i \not= v_j$, then the third inequality is in fact strict, so $S(v_i, v) = \{j : v_j = v_i\}$.

Now, let $a = \min U_i(v)$ and $b = \max U_i(v)$ denote the left- and rightmost users whose most preferred proposers includes proposer $i$.
Note that these users exist because $U_i(v)$ is non-empty and $S(u,v)$ is upper-hemicontinuous in $u$ for fixed $v$.
We show that $v_i \in S(u, v)$ for all $u \in [a, b]$.
First, consider $u \in (a, v_i)$.
\[
    L_i(a, v) = d(a, v_i) + \mu \cdot D_i(v) = d(a, u) + d(u, v_i) + \mu \cdot D_i(v) = d(a, u) + L_i(u, v)
\]
Since proposer $i$ is optimal for a user at $a$, we have that
\[
    L_i(a, v) \leq d(a, v_j) + \mu \cdot D_j(v) \leq d(a, u) + d(u, v_j) + \mu \cdot D_j(v) = d(a, u) + L_j(u, v)
\]
for all other proposers $j$.
Thus, proposer $i$ is optimal for a user at $u \in (a, v_i)$ as well.
The same argument yields the same conclusion for users in $(v_i, b)$.

To see the final part of the lemma, suppose the $(v_i, v_j)$ contains no proposers, but $U_i(v) \cap U_j(v) = \varnothing$.
In other words, $b_i < a_j$ where $b_i = \max U_i(v)$ and $a_j = \min U_j(v)$.
Consider $u \in (b_i, a_j)$ and any proposer $k \in S(u, v)$.
Note that $v_k \in [0, v_i) \cup (v_j, 1]$.
If $v_k < v_i$, then
\begin{align*}
    L_k(u,v) 
        &= d(u, v_k) + \mu \cdot D_k(v) \\
        &= d(u, v_i) + d(v_i, v_k) + \mu \cdot D_k(v) \\
        &\geq d(u, v_i) + \mu \cdot D_i(v) \\
        &= L_i(u, v)
\end{align*}
where the inequality follows from $\mu \leq 1$ and the triangle inequality, contradicting the fact that $u \not\in U_i(v)$
A nearly identical argument with proposer $j$ yields a contradiction to the fact that $u \not\in U_j(v)$ when $v_k > v_j$ instead.
\qed \end{proof}

\begin{lemma}\label{lemma:disjoint-market-shares}
Let $v$ denote a configuration of proposers.
If $\mu < 1$ and $v_i \not= v_j$, then $\mathring{U}_i(v) \cap \mathring{U}_j(v) = \varnothing$ where $\mathring{S}$ denotes the interior of a set $S$.
That is, the market shares of proposers at distinct locations (essentially) partition the unit interval.
\end{lemma}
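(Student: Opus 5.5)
The plan is to argue by contradiction. Suppose $\mu<1$, $v_i \neq v_j$, and some $u$ lies in $\mathring{U}_i(v)\cap\mathring{U}_j(v)$. Then there is an open interval $I \ni u$ on which both $i$ and $j$ lie in $S(w,v)$. Hence $L_i(w,v) = L_j(w,v) = \min_k L_k(w,v)$ for every $w\in I$.

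Shrinking $I$, I may assume it is an open interval containing neither $v_i$ nor $v_j$. Without loss of generality let $v_i<v_j$. On $I$ consider
\[
  f(w) = L_i(w,v)-L_j(w,v) = |w-v_i|-|w-v_j| + \mu\,(D_i(v)-D_j(v)),
\]
which must vanish identically on $I$. If $I\subseteq (v_i,v_j)$, then $f(w) = 2w - v_i - v_j + \mathrm{const}$ has slope $2$. It therefore cannot vanish on an open interval, a contradiction. So $I$ lies entirely to the right of $v_j$ or entirely to the left of $v_i$.

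Next I need the 1-Lipschitz property of the quorum radius: $D_j(v) \le d(v_i,v_j) + D_i(v)$, and symmetrically. In the model where validators also attest, take a set $S$ of $m$ validators attaining $D_i(v)$. By the triangle inequality, every member of $S$ is within $d(v_j,v_i)+D_i(v)$ of $v_j$. Taking the same $S$ in the minimum defining $D_j(v)$ gives the bound. Note that the configuration $v$, and hence the candidate sets $S$, is the same for both proposers. In the attester–proposer separation model the same argument applies to any $S$ with $\lambda(S)>2/3$, after passing through the infimum and supremum. This is the same inequality used in Lemma~\ref{lemma:general-market-share-intervals}.

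Now take the case $I\subseteq (v_j,1]$. There $L_i(w,v) = d(w,v_j) + d(v_j,v_i) + \mu D_i(v)$ and $L_j(w,v) = d(w,v_j)+\mu D_j(v)$. Equality therefore forces $d(v_i,v_j) + \mu D_i(v) = \mu D_j(v)$. However,
\[
  \mu D_j(v) \le \mu\, d(v_i,v_j) + \mu D_i(v) < d(v_i,v_j) + \mu D_i(v),
\]
where the strict inequality uses $d(v_i,v_j)>0$ and $\mu<1$. This is a contradiction. The case $I\subseteq[0,v_i)$ is symmetric, with the roles of $i$ and $j$ swapped, and uses $D_i(v)\le d(v_i,v_j)+D_j(v)$.

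Hence the interiors are disjoint. Since each $U_i(v)$ is closed and the sets $U_i(v)$ cover $[0,1]$, the market shares of proposers at distinct locations essentially partition the interval. The only real content is the strictness in the last step, which is exactly where $\mu<1$ enters; when $\mu=1$, ties on whole intervals are possible. The slope argument and the Lipschitz bound are routine.
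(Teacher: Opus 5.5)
Your proof is correct and follows essentially the paper's route: the region $(v_i,v_j)$ is excluded by the slope-2 computation (the paper's identity $L_j(w,v)=2d(u,w)+L_i(w,v)$), and the outer regions by the strict inequality $\mu D_j(v)\le \mu\, d(v_i,v_j)+\mu D_i(v) < d(v_i,v_j)+\mu D_i(v)$ coming from $\mu<1$. The only difference is in the outer regions: the paper invokes the interval structure of $U_j(v)$ together with part~3 of Lemma~\ref{lemma:general-market-share-intervals} at the point $v_i$ (or $v_j$), whereas you evaluate the latency difference directly at every point of the outer region, which does not need the connectedness of $U_i(v)$ or $U_j(v)$.
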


\begin{proof}
Suppose by way of contradiction that there exists $u \in \mathring{U}_i(v) \cap \mathring{U}_j(v)$, so $L_i(u,v) = L_j(u, v)$.
Without loss of generality, let $v_i < v_j$.
Note that by Lemma~\ref{lemma:general-market-share-intervals}, it cannot be the case that either $u \leq v_i$ or $v_j \leq u$ since the former would imply that $v_i \in \mathring{U}_j(v)$, while the latter that $v_j \in \mathring{U}_i(v)$.
Thus, $v_i < u < v_j$.
Since $u \in \mathring{U}_i(v) \cap \mathring{U}_j(v)$, there exists $w \in \mathring{U}_i(v) \cap \mathring{U}_j(v)$ such that $v_i < w < u$.
But note that
\begin{align*}
    L_j(w, v) 
        &= d(w, v_j) + \mu \cdot D_j(v) \\
        &= d(w, u) + d(u, v_j) + \mu \cdot D_j(v) \\
        &= d(w, u) + L_j(u, v) \\
        &= d(w, u) + L_i(u, v) \tag{$u \in \mathring{U}_i(v) \cap \mathring{U}_j(v)$} \\
        &= d(w, u) + d(u, v_i) + \mu \cdot D_i(v) \\
        &= d(w, u) + d(u, w) + d(w, v_i) + \mu \cdot D_i(v) \\
        &= 2 \cdot d(u, w) + L_i(w, v) \\
        &> L_i(w, v),
\end{align*}
contradicting the fact that $w \in \mathring{U}_i(v) \cap \mathring{U}_j(v)$ (in which case $L_i(w, v) = L_j(w, v)$).
\qed \end{proof}

The next lemma provides a useful market share identity.

\begin{lemma}\label{lemma:block-time-market-share-identity}
\[
    \frac{1}{\Delta} \int_0^\Delta \frac{\1(j \in S_\tau(u,v))}{|S_\tau(u,v)|} \,\mathrm{d}\tau = \frac{1}{\Delta} \int_{\min\{\Delta_j(u,v), \Delta\}}^{\Delta} \frac{1}{|\{i : \Delta_i(u,v) \leq s\}|} \,\mathrm{d}s 
\]
\end{lemma}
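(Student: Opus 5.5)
The plan is to establish the identity pointwise in $u$ and $v$ by computing, for each $\tau \in [0,\Delta)$, exactly which proposers lie in $S_\tau(u,v)$, and then to change variables. Write $L^* = \min_i L_i(u,v)$, so that $L_i(u,v) = L^* + \Delta_i(u,v)$. Since $\lceil (L_i + \tau)/\Delta \rceil$ is nondecreasing in $L_i$, the minimal block index is always attained by any latency-minimizing proposer. That index is $k^*(\tau) = \lceil (L^* + \tau)/\Delta \rceil$, and we have
\[
    i \in S_\tau(u,v) \iff L^* + \Delta_i(u,v) + \tau \leq k^*(\tau)\,\Delta .
\]

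Next, define the slack $s(\tau) = k^*(\tau)\Delta - L^* - \tau$, which lies in $[0,\Delta)$. The membership condition then reads $\Delta_i(u,v) \leq s(\tau)$. Consequently $S_\tau(u,v) = \{i : \Delta_i(u,v) \leq s(\tau)\}$, and in particular $|S_\tau(u,v)| = |\{i : \Delta_i(u,v) \leq s(\tau)\}|$. Note that this set always contains the minimizers of $L_i$, since they have $\Delta_i = 0 \leq s(\tau)$, so the count is never zero.

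The key step is the change of variables $\tau \mapsto s(\tau)$. As $\tau$ ranges over $[0,\Delta)$, the quantity $L^* + \tau$ sweeps an interval of length $\Delta$. The map $s(\tau) = -(L^*+\tau) \bmod \Delta$, with the convention that $s = 0$ on multiples of $\Delta$, is therefore a measure-preserving bijection of $[0,\Delta)$ onto itself up to a single point. It is piecewise a translation $s = c - \tau$ with at most one breakpoint, where $L^* + \tau$ hits a multiple of $\Delta$. Hence for any bounded measurable $g$,
\[
    \frac{1}{\Delta}\int_0^\Delta g(s(\tau))\,\mathrm{d}\tau = \frac{1}{\Delta}\int_0^\Delta g(s)\,\mathrm{d}s .
\]

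Apply this with $g(s) = \1(\Delta_j(u,v) \leq s)/|\{i : \Delta_i(u,v) \leq s\}|$. The indicator restricts integration to $s \geq \Delta_j(u,v)$ within $[0,\Delta)$, giving the lower limit $\min\{\Delta_j(u,v),\Delta\}$; when $\Delta_j \geq \Delta$ the integral is empty and both sides vanish. Boundary and endpoint conventions, such as whether $s$ attains $\Delta$ or the inequality is strict at measure-zero values of $s$, do not affect the integrals.

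The main obstacle is the bookkeeping for the ceiling and the wrap-around of $s(\tau)$. I would handle it by splitting $[0,\Delta)$ at the unique $\tau_0$ with $L^* + \tau_0 \in \Delta\mathbb{Z}$, if such a $\tau_0$ exists. On each piece $k^*$ is constant and $s$ is an affine translation, and the two images tile $[0,\Delta)$.
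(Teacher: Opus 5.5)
Your proposal is correct and follows the paper's proof essentially step for step. The paper defines the same slack $s(\tau) = \Delta \lceil (L^* + \tau)/\Delta \rceil - (L^* + \tau)$ and shows $k \in S_\tau(u,v) \iff \Delta_k(u,v) \leq s(\tau)$. It then completes the change of variables by writing $s(\tau)$ piecewise as a translation with a single breakpoint where $L^* + \tau$ hits a multiple of $\Delta$. That piecewise form is exactly your measure-preserving bijection of $[0,\Delta)$.
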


\begin{proof}
Let
\[
    s(\tau) = \Delta \cdot \lceil (\min_i L_i(u,v) + \tau) / \Delta \rceil - (\min_i L_i(u,v) + \tau),
\]
and note that
\begin{align*}
    k \in S_\tau(u,v) 
        &\iff (L_k(u,v) + \tau) / \Delta \leq \lceil (\min_i L_i(u,v) + \tau) / \Delta \rceil \\
        &\iff \Delta_k(u,v) \leq s(\tau),
\end{align*}
so
\[
    \frac{\1(j \in S_\tau(u,v))}{|S_\tau(u,v)|} = \frac{\1(\Delta_j(u,v) \leq s(\tau))}{|\{i : \Delta_i(u,v) \leq s(\tau)\}|}.
\]
To complete the change of variable, note that
\[
    s(\tau) = \begin{cases}
        \Delta \cdot \lceil L^* / \Delta \rceil - (L^* + \tau) & \tau \in [0, \Delta \cdot \lceil L^* / \Delta \rceil - L^*] \\
        \Delta \cdot (\lceil L^* / \Delta \rceil + 1) - (L^* + \tau) & \tau \in (\Delta \cdot \lceil L^* / \Delta \rceil - L^*, \Delta).
    \end{cases}
\]
where $L^* = \min_i L_i(u,v)$.
\qed \end{proof}

\subsection{Minimizing Quorum Radius Is Profitable}

The following lemma is the driving force behind our main results.
It says that a deviation that decreases one's quorum radius without decreasing those of other proposers is always weakly profitable.
The lemma also provides conditions for which such a deviation is strictly profitable.
The argument for strict profitably depends on the exact definition of quorum radius, i.e., whether proposers propagate to other proposers or to a disjoint set of attesters, and will be carried out in their respective sections.

\begin{lemma}\label{lemma:profitable-deviation-helper}
Suppose $\mu = 1$.
Let the probability measure $\nu$ denote the distribution of users, and let $v$ denote a configuration of proposers.
If there exists a location $v'_i \not= v_i$ such that 
\begin{enumerate}
    \item $D_i(v'_i, v_{-i}) = D_i(v) - d(v_i, v'_i)$ 
    \item $D_j(v'_i, v_{-i}) \geq D_j(v)$ for all $j \not=i$,
\end{enumerate}
that is, it is possible for proposer $i$ to strictly decrease her own quorum radius without decreasing those of her opponents, then $M_i(v'_i, v_{-i}) \geq M_i(v)$.
Moreover, if $v'_i > v_i$, then equality holds only if for $\nu$-almost every $u \in (v_i,1]$,
\begin{enumerate}
    \item $\Delta_i(u,v) = 0$ and $\Delta_j(u,v) \geq \Delta$ for all $j \not= i$ if $\delta(u) < 0$ or
    \item $\Delta_i(u, v) \geq \Delta + \eta(u)$ if $\delta(u) \geq 0$
\end{enumerate}
where
\begin{align*}
    \delta(u) &= \min_j L_j(u,v'_i, v_{-i}) - \min_j L_j(u,v) \\
    \eta(u) &= \begin{cases}
        2d(u, v_i) & u \in (v_i, v'_i) \\
        2d(v_i, v'_i) & u \in [v'_i, 1].
    \end{cases}
\end{align*}
If $v'_i < v_i$, then the necessary condition is symmetrically defined.
\end{lemma}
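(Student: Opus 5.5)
The plan is to prove the weak inequality pointwise in the user location $u$ and then integrate against $\nu$. The key observation uses $\mu = 1$ and the first hypothesis. For every user $u$,
\[
    L_i(u, v'_i, v_{-i}) = d(u, v'_i) + D_i(v) - d(v_i, v'_i) \leq d(u, v_i) + D_i(v) = L_i(u, v),
\]
by the triangle inequality. Hence proposer $i$'s time to inclusion weakly decreases everywhere.

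On the line we can compute the decrease exactly. Take $v'_i > v_i$.
\begin{enumerate}
    \item For $u \leq v_i$ it is $0$.
    \item For $u \in (v_i, v'_i)$ it is $2d(u, v_i)$, since $d(u, v'_i) = d(v_i, v'_i) - d(u, v_i)$.
    \item For $u \geq v'_i$ it is $2d(v_i, v'_i)$.
\end{enumerate}
In other words, the decrease is exactly $\eta(u)$ on $(v_i, 1]$. By the second hypothesis and $d(u, v_j)$ being unchanged, $L_j(u, v'_i, v_{-i}) \geq L_j(u, v)$ for every $j \neq i$.

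I would work directly with the definition via $S_\tau$ rather than the $\Delta_i$ notation, because the minimum $\min_j L_j$ itself moves under the deviation. Fix $u$ and $\tau$. Proposer $i$'s block index $\lceil (L_i + \tau)/\Delta \rceil$ weakly decreases, and every other proposer's index weakly increases. So if $i \in S_\tau(u,v)$, then $i \in S_\tau(u, v'_i, v_{-i})$ as well, and the set of other proposers sharing the earliest block can only shrink. Hence $\1(i \in S_\tau)/|S_\tau|$ weakly increases for every $\tau$. Integrating over $\tau$ and then over $\nu$ gives $M_i(v'_i, v_{-i}) \geq M_i(v)$. The case $\Delta = 0$ is the same argument applied to $S(u, \cdot)$, or follows by the limit remark in the model section.

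For the equality condition, equality of the integrals forces equality of $M_i(u, \cdot)$ for $\nu$-a.e.\ $u \in (v_i, 1]$, because the integrand is pointwise monotone. I would translate this per-$u$ equality into the stated conditions by splitting on the sign of $\delta(u)$.

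\textbf{Case $\delta(u) < 0$.} The new minimum is attained only by $i$, since the others did not decrease, and $i$'s new latency undercuts the old minimum. Proposer $i$ now gets share $\frac{1}{\Delta}\int_{\min\{\Delta_i', \Delta\}}^{\Delta}$ with $\Delta_i' = 0$ and all others shifted upward. Using Lemma~\ref{lemma:block-time-market-share-identity} for the old share, equality requires that the old share was already $1$. That means $\Delta_i(u, v) = 0$ and $|\{k : \Delta_k(u,v) \leq s\}| = 1$ for a.e.\ $s < \Delta$, i.e.\ $\Delta_j(u,v) \geq \Delta$ for all $j \neq i$.

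\textbf{Case $\delta(u) \geq 0$.} The new minimum is at least the old one. Here $i$'s new gap is at most $\max\{\Delta_i(u,v) - \eta(u), 0\}$ relative to the old minimum. I will show that if $\Delta_i(u,v) < \Delta + \eta(u)$, then the set of $\tau$ where $i$ newly enters the earliest block has positive measure, making the share strictly larger. This uses the explicit form of $s(\tau)$ from the proof of Lemma~\ref{lemma:block-time-market-share-identity}: lowering $i$'s latency by $\eta(u) > 0$ enlarges the set of $\tau$ with $i \in S_\tau$ by an interval of length $\min\{\eta(u), \ldots\}$ whenever that set was not already all of $[0, \Delta)$. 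If $i$ was already in every $S_\tau$, the first case's reasoning applies instead. The symmetric statement for $v'_i < v_i$ follows by reflecting the line.

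I expect the main obstacle to be this strictness step in the $\delta(u) \geq 0$ case. There the other proposers' latencies may also rise, the reference minimum is not $i$'s, and one has to carefully track the wrap-around in $s(\tau)$ to confirm that a positive-measure set of $\tau$ really changes.
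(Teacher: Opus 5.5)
Your proposal is correct, but you reach the weak inequality by a different route than the paper. The paper never argues pointwise in $\tau$. It rewrites both shares with Lemma~\ref{lemma:block-time-market-share-identity}, records $\Delta_k(u,v) \le \Delta_k(u,v'_i,v_{-i}) + \delta(u)$ for $k \neq i$ and $\Delta_i(u,v) = \Delta_i(u,v'_i,v_{-i}) + \delta(u) + \eta(u)$ (with no $\eta$ term for $u \le v_i$), and runs one chain of integral inequalities through three regimes: $u \le v_i$; $u > v_i$ with $\delta(u) < 0$; and $u > v_i$ with $\delta(u) \ge 0$. The weak inequality and the equality conditions then come out together from the tight steps.

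Your observation does the same job more cheaply. Proposer $i$'s block index weakly falls and every other index weakly rises, so $i \in S_\tau(u,v)$ implies $i \in S_\tau(u,v'_i,v_{-i}) \subseteq S_\tau(u,v)$. This gives the weak inequality in one line, needs only that $L_i$ goes down and each $L_k$ goes up (neither the line nor the exact size $\eta$), and covers $\Delta = 0$ verbatim. The price is a separate strictness count, which is the same identity in disguise.

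That count is easier than you fear. On $(v_i,1]$ we have $\eta(u) > 0$, and $\delta(u) \ge 0$ means $L_i(u,v) - \eta(u) \ge L^* := \min_k L_k(u,v)$, i.e.\ $\Delta_i \ge \eta > 0$. So your fallback branch (``$i$ already in every $S_\tau$'') is vacuous. Since $L_k(u,v'_i,v_{-i}) \ge L^*$ for all $k \neq i$, proposer $i$ newly enters for every $\tau$ with $\Delta_i - \eta \le s(\tau) < \Delta_i$. The map $\tau \mapsto s(\tau)$ is measure preserving onto $[0,\Delta)$, so this set has measure $\min\{\Delta_i,\Delta\} - \min\{\Delta_i - \eta,\Delta\}$. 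That is positive exactly when $\Delta_i < \Delta + \eta$, and no wrap-around bookkeeping is needed.

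In the $\delta(u) < 0$ case, state explicitly that each $k \neq i$ is pushed up by at least $-\delta(u) > 0$ relative to $i$. When $\Delta_i = 0$ and some $\Delta_j < \Delta$, this strict shift gives the gain on $s \in [\Delta_j, \min\{\Delta_j - \delta(u), \Delta\})$. When $\Delta_i > 0$, the gain is on $s \in [0, \min\{\Delta_i,\Delta\})$.
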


\begin{proof}
Assume without loss of generality that $v_i' > v_i$, that is, $v'_i$ lies to the right of $v_i$.
The conditions on the deviation imply that if $\delta(u) < 0$, then $S(u,v'_i, v_{-i}) = \{i\}$.
We compare (the vectors) $\Delta(u,v)$ and $\Delta(u,v'_i, v_{-i})$.
For all users $u \in [0,1]$ and proposers $k \not= i$,
\begin{align*}
    \Delta_k(u,v) 
        &= L_k(u,v) - \min_j L_j(u,v) \\
        &\leq L_k(u,v'_i, v_{-i}) - \min_j L_j(u,v'_i, v_{-i}) + \delta(u)\\
        &= \Delta_k(u, v'_i, v_{-i}) + \delta(u).
\end{align*}
where the inequality follows from the second condition on the deviation.
Meanwhile, for users $u \in [0, v_i]$,
\begin{align*}
    L_i(u,v) 
        &= d(u,v_i) + D_i(v) \\
        &= d(u, v_i) + d(v_i, v'_i) + D_i(v'_i, v_{-i}) \\
        &= d(u,v'_i) + D_i(v'_i, v_{-i}) \\
        &= L_i(u, v'_i, v_{-i})
\end{align*}
where the second equality follows from the first condition on the deviation, so
\begin{align*}
    \Delta_i(u,v) 
        &= L_i(u,v) - \min_j L_j(u,v) \\
        &= L_i(u, v'_i, v_{-i}) - \min_j L_j(u,v'_i, v_{-i}) + \delta(u) \\
        &= \Delta_i(u, v'_i, v_{-i}) + \delta(u).
\end{align*}
Moreover, $\delta(u) \geq 0$ since $L_k(u, v'_i, v_{-i}) \geq L_k(u,v)$ for all proposers $k$, with equality for proposer $i$.
It follows that for users $u \in [0, v_i]$,
\begin{align*}
    M_i(u,v'_i, v_{-i})
        &= \frac{1}{\Delta} \cdot \int_{\min\{\Delta_i(u, v'_i, v_{-i}), \Delta\}}^{\Delta} \frac{1}{|\{j : \Delta_j(u,v'_i, v_{-i}) \leq s\}|} \,\mathrm{d}s \\
        &\geq \frac{1}{\Delta} \cdot \int_{\min\{\Delta_i(u, v) - \delta(u), \Delta\}}^{\Delta} \frac{1}{|\{j : \Delta_j(u,v) \leq s + \delta(u)\}|} \,\mathrm{d}s \\
        &= \frac{1}{\Delta} \cdot \int_{\min\{\Delta_i(u, v), \Delta + \delta(u)\}}^{\Delta + \delta(u)} \frac{1}{|\{j : \Delta_j(u,v) \leq s\}|} \,\mathrm{d}s \tag{change of variable} \\
        &\geq \frac{1}{\Delta} \cdot \int_{\min\{\Delta_i(u, v), \Delta\}}^{\Delta} \frac{1}{|\{j : \Delta_j(u,v) \leq s\}|} \,\mathrm{d}s \tag{$\delta(u) \geq 0$} \\
        &= M_i(u,v).
\end{align*}

Now, for users $u \in (v_i, 1]$, we have that
\begin{align*}
    L_i(u,v) 
        &= d(u,v_i) + D_i(v) \\
        &= d(u, v_i) + d(v_i, v'_i) + D_i(v'_i, v_{-i}) \tag{first condition} \\
        &= d(u, v'_i) + D_i(v'_i, v_{-i}) + \eta(u) \\
        &= L_i(u, v'_i, v_{-i}) + \eta(u),
\end{align*}
so
\begin{align*}
    \Delta_i(u,v) 
        &= L_i(u,v) - \min_j L_j(u,v) \\
        &= L_i(u, v'_i, v_{-i}) - \min_j L_j(u,v'_i, v_{-i}) + \delta(u) + \eta(u) \\
        &= \Delta_i(u, v'_i, v_{-i}) + \delta(u) + \eta(u).
\end{align*}
If $\delta(u) < 0$, then $S(u, v'_i, v_{-i}) = \{i\}$, so
\begin{align*}
    \delta(u) + \eta(u) 
        &= (\min_j L_j(u, v'_i, v_{-i}) - \min_j L_j(u, v)) - (L_i(u, v'_i, v_{-i}) - L_i(u, v)) 
        \\
        &= \Delta_i(u,v) \\
        &\geq 0.
\end{align*}
It follows that 
\begin{align*}
    M_i & (u,v'_i, v_{-i}) \\
        &= \frac{1}{\Delta} \cdot \int_{\min\{\Delta_i(u, v) - \delta(u) - \eta(u), \Delta\}}^{\Delta} \frac{1}{|\{j : \Delta_j(u,v'_i, v_{-i}) \leq s\}|} \,\mathrm{d}s \\
        &\geq \frac{1}{\Delta} \cdot \int_{\min\{\Delta_i(u, v) - \delta(u) - \eta(u), \Delta\}}^{\Delta} \frac{1}{|\{j \not= i : \Delta_j(u,v) \leq s + \delta(u)\}| + 1} \,\mathrm{d}s \\
        &= \frac{1}{\Delta} \cdot \int_{\min\{\Delta_i(u, v), \Delta + \delta(u) + \eta(u)\}}^{\Delta + \delta(u) + \eta(u)} \frac{1}{|\{j \not= i: \Delta_j(u,v) \leq s - \eta(u)\}| + 1} \,\mathrm{d}s \tag{change of variable} \\
        &\geq \frac{1}{\Delta} \cdot \int_{\min\{\Delta_i(u, v), \Delta\}}^{\Delta} \frac{1}{|\{j \not= i: \Delta_j(u,v) \leq s - \eta(u)\}| + 1} \,\mathrm{d}s \tag{$\delta(u) + \eta(u) \geq 0$} \\
        &\geq \frac{1}{\Delta} \cdot \int_{\min\{\Delta_i(u, v), \Delta\}}^{\Delta} \frac{1}{|\{j : \Delta_j(u,v) \leq s\}|} \,\mathrm{d}s \tag{$\eta(u) > 0$}  \\
        &= M_i(u,v).
\end{align*}
Equality in the second-to-last inequality holds only if $\Delta_i(u,v) = 0$.
Conditioned on this equality, equality in the final inequality holds only if $\Delta_j(u,v) \geq \Delta$ for all $j \not= i$.
In other words, when $\delta(u) < 0$, $M_i(u,v'_i, v_{-i}) = M_i(u,v)$ only if $\Delta_i(u,v) = 0$ and $\Delta_j(u,v) \geq \Delta$ for all $j \not= i$.
On the other hand, if $\delta(u) \geq 0$, then
\begin{align*}
    M_i & (u,v'_i, v_{-i}) \\
        &= \frac{1}{\Delta} \cdot \int_{\min\{\Delta_i(u, v) - \delta(u) - \eta(u), \Delta\}}^{\Delta} \frac{1}{|\{j : \Delta_j(u,v'_i, v_{-i}) \leq s\}|} \,\mathrm{d}s \\
        &\geq \frac{1}{\Delta} \cdot \int_{\min\{\Delta_i(u, v) - \delta(u) - \eta(u), \Delta\}}^{\Delta} \frac{1}{|\{j \not= i: \Delta_j(u,v) \leq s + \delta(u)\}| + 1} \,\mathrm{d}s \\
        &= \frac{1}{\Delta} \cdot \int_{\min\{\Delta_i(u, v) - \eta(u), \Delta + \delta(u)\}}^{\Delta + \delta(u)} \frac{1}{|\{j \not= i: \Delta_j(u,v) \leq s\}| + 1} \,\mathrm{d}s \tag{change of variable} \\
        &\geq \begin{aligned}[t]
            &\frac{1}{\Delta} \cdot \int_{\min\{\Delta_i(u, v) - \eta(u), \Delta\}}^{\min\{\Delta_i(u,v), \Delta\}} \frac{1}{|\{j \not= i: \Delta_j(u,v) \leq s\}| + 1} \,\mathrm{d}s \\
            &+ \frac{1}{\Delta} \cdot \int_{\min\{\Delta_i(u, v), \Delta\}}^{\Delta} \frac{1}{|\{j : \Delta_j(u,v) \leq s\}|} \,\mathrm{d}s
        \end{aligned} \tag{$\delta(u) \geq 0$} \\
        &\geq M_i(u,v) \tag{with equality only if $\Delta_i(u, v) \geq \Delta + \eta(u)$}
\end{align*}
with equality only if $\Delta_i(u, v) \geq \Delta + \eta(u)$.
\qed \end{proof}

\section{Co-location}\label{section:appendix-colocation}

In this section, we show that when validators both attest and propose, a configuration $v$ is a pure equilibrium if and only if $v_i = v_j$ for all proposers $i$ and $j$, that is, all validators are co-located.
Throughout, $\alpha = \mu = 1$.
In other words, all users care about time to inclusion, and validators do not have access to faster connections.

\begin{theorem}\label{theorem:colocation-unique-block-time}
Suppose $n \geq 4$ and that the user distribution $\nu$ has full support.
A configuration $v$ of validators is a pure equilibrium if and only if $v_i = v_j$ for all validators $i$ and $j$, that is, all validators are co-located.
\end{theorem}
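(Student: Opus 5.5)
The proof has two directions: co-location is an equilibrium, and any non-co-located configuration admits a profitable deviation.

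\textbf{Sufficiency.} Suppose all validators sit at a common point $x$, so every quorum radius is $0$. If validator $i$ moves to $y \neq x$, the remaining $n-1 \geq m$ validators still form a quorum at $x$. Hence $D_i(y, v_{-i}) = d(x,y)$ and $D_j = 0$ for $j \neq i$. By the triangle inequality, $L_i(u) = d(u,y) + d(y,x) \geq d(u,x) = L_j(u)$ for every user $u$, with strict inequality for $u$ outside the ray starting at $y$ and pointing away from $x$.

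Using the identity of Lemma~\ref{lemma:block-time-market-share-identity}, I would compare $M_i$ before and after the move pointwise in $u$:
\begin{itemize}
\item Before the move, $\Delta_j(u) = 0$ for all $j$, so each validator receives exactly $1/n$ at every $u$.
\item After the move, $\Delta_i(u) \geq 0$ and $\Delta_j(u) = 0$ for the others. Validator $i$ shares at most $1/n$ at every $u$, and strictly less wherever $\Delta_i > 0$.
\end{itemize}
So the deviation is weakly unprofitable, and co-location is an equilibrium. Full support of $\nu$ is not needed for this direction.

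\textbf{Necessity.} Take a pure equilibrium $v$ that is not co-located; I will derive a contradiction. The engine is Lemma~\ref{lemma:profitable-deviation-helper}. For each $i$, consider sliding $v_i$ toward the midpoint of her nearest quorum interval, i.e., toward the far endpoint that determines $D_i$. This strictly decreases $D_i$ at rate one. The hypothesis that $D_j$ does not decrease holds whenever $v_i$ is not a determining endpoint (the $m$-th nearest) for any other $j$, or whenever moving away from such a $j$ only increases $D_j$.

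To get strictness, I would use the equality conditions of Lemma~\ref{lemma:profitable-deviation-helper} together with full support. Equality requires, for $\nu$-a.e.\ $u$ beyond $v_i$ in the direction of motion, either that $i$ is the unique optimum with everyone else at least $\Delta$ worse, or that $\Delta_i(u) \geq \Delta + \eta(u)$. Since $\eta(u) \to 0$ as $u \downarrow v_i$ and $\Delta_i(v_i) = 0$, continuity forces the first alternative near $v_i$. Propagating this along the interval toward the determining endpoint, I would show it eventually fails, because near a cluster of other validators someone ties with $i$. This gives a strict gain and contradicts equilibrium.

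This part of the argument yields a structural lemma: at an equilibrium, each validator must sit at the midpoint of her nearest quorum interval, or else be pinned because every admissible move decreases someone's radius. I would then classify the configurations on the line satisfying this midpoint/pinning condition with $m = \lfloor 2n/3 \rfloor + 1$. The leftmost and rightmost clusters force the survivors to be the three-cluster configuration: roughly a third of the validators at $a$, a third at $b$, and the rest at $(a+b)/2$.

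\textbf{Main obstacle.} The hard step is ruling out this three-cluster configuration, where no local move helps. I would try a non-local deviation: one validator from an outer cluster jumps into the middle cluster. The middle cluster then gains a member. If it now has enough members, or is close enough to a quorum, its radius drops, and the jumper shares the whole central region at a smaller radius. Meanwhile the abandoned outer cluster's radius may increase, which makes its users switch to the middle.

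The tedious part is the counting of cluster sizes against $m$ for each $n \bmod 3$, along with handling the tie and shares under general $\Delta$ via Lemma~\ref{lemma:block-time-market-share-identity}. If this deviation fails for some residue class, the fallback is a jump to a point slightly inside the interval on the far side.
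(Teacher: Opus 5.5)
Your skeleton matches the paper's: sufficiency by the triangle inequality, strict profitability of radius-reducing slides via Lemma~\ref{lemma:profitable-deviation-helper}, reduction to a three-cluster configuration, then a non-local jump. However, two load-bearing steps are missing or would fail as written.

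\textbf{The classification step is only asserted.} Turning \emph{midpoint or pinned} into the three-cluster shape is the bulk of the paper's proof, and you give no argument for it. The paper proceeds as follows.
\begin{itemize}
\item It first proves Lemma~\ref{lemma:colocated-quorum-implies-colocation}: if $m$ validators share a location, all do. Since $n-m<m$, each outsider's radius is just her distance to that cluster, so she can slide toward it without changing anyone's radius. Hence each location holds at most $m-1$ validators.
\item It then takes the last validator $k$ at $v_1$. Unless her rightward slide lowers someone's radius, she profits, so some validator's radius equals her distance to $v_1$.
\item The leftmost such validator $j$ is shown to sit exactly at $v_m$, and $((v_1+v_m)/2,v_m)$ must be empty, since a validator there could slide left profitably. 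The same holds symmetrically on the right.
\item What remains is either $v_{n-m+1}=v_1$, $v_m=v_n$ with the interior validators at the midpoint, or $v_{n-m+1}=v_m=(v_1+v_n)/2$. The latter is killed by validator $m$ sliding right.
\end{itemize}
Separately, your strictness step (\emph{someone ties with $i$}) only yields users where some $\Delta_j<\Delta$. That set shrinks with $\Delta$, so it does not cover the sophisticated model. The paper instead uses the users just beyond the right end $a$ of $i$'s optimal region, where $\delta(u)<0$ but $\Delta_i(u)>0$. This interval is $\Delta$-independent, and it is nonempty because the slide does not pass the next validator.

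\textbf{Your treatment of the three-cluster configuration rests on the wrong mechanism, and a single jump need not be profitable.} Suppose a validator from the cluster at $x$ moves to $c=(x+y)/2$. The radii at $x$ and $c$ stay $d(x,y)$ and $d(x,y)/2$. What changes is that $c$ together with the $y$-cluster now forms a quorum, so the $y$-cluster's radius falls from $d(x,y)$ to $d(x,y)/2$.

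With $\Delta>0$ this costs the jumper on users left of $x$. For $\Delta\ge d(x,y)$, her share at each such $u$ changes by $-\frac{d(x,y)}{2\Delta}\bigl(\frac{1}{m-1}-\frac{1}{n}\bigr)<0$. So if $\nu$ puts almost all its mass on $[0,x]$, the jump is unprofitable; symmetrically for a jump from $y$.

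The missing idea is to add the two symmetric jumps. The paper shows $M'_{n-m+1}(u)+M'_m(u)\ge M_{n-m+1}(u)+M_m(u)$ for every $u$, strictly on $(x,y)$. By full support, at least one of the two jumps is therefore profitable. The obstruction is the asymmetry of $\nu$ combined with $\Delta>0$, not the residue of $n$ modulo $3$, so your fallback does not address it.

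Finally, for $n=4$ there is no middle cluster: the survivor is two validators at each of $x$ and $y$. The paper handles this case separately by comparing validator $2$ jumping to $y$ with validator $3$ jumping to $x$, again through the summed shares.
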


We remind the reader of the model and the relevant notation for this section.

\begin{enumerate}
    \item $D_i(v) = \min_{S \subseteq [n]: |S| = m} \max_{j \in S} d(v_i,v_j)$ where $m = \lfloor 2n/3 \rfloor + 1$ denotes the quorum radius.
    \item $L_i(u,v) = d(u, v_i) + D_i(v)$ denotes the time to inclusion for a user at $u$ who sends to validator $i$. 
    \item $S(u, v) = \arg\min_i L_i(u,v)$ denotes the set of validators who minimize the time to inclusion for location $u$.
    \item $S_\tau (u,v) = \arg\min_i \lceil (L_i(u,v) + \tau) / \Delta \rceil$ denotes the set of validators who can first include a transaction sent from $u$ at time $\tau$ in a block.
    \item Validator $i$'s share of users at location $u$ and total market share are given by
    \[
        M_i(u, v) = \frac{1}{\Delta} \int_0^\Delta \frac{\1(i \in S_\tau(u,v))}{|S_\tau(u,v)|} \,\mathrm{d}\tau \quad \text{and} \quad M_i(v) = \int_0^1 M_i(u,v) \,\mathrm{d}\nu(u),
    \]
    respectively.
    \item $\Delta_i(u,v) = L_i(u,v) - \min_j L_j(u,v)$ denotes how much longer it takes validator $i$ to propagate a transaction from location $u$ compared to her optimal validator.
    \item $U_i(v) = \{u : i \in S(u, v)\}$ denotes the set of users whose latency-minimizing validators include validator $i$.
\end{enumerate}

\subsection{Proof Overview}

We first show that co-location of all validators at any location is an equilibrium.
The proof is straightforward and essentially follows from the triangle inequality.
Consequently, the result holds for all metric spaces, not just the unit interval.

Demonstrating that all validators co-locate at any pure equilibrium requires much more technical heavy-lifting.
We first appeal to Lemma~\ref{lemma:profitable-deviation-helper} to show that any local deviation that decreases a validator's quorum radius without decreasing those of her peers is strictly profitable (see Lemma~\ref{lemma:profitable-deviation}).
This rules out almost all validator configurations except those that take on a certain structure.
These problematic configurations all have $n-m+1$ validators at each of two points and the remaining validators at the midpoint.
We show that while these configurations are immune to local deviations, a validator in one of the $(n-m+1)$-sized clusters strictly profits from re-locating to the midpoint (see Lemma~\ref{lemma:block-time-specific-configuration}).
The proof is a brute-force computation of market shares.

\subsection{Proof of Theorem~\ref{theorem:colocation-unique-block-time}}

\subsubsection{Sufficiency Of Co-location}

\begin{theorem}\label{theorem:colocation-sufficient}
Let $n \geq 4$, and let $v$ denote a configuration of validators.
If $v_i = v_j$ for all proposers $i$ and $j$, then $v$ is an equilibrium.
This result holds for all metric spaces.
\end{theorem}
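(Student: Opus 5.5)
Fix the co-located configuration with every validator at a point $x$, and let validator $i$ deviate to some $y \neq x$. The plan is to show that every user weakly prefers the cluster at $x$ to the lone validator at $y$. From this it will follow that $i$'s share after deviating is at most the $1/n$ she earns by staying at $x$.

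\textbf{Step 1: market share at co-location.} At co-location every $D_j = 0$ and all $L_j(u,v) = d(u,x)$ coincide. Hence $S_\tau(u,v) = [n]$ for every $u$ and $\tau$, and each validator's share is exactly $1/n$, whatever $\Delta$ is (and under the sophisticated model as well).

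\textbf{Step 2: quorum radii after the deviation.} After the move, $n-1$ validators remain at $x$. Since $n \geq 4$, we have $m = \lfloor 2n/3\rfloor + 1 \leq n-1$. This is the only place $n \geq 4$ is used: for $n=3$, $m = 3 > 2$. Consequently each $j \neq i$ still has a quorum at $x$, so $D_j = 0$ and $L_j(u) = d(u,x)$. For $i$, any $m$-set either contains some $j$ at $x$ or consists only of $i$, which is impossible because $m \geq 2$. So $D_i(y, v_{-i}) = d(y,x)$. The triangle inequality then gives
\[
L_i(u) = d(u,y) + d(y,x) \geq d(u,x) = L_j(u)
\]
for every $j \neq i$ and every $u$. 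This argument uses only metric properties, which is why the statement holds in all metric spaces.

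\textbf{Step 3: comparing shares.} Since $L_i(u) \geq \min_j L_j(u)$ at every $u$, with the minimum attained by the $n-1$ cluster members, we get $\lceil (L_i+\tau)/\Delta\rceil \geq \lceil (L_j+\tau)/\Delta\rceil$ for each $j \neq i$. So for every $\tau$, whenever $i \in S_\tau$, all $n-1$ cluster validators are also in $S_\tau$. Then $|S_\tau| = n$ and $i$'s share of that transaction is $1/n$; otherwise her share is $0$. Thus $M_i(u,\cdot) \leq 1/n$ pointwise, and integrating against $\nu$ gives $M_i(y, v_{-i}) \leq 1/n = M_i(v)$. The same holds for $\Delta = 0$ via $S(u,\cdot)$.

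\textbf{Main obstacle.} There is no substantial difficulty. The one point that needs care is the ceiling and tie-breaking accounting in Step 3: we must confirm that $i$ can never share a slot with only a strict subset of the cluster. This follows because all cluster members have identical $L_j$, so they enter or leave $S_\tau$ together.
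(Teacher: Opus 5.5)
Your proof is correct and follows the paper's argument. With $n \geq 4$ the remaining $n-1 \geq m$ validators keep quorum radius $0$, the deviator's radius becomes $d(y,x)$, and the triangle inequality makes every user weakly prefer the cluster, so the deviator's share is at most $1/n$ at every location. Your Step~3 spells out the ceiling and tie-breaking bookkeeping in the block-time model (the cluster always ties into $S_\tau$ with $i$), which the paper's proof leaves implicit.
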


\begin{proof}
Each validator has market share $1/n$ at $v$.
Consider validator $i$'s market share if she deviates to $v'_i \not= v_i$.
We abuse notation and let $L_{-i}$ and $v_{-i}$ denote the time to inclusion and location, respectively, of the remaining validators.
For all locations $u$, $L_{-i}(u, v'_i, v_{-i}) = d(u,v_{-i})$ since $n \geq m + 1$ when $n \geq 4$.
Meanwhile, $L_i(u, v'_i, v_{-i}) = d(u, v'_i) + d(v'_i, v_{-i}) \geq L_{-i}(u, v'_i, v_{-i}) = d(u,v_{-i})$ by the triangle inequality.
It follows that $M_i(u, v'_i, v_{-i}) \leq 1/n$ for all $u$ (so $M_i(v'_i, v_{-i}) \leq 1/n$).
\qed \end{proof}

\subsubsection{Necessity Of Co-location}

\begin{lemma}\label{lemma:market-share-intervals}
$U_i(v)$ is a closed interval containing $v_i$.
Moreover, if the interval $(v_i, v_j)$ contains no validators, then $U_i(v) \cap U_j(v) \not= \varnothing$.
\end{lemma}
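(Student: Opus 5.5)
The plan is to derive Lemma~\ref{lemma:market-share-intervals} directly from Lemma~\ref{lemma:general-market-share-intervals}. That lemma gives exactly these two conclusions (parts 1 and 2) under the hypothesis $D_i(v) - D_j(v) \leq d(v_i, v_j)$ for all validators $i,j$. Its proof also applies here because $\mu = 1 \in [0,1]$. So the only work is to check the hypothesis in the co-location setting, where
\[
D_i(v) = \min_{S \subseteq [n] : |S| = m} \max_{k \in S} d(v_i, v_k).
\]

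To check it, I will show that quorum radius is $1$-Lipschitz in the validator's own location. Fix validators $i$ and $j$, and let $S^\ast$ be a minimizing quorum for validator $j$, so $|S^\ast| = m$ and $\max_{k \in S^\ast} d(v_j, v_k) = D_j(v)$. The same index set $S^\ast$ is a feasible quorum for validator $i$. The triangle inequality gives $d(v_i, v_k) \leq d(v_i, v_j) + d(v_j, v_k)$ for every $k \in S^\ast$. Taking the maximum over $k \in S^\ast$ and then the minimum over quorums yields $D_i(v) \leq d(v_i, v_j) + D_j(v)$. This is the required inequality.

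One subtlety is that the quorum is defined over index sets $S \subseteq [n]$ and does not have to contain the validator herself. Since the feasible sets of indices are the same for $i$ and $j$, the reuse of $S^\ast$ is valid regardless. Another is that when $v_i = v_j$ the inequality holds with equality, since the two quorum radii coincide, so degenerate cases need no separate treatment.

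With the hypothesis verified, part 1 of Lemma~\ref{lemma:general-market-share-intervals} gives that $U_i(v)$ is a closed interval containing $v_i$, and part 2 gives $U_i(v) \cap U_j(v) \neq \varnothing$ whenever $(v_i, v_j)$ contains no validators. I do not expect a real obstacle. The one thing to double-check is that the general lemma's proof does not use $\mu < 1$ anywhere except in part 3, which is not needed here.
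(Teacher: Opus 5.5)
Your proposal is correct and matches the paper's proof. Both reduce to Lemma~\ref{lemma:general-market-share-intervals} and verify $D_i(v) - D_j(v) \leq d(v_i, v_j)$ by reusing validator $j$'s nearest quorum as a feasible quorum for $i$ and applying the triangle inequality.
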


\begin{proof}
By Lemma~\ref{lemma:general-market-share-intervals}, it suffices to show that $D_i(v) - D_j(v) \leq d(v_i, v_j)$ for all validators $i$ and $j$. 
Let $S$ denote validator $j$'s nearest quorum.
\begin{align*}
    L_i(v_i, v) 
        &\leq \max \{d(v_i, v_k): k \in S\} \\
        &\leq \max \{d(v_i, v_j) + d(v_j, v_k): k \in S\} \\
        &\leq d(v_i, v_j) + D_j(v) \\
        &\leq L_j(v_i, v)
\end{align*}
\qed \end{proof}

\begin{lemma}\label{lemma:profitable-deviation}
Suppose the user distribution $\nu$ has full support, and let $v$ denote a configuration of validators.
If there exists a location $v'_i \not= v_i$ such that 
\begin{enumerate}
    \item $D_i(v'_i, v_{-i}) = D_i(v) - d(v_i, v'_i)$ 
    \item $D_j(v'_i, v_{-i}) \geq D_j(v)$ for all $j \not=i$
    \item $\max\{\sup_j \{v_j : v_j < v_i\}, 0\} < v'_i < \min\{\inf_j\{v_j : v_j > v_i\}, 1\}$,\footnote{We take the supremum and infimum in case there are no validators strictly to the left or right of validator $i$.}
\end{enumerate}
that is, it is possible for validator $i$ to decrease her own quorum radius without decreasing those of her opponents nor passing them, then $M_i(v'_i, v_{-i}) > M_i(v)$.
\end{lemma}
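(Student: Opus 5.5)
Lemma~\ref{lemma:profitable-deviation-helper} already gives $M_i(v'_i,v_{-i}) \geq M_i(v)$, so it remains to show that the equality conditions there fail on a set of positive $\nu$-measure. Assume without loss of generality that $v'_i > v_i$; the case $v'_i < v_i$ is symmetric. Let $r = \min\{\inf_j\{v_j : v_j > v_i\}, 1\}$, so by the third condition $v_i < v'_i < r$ and there are no validators in $(v_i, r)$. Since $\nu$ has full support, every nonempty open subinterval of $(v_i, v'_i)$ has positive mass. It therefore suffices to find such an open interval on which, for every $u$, neither of the two equality alternatives of Lemma~\ref{lemma:profitable-deviation-helper} holds.

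\textbf{Case 1: the old market $U_i(v)$ extends strictly to the right of $v_i$,} say up to $b > v_i$. Take $u \in (v_i, \min\{b, v'_i\})$, so $\Delta_i(u,v)=0$ and $\eta(u)=2d(u,v_i)>0$. The second alternative would need $\Delta_i(u,v) \geq \Delta + \eta(u) > 0$, which is impossible. So the first alternative must hold: $\delta(u) < 0$ and $\Delta_j(u,v) \geq \Delta$ for all $j \neq i$. When $\Delta = 0$ (read as the sophisticated limit), the requirement becomes that $i$ already uniquely wins at $u$; in that case I would argue directly. Since $S(u,v'_i,v_{-i})=\{i\}$ whenever $\delta(u)<0$, the gain must come from the portion of the market beyond the old right boundary $b$. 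I would show that just to the right of $b$ the new $L_i$ falls strictly below every competitor: at $b$ there is a tie with some $k$ (by Lemma~\ref{lemma:market-share-intervals}, provided $b<1$), and $L_i$ drops by $2d(v_i,v'_i)$ there. This yields a positive-measure set of users who switch to $i$. If $b = 1$, then $i$ already owned all of $(v_i,1]$, and I would instead argue that ties with co-located validators are broken in $i$'s favour after the deviation. When $\Delta > 0$, I would pick a user $u$ near $b$ at which some competitor $k$ has $\Delta_k(u,v)<\Delta$. Such a user exists by continuity, since $\Delta_k(b,v)=0$ for the neighbouring validator $k$. At that user the first alternative fails.

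\textbf{Case 2: $U_i(v)$ does not extend right of $v_i$.} Then for $u$ slightly to the right of $v_i$ we have $\Delta_i(u,v) > 0$. By continuity, $\Delta_i(u,v) \to \Delta_i(v_i,v) = 0$ as $u \downarrow v_i$ (this holds because $v_i \in U_i(v)$ by Lemma~\ref{lemma:market-share-intervals}), while $\eta(u) \to 0$ as well. I would compare the rates: $\Delta_i(u,v)$ grows at slope at most $2$, with equality exactly when the minimizer lies to the right. So the second alternative $\Delta_i \geq \Delta + \eta$ fails for $\Delta > 0$ near $v_i$. The first alternative also fails, since $\Delta_i(u,v)>0$ there. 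For $\Delta = 0$, I would instead show that $\delta(u)<0$ on a neighbourhood, which captures new users.

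The main obstacle is the bookkeeping in these boundary cases, especially the sophisticated limit $\Delta=0$ and the case where $i$'s old market is degenerate on the right. In those cases the slopes match exactly and I must exhibit strict capture of users near the competitor boundary rather than rely on the inequality chain. I would handle this first via the tie at the market boundary guaranteed by part 2 of Lemma~\ref{lemma:general-market-share-intervals}.
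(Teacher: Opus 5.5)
Your structure matches the paper's. You use the helper lemma for the weak inequality, then split on whether $U_i(v)$ extends strictly right of $v_i$, which is the paper's $a>v_i$ versus $a=v_i$. Case 1 is essentially the paper's argument. Your Case 2 argument for $\Delta>0$ is correct and somewhat cleaner than the paper's: on all of $(v_i,v'_i)$, $\Delta_i(u,v)>0$ rules out the first alternative, and $\Delta_i(u,v)\le 2d(u,v_i)=\eta(u)<\Delta+\eta(u)$ rules out the second, without determining the sign of $\delta(u)$.

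The gap is Case 2 in the sophisticated model, where you propose to show $\delta(u)<0$ on a neighbourhood. That is false: in Case 2, $\delta(u)\ge 0$ for every $u\in(v_i,1]$. To see this, condition 2 gives $L_k(u,v'_i,v_{-i})\ge L_k(u,v)$ for $k\neq i$, so $\delta(u)<0$ if and only if $\Delta_i(u,v)<\eta(u)$. Now let $j$ be the first validator strictly right of $v_i$. Since $\max U_i(v)=v_i$ and $U_i(v)\cap U_j(v)\neq\varnothing$, we get $j\in S(v_i,v)$. This forces $D_j(v)=d(v_j,v_{i_R})$, where $i_R$ is the right end of $i$'s nearest quorum. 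Hence $\Delta_i(u,v)=2d(u,v_i)\ge\eta(u)$ on $(v_i,v_j]$, and $\Delta_i(u,v)\ge L_i(u,v)-L_j(u,v)=2d(v_i,v_j)>\eta(u)$ beyond $v_j$. For a concrete instance, take $n=7$ (so $m=5$) with validators at $0$, $1/2$, $3/4$ and four at $1$, and let the validator at $1/2$ move to $1/2+\varepsilon$. Her new latency coincides with $L_j$ on $(1/2,1/2+\varepsilon]$ and is never strictly below the old minimum, so nobody is captured outright.

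The gain in this case comes entirely from ties, and your own slope bound gives it. For $u\in(v_i,v'_i)$,
$L_i(u,v'_i,v_{-i})=L_i(v_i,v)-d(u,v_i)\le\min_k L_k(u,v)\le\min_{k\neq i}L_k(u,v'_i,v_{-i})$.
This uses that $\min_k L_k(\cdot,v)$ is $1$-Lipschitz, that $v_i\in U_i(v)$, and condition 2. So $i\in S(u,v'_i,v_{-i})$ while $i\notin S(u,v)$, a gain of at least $1/n$ per user on a set of positive $\nu$-measure.

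There is also a smaller loose end. Your fallback for $b=1$ (``ties with co-located validators are broken in $i$'s favour'') is not an argument, and it would not cover $i$ winning $(v_i,1]$ uniquely. The case is in fact vacuous, as the paper shows. Condition 1 with $v'_i>v_i$ forces a validator strictly right of $v_i$, and the first such $j$ has $\Delta_i(v_j,v)>0$, so $b<v_j\le 1$. Finally, the drop of $L_i$ at $b$ is $\eta(b)$, which equals $2d(v_i,v'_i)$ only if $b\ge v'_i$; this is harmless, since $\eta(b)>0$ either way.
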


\begin{proof}
Assume without loss of generality that $v_i' > v_i$, that is, $v'_i$ lies to the right of $v_i$.
By Lemma~\ref{lemma:profitable-deviation-helper}, $M_i(v'_i, v_{-i}) \geq M_i(v)$, with equality only if for $\nu$-almost every $u \in (v_i,1]$,
\begin{enumerate}
    \item $\Delta_i(u,v) = 0$ and $\Delta_j(u,v) \geq \Delta$ for all $j \not= i$ if $\delta(u) < 0$ and
    \item $\Delta_i(u, v) \geq \Delta + \eta(u)$ if $\delta(u) \geq 0$
\end{enumerate}
where
\begin{align*}
    \delta(u) &= \min_j L_j(u,v'_i, v_{-i}) - \min_j L_j(u,v) \\
    \eta(u) &= \begin{cases}
        2d(u, v_i) & u \in (v_i, v'_i) \\
        2d(v_i, v'_i) & u \in [v'_i, 1].
    \end{cases}
\end{align*}
To demonstrate the result, we show that there exists a ($\Delta$-independent)\footnote{So that the result holds when taking $\Delta$ to 0, i.e., in the sophisticated user model. 
Recall that if, for all $\Delta > 0$, $M^\Delta_i(u, p'_i, p_{-i}) \geq M^\Delta_i(u, p)$ for all $u$, then $M_i(p'_i, p_{-i}) \geq M_i(p)$ as well.
If, in addition, $M^\Delta_i(u, p'_i, p_{-i}) > M^\Delta_i(u, p)$ for a $\Delta$-independent set of users with positive $\nu$-measure, then $M_i(p'_i, p_{-i}) > M_i(p)$ as well.} positive measure of users $u \in (v_i, 1]$ for which either
\begin{enumerate}
    \item $\delta(u) < 0$ but $\Delta_i(u,v) > 0$ or
    \item $\delta(u) \geq 0$ but $\Delta_i(u, v) < \Delta + \eta(u)$.
\end{enumerate}

By Lemma~\ref{lemma:market-share-intervals}, there exists a maximum $a \geq v_i$ such that $\Delta_i(u,v) = 0$ for all $u \in (v_i, a]$.
Note that $a < 1$.
To see why, let $i_L$ and $i_R$ denote left- and rightmost validators, respectively in validator $i$'s nearest quorum, and let $j$ denote the first validator strictly to the right of $v_i$.
Since validator $i$ reduces her quorum by moving right, validator $j$ must exist, $v_{i_R} \geq v_j$, and $d(v_i, v_{i_L}) < d(v_i, v_{i_R}) = D_i(v)$.
Note that $D_j(v) \leq \max\{d(v_j, v_{i_L}), d(v_j, v_{i_R})\}$.
By Lemma~\ref{lemma:market-share-intervals}, $\Delta_{j}(v_{j},v) = 0$, so
\begin{align*}
    \Delta_i(v_{j}, v) 
        &= L_i(v_{j}, v) - L_{j}(v_{j}, v) \\
        &= d(v_{j}, v_i) + D_i(v) - D_{j}(v) \\
        &\geq d(v_{j}, v_i) + d(v_i, v_{i_R}) - \max\{d(v_j, v_{i_L}), d(v_j, v_{i_R})\} \\
        &> 0.
\end{align*}
It follows that $a < v_{j} \leq 1$.

Now, consider $u \in (\max\{a, v'_i\}, \min\{a + d(v_i, v'_i), v_j\})$.
For these $u$, $\Delta_i(u,v) > 0$ by definition of $a$ and $\Delta_j(u, v) = 0$ by Lemma~\ref{lemma:market-share-intervals}.
Moreover,
\begin{align*}
    \delta(u) 
        &\leq L_i(u, v'_i, v_{-i}) - L_j(u, v) \\
        &= (L_i(u,v) - 2d(v_i, v'_i)) - L_j(u, v) \\
        &= (L_i(a,v) + d(a,u)) - 2d(v_i, v'_i) - (L_j(a, v) - d(a,u)) \\
        &= 2 (d(a,u) - d(v_i, v'_i)) \tag{$L_i(a,v) = L_j(a, v)$} \\
        &< 0 \tag{$u \in (a, a + d(v_i, v'_i))$}.
\end{align*}
Since $\nu$ has full support and $v'_i < v_j$ by the third condition in the lemma statement, these users constitute a set of positive measure if $a > v_i$.\footnote{Importantly, this interval of positive measure does not vary with $\Delta$, so the result holds as $\Delta$ approaches 0.}

Otherwise, $a = v_i$, in which case we show that there exists a positive measure of users $u \in (v_i, 1]$ for which $\delta(u) \geq 0$ and $\Delta_i(u,v) < \Delta + \eta(u)$.
Recall that since validator $i$ reduces her quorum radius by moving right, $v_{i_R} \geq v_j > v'_i$.
Moreover, note that $a = v_i$ implies that $D_j(v) = d(v_j, v_{i_R})$: since $i, j \in S(v_i, v)$, 
\[
    d(v_i, v_j) + d(v_j, v_{i_R}) = d(v_i, v_{i_R}) = L_i(v_i, v) = L_j(v_i, v) = d(v_i, v_j) + D_j(v)
\]
Rearranging yields the desired equality.
It follows that for $u \in (v_i, v_j]$,
\begin{align*}
    L_i(u, v'_i, v_{-i}) & - \min_k L_k(u, v) \\
        &= L_i(u, v) - \eta(u) - L_j(u,v) \\
        &= (d(u, v_i) + d(v_i, v_{i_R})) - \eta(u) - (d(u, v_j) + d(v_j, v_{i_R})) \\
        &= (d(u, v_i) + d(v_i, v_{i_R})) - \eta(u) - d(u, v_{i_R}) \\
        &= 2d(u, v_i) - \eta(u) \\
        &\geq 0 \tag{$\eta(u) \leq 2d(u, v_i)$}.
\end{align*}
Since $D_k(v'_i, v_{-i}) \geq D_k(v)$ for all validators $k \not= i$, we have that $\delta(u) \geq 0$ for all such users.
Moreover, for users $u \in (v_i, v'_i)$,
\begin{align*}
    \Delta_i(u, v) 
        &= (d(u, v_i) + d(v_i, v_{i_R})) - d(u, v_{i_R}) = 2d(u, v_i) = \eta(u) < \Delta + \eta(u).
\end{align*}
Since $\nu$ has full support, there is a positive measure of users in this interval.\footnote{Importantly, this interval of positive measure does not vary with $\Delta$, so the result holds as $\Delta$ approaches 0.}
\qed \end{proof}

\begin{lemma}\label{lemma:colocated-quorum-implies-colocation}
Suppose the user distribution $\nu$ has full support and that the configuration  $v$ of validators is a pure equilibrium.
If there exists an $m$-sized set of co-located validators, then in fact all validators are co-located.
\end{lemma}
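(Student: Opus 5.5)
The plan is to exploit the fact that an $m$-sized co-located cluster dominates every validator outside it, and then show that some outside validator gains by joining the cluster. Let $C$ be the set of validators at the common location $x$, with $|C| \geq m$, so $D_c(v) = 0$ and $L_c(u,v) = d(u,x)$ for every $c \in C$. Suppose, for contradiction, that some validator $k$ has $v_k \neq x$.

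\emph{Step 1: the cluster weakly dominates $k$.} Since $m = \lfloor 2n/3\rfloor + 1 > 2n/3$, we have $n - m < n/3 < m$. Hence every $m$-subset of validators contains a member of $C$. It follows that $D_k(v) \geq d(v_k, x)$, and therefore $L_k(u,v) \geq d(u,v_k) + d(v_k,x) \geq d(u,x) = L_c(u,v)$ by the triangle inequality. So for every $u$ and $\tau$, $k \in S_\tau(u,v)$ implies $C \subseteq S_\tau(u,v)$, and thus $M_k(u,v) \leq M_c(u,v)$ pointwise.

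Moreover, domination is strict on a positive-measure set. For users $u$ on the side of $x$ opposite to $v_k$, we get $L_k(u,v) - L_c(u,v) \geq 2d(v_k,x) > 0$. By Lemma~\ref{lemma:block-time-market-share-identity}, $k$ is excluded from $S_\tau(u,v)$ for a fraction at least $\min\{1, 2d(v_k,x)/\Delta\}$ of times $\tau$ (all times when $\Delta = 0$). Since $\nu$ has full support, we need this set of users to be non-null; if $x$ is an endpoint, we instead use users just beyond $x$ near $v_k$'s side, where the gap is still strict when $D_k(v) > d(v_k,x)$, or handle that case separately. Together these give $M_k(v) < M_c(v)$ for every $c \in C$.

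\emph{Step 2: a profitable deviation.} Let $k$ deviate to $x$. In $(x, v_{-k})$, validator $k$ is indistinguishable from the cluster members, so $M_k(x, v_{-k}) = M_c(x, v_{-k})$. What remains is to show $M_c(x,v_{-k}) \geq M_c(v)$, or at least $M_c(x, v_{-k}) > M_k(v)$. For validators outside $C \cup \{k\}$, the bound $D_j \geq d(v_j,x)$ still holds after the move, so the enlarged cluster still weakly dominates them. Users who previously split between $C$ and $k$ now split among $C \cup \{k\}$. I would organize the comparison by summing shares: in both configurations the cluster together with $k$ captures all users for whom the cluster is optimal. After the move, $k$ receives exactly a $1/(|C|+1)$ share of the combined mass held by $C \cup \{k\}$, while before it held strictly less than each member of $C$, hence strictly less than a $1/(|C|+1)$ share of that mass.

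\emph{Main obstacle.} The delicate point is that moving $k$ to $x$ can change the quorum radii of the remaining outside validators $j$, possibly decreasing them if $k$ was the binding member of $j$'s nearest quorum. This could let some $j$ steal users from the cluster, so the combined mass of $C \cup \{k\}$ is not obviously preserved. I would first try choosing $k$ to be the outside validator farthest from $x$ (on its side). Then I would argue that for any $j$ the relevant lower bound $D_j \geq d(v_j,x)$ is what controls competition with the cluster, and this bound is unchanged. Consequently the set of users where the cluster is weakly optimal can only grow when $k$ joins.

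If this fails, the fallback is to use Lemma~\ref{lemma:profitable-deviation}. We would move the extreme outside validator $k$ a small step toward $x$. Her quorum radius drops by exactly the step, since her farthest quorum member lies on her side or at $x$. The step does not lower anyone else's radius, because $k$ is extremal and so cannot be the binding member of anyone's nearest quorum. It also does not pass any validator. This gives a strict improvement and contradicts equilibrium.
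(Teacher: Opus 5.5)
Your fallback is essentially the paper's proof: a small step toward the cluster, followed by Lemma~\ref{lemma:profitable-deviation}. However, both of your routes are missing the one fact that makes them work. For every validator $j \notin C$, $D_j = d(v_j, x)$ \emph{exactly}, no matter where the other non-cluster validators sit. You proved the lower bound, since every $m$-set meets $C$ because $n - m < m$. The upper bound is simply that $m$ members of $C$ form a feasible quorum at distance $d(v_j,x)$ from $v_j$.

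Without this equality, your treatment of the ``main obstacle'' does not go through. Knowing only $D_j \geq d(v_j,x)$ after the move does not control shares. In the block-time model, $M_k(u,\cdot)$ depends on the exact gaps $\Delta_j(u,\cdot)$. A smaller $D_j$ would put $j$ into $S_\tau(u,\cdot)$ for more $\tau$ and dilute the cluster, so ``the set where the cluster is weakly optimal can only grow'' does not give what you need.

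The same issue affects the fallback. ``$k$ is extremal and so cannot be the binding member of anyone's nearest quorum'' is not the right justification of condition~2. If $v_j$ is the midpoint of $v_k$ and $x$, then $k$ sits at maximal distance in one of $j$'s nearest quorums. What actually prevents any $D_j$ from dropping is that $C$ pins it at $d(v_j,x)$. With the identity, no radius other than $k$'s changes under any relocation of $k$, so the extremal choice is unnecessary. The paper applies the step to an arbitrary $i$ with $v_i \neq v_A$.

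Once the identity is in place, your primary jump-to-$x$ route is also correct and gives a genuinely different, self-contained proof. In both configurations $\min_i L_i(u,\cdot) = d(u,x)$, and $\Delta_j(u,\cdot)$ is unchanged for all $j \neq k$. Lemma~\ref{lemma:block-time-market-share-identity} then yields, for every $u$,
\[
    M_k(u, x, v_{-k}) - M_k(u, v) = \frac{1}{\Delta} \int_0^{\min\{\Delta_k(u,v), \Delta\}} \frac{\mathrm{d}s}{|\{i : \Delta_i(u, x, v_{-k}) \leq s\}|} \geq 0,
\]
and $\Delta_k(u,v) = 2\min\{d(u,v_k), d(v_k,x)\} > 0$ for every $u$ strictly on the $x$-side of $v_k$. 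In particular, the open interval between $v_k$ and $x$ already gives a $\Delta$-independent set of positive $\nu$-measure. So your endpoint case split in Step~1 is unnecessary, and its clause ``when $D_k(v) > d(v_k,x)$'' never occurs.

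This route uses the same mechanism as Theorem~\ref{theorem:colocation-sufficient}. It gives a pointwise improvement at every user and avoids Lemma~\ref{lemma:profitable-deviation} altogether. The paper's route is shorter only because that lemma is needed anyway for Theorem~\ref{theorem:colocation-necessary}.
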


\begin{proof}
Let $A$ denote the $m$-sized set of co-located validators and $v_A$ their location.
We show that any validator $i$ with $v_i \not= v_A$ would have a profitable deviation by way of Lemma~\ref{lemma:profitable-deviation}.
To see why, note that
\[
    D_j(v'_{A^c}, v_A) = d(v'_j, v_A)
\]
for all validators $j \not\in A$ and any placement $v'_{A^c}$ of validators not in $A$ since $n - m < m$.
Thus, a deviation to 
\[
    v_i + \varepsilon \cdot (\1(v_i < v_A) - \1(v_i > v_A))
\]
by a validator $i$ with $v_i \not= v_A$ would satisfy the conditions of Lemma~\ref{lemma:profitable-deviation} for sufficiently small $\varepsilon > 0$, yielding a profitable deviation and contradicting the assumption that $v$ is a pure equilibrium.
\qed \end{proof}

\begin{lemma}\label{lemma:block-time-specific-configuration}
Suppose the user distribution $\nu$ has full support.
When $n \geq 5$, the following configuration of validators is not an equilibrium: 
\begin{enumerate}
    \item $n-m+1$ validators at $x \in [0,1)$
    \item $n-m+1$ validators at $y \in (x,1]$
    \item The remaining validators at $(x+y)/2$.
\end{enumerate}
\end{lemma}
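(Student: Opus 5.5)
Throughout, write $h = (y-x)/2$, $z = (x+y)/2$, and $r = n - 2(n-m+1) = 2m-n-2$ for the number of validators at $z$. I will exhibit one explicit deviation: a validator $i$ at $x$ moves to $z$. I will then compare her market share before and after pointwise in $u$, using the identity of Lemma~\ref{lemma:block-time-market-share-identity}.

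\textbf{Step 1: quorum radii.} At $v$, the $x$-cluster together with the midpoint has $(n-m+1)+r = m-1$ validators. So every validator at $x$ (and symmetrically at $y$) needs one validator from the opposite end, and $D = 2h$ there. A validator at $z$ needs one validator from either end, so $D = h$ there.

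After the deviation there are $n-m$ validators at $x$, $r+1$ at $z$, and $n-m+1$ at $y$. The radii become:
\begin{itemize}
\item at $x$: the radius stays $2h$, since $n-m+r+1 = m-1$;
\item at $z$: the radius stays $h$;
\item at $y$: the radius drops to $h$, since $(n-m+1)+(r+1) = m$.
\end{itemize}
This is exactly why Lemma~\ref{lemma:profitable-deviation} does not apply: the deviation decreases an opponent's radius. It is also why a direct computation is needed.

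\textbf{Step 2: the gap vectors $\Delta_k(u,\cdot)$ on three regions.} The computation is elementary from $L_k = |u-v_k| + D_k$.
\begin{itemize}
\item \emph{Region $[0,x]$.} Before the deviation, the $x$- and $z$-validators tie (gap $0$, $m-1$ of them) and the $y$-validators have gap $2h$. After, the $x$- and $z$-validators again tie with $m-1$ members, including $i$, and the $y$-validators have gap only $h$.
\item \emph{Region $(x,y)$.} Before, $i$ has gap $2(u-x)>0$ against the midpoint, which is optimal throughout. After, $i$ sits in the optimal $z$-group (of size $r+1$) on $(x,(z+y)/2)$; past that point the $y$-group takes over.
\item \emph{Region $[y,1]$.} Before, $i$ has gap $2h$. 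After, $i$ has gap $h$, since the $y$-group now has radius $h$.
\end{itemize}

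\textbf{Step 3: the sophisticated case $\Delta \to 0$.} Before the deviation, $i$'s share is $\nu([0,x])/(m-1)$. After, it is $\nu([0,x])/(m-1) + \nu((x,(z+y)/2))/(r+1)$. This is a strict increase because $\nu$ has full support. I will check that the gain comes from a $\Delta$-independent interval, so the inequality survives the limit, as in the footnotes to Lemma~\ref{lemma:profitable-deviation}.

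\textbf{Step 4: positive $\Delta$, the main obstacle.} For positive $\Delta$, Lemma~\ref{lemma:block-time-market-share-identity} turns each region into a piecewise-constant integral over $s\in[0,\Delta]$.

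On $[0,x]$ the deviation can \emph{hurt}. Once $s$ exceeds the $y$-group's gap, which is now $h$ instead of $2h$, that group joins the tie. The resulting per-user loss is at most
\[
\frac{\min\{h,(\Delta-h)^+\}}{\Delta}\Bigl(\frac{1}{m-1}-\frac1n\Bigr).
\]
On $[y,1]$ there is a gain of $\frac{\min\{h,(\Delta-h)^+\}}{\Delta}\cdot\frac{1}{n}$ per user. On the middle region the gain is large: for $s$ below the middle gap, $i$ now receives at least $1/(r+1)$ where she previously received $0$.

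For $\Delta\le h$ the loss on $[0,x]$ vanishes and we are done. For $\Delta>h$ I need to show the total gain on $(x,y)\cup[y,1]$ dominates the loss on $[0,x]$. Since $\nu$ is arbitrary with full support, this may fail for the specific deviation when $\nu([0,x])$ is large. In that case I would first use the freedom to deviate from whichever end cluster has the smaller tail mass, that is, $x$ if $\nu([0,x])\le\nu([y,1])$ and otherwise the mirror move from $y$. With that choice the tail loss is matched by the $[y,1]$ gain up to the factor comparing $1/(m-1)-1/n$ with $1/n$. The strict middle-region gain must then absorb the remainder. If this bookkeeping does not close, the fallback is to consider a validator at $z$ deviating toward an end, or to compare integrals per value of $s$ more finely.

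I expect this inequality chase for large $\Delta$ to be the brute-force heart of the proof.
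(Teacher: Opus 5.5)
Your Steps 1--3 are sound. One small correction: a midpoint validator needs validators from \emph{both} ends, since the midpoint group plus one end cluster has only $m-1$ members, but her radius is $h$ either way. Steps 1--3 do settle $\Delta=0$ and, as you say, $\Delta\le h$. The gap is Step 4, and it is not just unfinished bookkeeping. Treating the middle region as pure gain is wrong, and so no rule that picks the deviating cluster by comparing $\nu([0,x])$ with $\nu([y,1])$ can work.

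The effect you found on $[0,x]$ (the $y$-cluster's radius dropping from $2h$ to $h$) also hits users just right of $x$. Take $u=x+t$ with $2t\le h$. Before the deviation, the deviator shared with $m-1$ validators for all $s\in[2t,2h)$; afterwards she shares with all $n$ once $s\ge h$. Her per-user change is therefore
\[
\frac{1}{\Delta}\left(\frac{\min\{2t,\Delta\}}{2m-n-1}-\min\{h,(\Delta-h)^+\}\left(\frac{1}{m-1}-\frac{1}{n}\right)\right),
\]
which is strictly negative for small $t$ whenever $\Delta>h$. Now take $x=0$, $\Delta>2h$, and an atomless full-support $\nu$ with all but a tiny fraction of its mass in $(0,\epsilon)$. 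Your rule selects the $x$-deviation (as $\nu([0,x])=0$), and that deviation is strictly unprofitable. Your fallback of moving a midpoint validator does not help either: any relocation of hers leaves every other radius unchanged and weakly increases her own time to inclusion for every user.

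The missing idea is to never choose between the two end clusters. The paper considers both mirror deviations at once: validator $n-m+1$ from $x$ to $z$, and validator $m$ from $y$ to $z$. It proves the pointwise inequality $M'_{n-m+1}(u)+M'_m(u)\ge M_{n-m+1}(u,v)+M_m(u,v)$ for every $u$, strict on $(x,y)$. The pairing is exactly what absorbs the losses above. When the $y$-validator moves to $z$, the $x$-cluster together with the midpoint group becomes a full quorum of radius $h$. So for users on $[0,x]$ and just right of $x$, her gap drops, and she collects share $1/m$ on a window of $s$ where she previously had $0$ or $1/n$. The case analysis of $\Delta$ against $h$, $2h$, $3h$ then closes using $m\ge 2$, $n/2+1\le m\le n$ and $n\le m(m-1)$. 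Integrating against any full-support $\nu$ gives a strictly positive total for the sum. Hence at least one of the two unilateral deviations is strictly profitable, with no assumption on where $\nu$ puts its mass.
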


\begin{proof}
Let $v_1 \leq \dots \leq v_n$ denote the profile of validator locations in the statement, so
\begin{enumerate}
\itemsep 0em
    \item $v_1 = v_{n-m+1} = x$
    \item $v_{n-m+2} = v_{m-1} = (x+y)/2$
    \item $v_{m} = v_n = y$.
\end{enumerate}
We show that either validator $n-m+1$ or validator $m$ strictly profits from deviating to the midpoint.
We compute their market shares of the users at $u \in [0,1]$ when the validators locate according to $v$ and when either validator deviates.
We abuse notation and write $L_x$, $L_y$, and $L_{(x+y)/2}$ to denote $L_i$ for validators $i$ at $x$, $y$, and $(x+y)/2$, respectively.
We also use 
\begin{enumerate}
\itemsep 0em
    \item $M'_{n-m+1}(u)$ to denote $M_{n-m+1}(u, (x+y)/2, v_{-(n-m+1)})$, validator $n-m+1$'s market share of the users at $u$ when she deviates to the midpoint. 
        Note that in this case, there are $n-m$ validators at $x$, $2m-n-1$ validators at $(x+y)/2$, and $n-m+1$ validators at $y$.
    \item $M'_{m}(u)$ to denote $M_{m}(u, (x+y)/2, v_{-m})$, validator $m$'s market share of the users at $u$ when she deviates to the midpoint.
        Note that in this case, there are $n-m+1$ validators at $x$, $2m-n-1$ validators at $(x+y)/2$, and $n-m$ validators at $y$.
    \item $M(u)$ to denote $M_{n-m+1}(u,v) + M_m(u,v)$
    \item $M'(u)$ to denote $M'_{n-m+1}(u) + M'_m(u)$.
\end{enumerate}
For convenience, throughout the proof, we drop the common $1/\Delta$ factor from the market share expression.

For users $u \in [0, v_1]$,
\begin{align*}
    &\begin{aligned}
        L_x(u,v) 
            &= d(u, y) \\
        L_{(x+y)/2}(u,v)
            &= d(u, y) \\
        L_y(u,v)
            &= d(u,y) + d(x,y) \\
        M_{n-m+1}(u,v)
            &= \int_0^{\min\{d(x,y), \Delta\}} \frac{1}{m-1} \,\mathrm{d}s + \int_{\min\{d(x,y), \Delta\}}^\Delta \frac{1}{n} \,\mathrm{d}s \\
            &= \frac{\min\{d(x,y), \Delta\}}{m-1} + \frac{\Delta - \min\{d(x,y), \Delta\}}{n} \\
        M_m(u,v)
            &= \frac{\Delta - \min\{d(x,y), \Delta\}}{n}
    \end{aligned} \\
    &\begin{aligned}
        L_x(u, (x+y)/2, v_{-(n-m+1)})
            &= d(u,y) \\
        L_{(x+y)/2}(u, (x+y)/2, v_{-(n-m+1)})
            &= d(u,y) \\
        L_y(u, (x+y)/2, v_{-(n-m+1)})
            &= d(u,y) + d(x,y)/2
    \end{aligned} \\
    &\begin{aligned}
        M'_{n-m+1}(u)
            &= \int_0^{\min\{d(x,y)/2, \Delta\}} \frac{1}{m-1} \,\mathrm{d}s + \int_{\min\{d(x,y)/2, \Delta\}}^\Delta \frac{1}{n} \,\mathrm{d}s \\
            &= \frac{\min\{d(x,y)/2, \Delta\}}{m-1} + \frac{\Delta - \min\{d(x,y)/2, \Delta\}}{n}
    \end{aligned} \\
    &\begin{aligned}
        L_x(u, (x+y)/2, v_{-m})
            &= d(u,(x+y)/2) \\
        L_{(x+y)/2}(u, (x+y)/2, v_{-m})
            &= d(u,(x+y)/2) + d(x,y)/2 \\
        L_y(u, (x+y)/2, v_{-m})
            &= d(u,y) + d(x,y) = d(u, (x+y)/2) + 3d(x,y)/2 
    \end{aligned} \\
    &\begin{aligned}
        M'_{m}(u)
            &= \int_{\min\{d(x,y)/2, \Delta\}}^{\min\{3d(x,y)/2, \Delta\}} \frac{1}{m} \,\mathrm{d}s + \int_{\min\{3d(x,y)/2, \Delta\}}^\Delta \frac{1}{n} \,\mathrm{d}s \\
            &= \frac{\min\{3d(x,y)/2, \Delta\} - \min\{d(x,y)/2, \Delta\}}{m} + \frac{\Delta - \min\{3d(x,y)/2, \Delta\}}{n}
    \end{aligned}
\end{align*}

\noindent We now examine $M'(u) - M(u)$ for $u \in [0, v_1]$.
\begin{align*}
    M'(u) & - M(u) \\
        &= \begin{aligned}[t]
            &\frac{\min\{d(x,y)/2, \Delta\} - \min\{d(x,y), \Delta\}}{m-1} \\
            &+ \frac{\min\{3d(x,y)/2, \Delta\} - \min\{d(x,y)/2, \Delta\}}{m} \\
            &+ \frac{2\min\{d(x,y), \Delta\} - \min\{d(x,y)/2, \Delta\} - \min\{3d(x,y)/2, \Delta\}}{n} 
        \end{aligned}
\end{align*}
If $d(x,y) \leq 2\Delta/3$, then
\[
    M'(u) - M(u) = \frac{d(x,y)}{m} - \frac{d(x,y)}{2(m-1)} \geq 0
\]
when $m \geq 2$ (with equality if and only if $m = 2$).
If $d(x,y) \in (2\Delta/3, \Delta]$, then
\begin{align*}
    M'(u) - M(u)
        &= \frac{\Delta - d(x,y)/2}{m} + \frac{3d(x,y)/2 - \Delta}{n} - \frac{d(x,y)}{2(m-1)} \\
        &= (\Delta - d(x,y)/2) \cdot \Paren{\frac{1}{m} - \frac{1}{n}} + d(x,y) \cdot \Paren{\frac{1}{n} - \frac{1}{2(m-1)}} \\
        &> 0
\end{align*}
when $n/2 + 1 \leq m \leq n$.
If $d(x,y) \in (\Delta, 2\Delta]$, then
\[
    M'(u) - M(u) = (\Delta - d(x,y)/2) \cdot \Paren{\frac{1}{m} + \frac{1}{n} - \frac{1}{m-1}} \geq 0
\]
when $n \leq m(m-1)$ (with equality if and only if $n = m(m-1)$).
If $d(x,y) > 2\Delta$, then $M'(u) = M(u)$.

For users $u \in (x, (x+y)/2]$,
\begin{align*}
    L_x(u,v) 
        &= d(u, x) + d(x,y) = 2d(u,x) + d(u,y) \\
    L_{(x+y)/2}(u,v)
        &= d(u, y) \\
    L_y(u,v)
        &= d(u,y) + d(x,y) \\
    M_{n-m+1}(u,v) 
        &= \int_{\min\{2d(u,x), \Delta\}}^{\min\{d(x,y), \Delta\}} \frac{1}{m-1} \,\mathrm{d}s + \int_{\min\{d(x,y), \Delta\}}^\Delta \frac{1}{n} \,\mathrm{d}s \tag{$d(u,x) \leq d(x,y)/2$} \\
        &= \frac{\min\{d(x,y), \Delta\} - \min\{2d(u,x), \Delta\}}{m-1} + \frac{\Delta - \min\{d(x,y), \Delta\}}{n} \\
    M_m(u,v)
        &= \frac{\Delta - \min\{d(x,y), \Delta\}}{n}
\end{align*} 
\begin{align*}
    L_x(u, (x+y)/2, v_{-(n-m+1)})
        &= d(u,x) + d(x,y) = 2d(u,x) + d(u,y) \\
    L_{(x+y)/2}(u, (x+y)/2, v_{-(n-m+1)})
        &= d(u,y) \\
    L_y(u, (x+y)/2, v_{-(n-m+1)})
        &= d(u,y) + d(x,y)/2 \\
    L_x(u, (x+y)/2, v_{-m})
        &= d(u,x) + d(x,y)/2 \\
    L_{(x+y)/2}(u, (x+y)/2, v_{-m})
        &= d(u, (x+y)/2) + d(x,y)/2 \\
    L_y(u, (x+y)/2, v_{-m})
        &= d(u,y) + d(x,y)
\end{align*}

\noindent We split the computation $M'_{n-m+1}(u)$ and $M'_m(u)$ into two cases, depending on whether $d(u,x)$ exceeds $d(u,(x+y)/2)$.
For users $u \in (x, (3x+y)/4]$,
\begin{align*}
    M'_{n-m+1}(u)
        = {} &\int_0^{\min\{2d(u,x), \Delta\}} \frac{1}{2m-n-1} \,\mathrm{d}s + \int_{\min\{2d(u,x), \Delta\}}^{\min\{d(x,y)/2, \Delta\}} \frac{1}{m-1} \,\mathrm{d}s \\
            &+ \int_{\min\{d(x,y)/2, \Delta\}}^\Delta \frac{1}{n} \,\mathrm{d}s \tag{$d(u,x) \leq d(x,y)/4$} \\
        = {} &\frac{\min\{2d(u,x), \Delta\}}{2m-n-1} + \frac{\min\{d(x,y)/2, \Delta\} - \min\{2d(u,x), \Delta\}}{m-1} \\
            &+ \frac{\Delta - \min\{d(x,y)/2, \Delta\}}{n} 
\end{align*}
\begin{align*}
    M'_{m}(u)
        = {} & \int_{\min\{d(x,y)/2 - 2d(u,x), \Delta\}}^{\min\{3d(x,y)/2 - 2d(u,x), \Delta\}} \frac{1}{m} \,\mathrm{d}s + \int_{\min\{3d(x,y)/2 - 2d(u,x), \Delta\}}^\Delta \frac{1}{n} \,\mathrm{d}s \tag{$d(u,x) \leq d(u, (x+y)/2)$} \\
        = {} &\frac{\min\{3d(x,y)/2 - 2d(u,x), \Delta\} - \min\{d(x,y)/2 - 2d(u,x), \Delta\}}{m} \\
            &+ \frac{\Delta - \min\{3d(x,y)/2 - 2d(u,x), \Delta\}}{n}
\end{align*}
\begin{align*}
    M' & (u) - M(u)
        = \frac{\min\{2d(u,x), \Delta\}}{2m-n-1} + \frac{\min\{d(x,y)/2, \Delta\} - \min\{d(x,y), \Delta\}}{m-1} \\
            &+ \frac{\min\{3d(x,y)/2 - 2d(u,x), \Delta\} - \min\{d(x,y)/2 - 2d(u,x), \Delta\}}{m} \\
            &+ \frac{2\min\{d(x,y), \Delta\} - \min\{d(x,y)/2, \Delta\} - \min\{3d(x,y)/2 - 2d(u,x), \Delta\}}{n}
\end{align*}

\noindent If $3d(x,y)/2 - 2d(u,x) \leq \Delta$, then
\begin{align*}
    M'(u) - M(u)
        &= \frac{2d(u,x)}{2m-n-1} - \frac{d(x,y)}{2(m-1)} + \frac{d(x,y)}{m} + \frac{2d(u,x)}{n} \\
        &\geq 2d(u,x) \cdot \Paren{\frac{1}{2m-n-1} + \frac{1}{n}} \tag{$m \geq 2$} \\
        &> 0
\end{align*}
If $3d(x,y)/2 - 2d(u,x) > \Delta$ but $d(x,y) \leq \Delta$, then
\begin{align*}
    M'(u) - M(u)
        &= \begin{aligned}[t]
            &\frac{2d(u,x)}{2m-n-1} - \frac{d(x,y)}{2(m-1)} + \frac{\Delta - d(x,y)/2 + 2d(u,x)}{m} \\
            &+ \frac{3d(x,y)/2 - \Delta}{n} 
        \end{aligned} \\
        &> (\Delta - d(x,y)/2) \cdot \Paren{\frac{1}{m} - \frac{1}{n}} + d(x,y) \cdot \Paren{\frac{1}{n} - \frac{1}{2(m-1)}} \\
        &\geq 0
\end{align*}
when $n/2 + 1 \leq m \leq n$.
If $d(x,y) \in (\Delta, 2\Delta]$, then
\begin{align*}
    M'(u) & - M(u) \\
        &= 2d(u,x) \cdot \Paren{\frac{1}{2m-n-1} + \frac{1}{m}} + (\Delta - d(x,y)/2) \cdot \Paren{\frac{1}{m} + \frac{1}{n} - \frac{1}{m-1}} \\
        &> 0
\end{align*}
when $n \leq m(m-1)$.
If $d(x,y) > 2\Delta$ but $d(x,y)/2 - 2d(u,x) \leq \Delta$, then 
\[
    M'(u) - M(u) = \frac{\min\{2d(u,x), \Delta\}}{2m-n-1} + \frac{\Delta - \min\{d(x,y)/2 - 2d(u,x), \Delta\}}{m} > 0
\]
If $d(x,y)/2 - 2d(u,x) > \Delta$, then
\[
    M'(u) - M(u) = \frac{\min\{2d(u,x), \Delta\}}{2m-n-1} > 0
\]

Now, for users $u \in ((3x+y)/4, (x+y)/2]$,
{\allowdisplaybreaks
\begin{align*}
    M'_{n-m+1}(u)
        = {} &\int_0^{\min\{d(x,y)/2, \Delta\}} \frac{1}{2m-n-1} \,\mathrm{d}s + \int^{\min\{2d(u,x), \Delta\}}_{\min\{d(x,y)/2, \Delta\}} \frac{1}{m} \,\mathrm{d}s \\
            &+ \int_{\min\{2d(u,x), \Delta\}}^\Delta \frac{1}{n} \,\mathrm{d}s \tag{$d(u,x) > d(x,y)/4$} \\
        = {} &\frac{\min\{d(x,y)/2, \Delta\}}{2m-n-1} + \frac{\min\{2d(u,x), \Delta\} - \min\{d(x,y)/2, \Delta\}}{m} \\
            &+ \frac{\Delta - \min\{2d(u,x), \Delta\}}{n} \\
    M'_{m}(u)
        = {} &\int_0^{\min\{2d(u,x) - d(x,y)/2, \Delta\}} \frac{1}{2m-n-1} \,\mathrm{d}s \\
            &+ \int_{\min\{2d(u,x) - d(x,y)/2, \Delta\}}^{\min\{d(x,y), \Delta\}} \frac{1}{m} \,\mathrm{d}s + \int_{\min\{d(x,y), \Delta\}}^\Delta \frac{1}{n} \,\mathrm{d}s \\
        = {} &\frac{\min\{2d(u,x) - d(x,y)/2, \Delta\}}{2m-n-1} \\
            &+ \frac{\min\{d(x,y), \Delta\} - \min\{2d(u,x) - d(x,y)/2, \Delta\}}{m} \\
            &+ \frac{\Delta - \min\{d(x,y), \Delta\}}{n} \\
    M'(u) - M(u)
        = {} &\frac{\min\{d(x, y)/2, \Delta\} + \min\{2d(u,x) - d(x,y)/2, \Delta\}}{2m-n-1} \\
            &- \frac{\min\{d(x,y), \Delta\} - \min\{2d(u,x), \Delta\}}{m-1} \\
            &+ \frac{\min\{2d(u,x), \Delta\} + \min\{d(x,y), \Delta\}}{m} \\
            &- \frac{\min\{d(x,y)/2, \Delta\} + \min\{2d(u,x) - d(x,y)/2, \Delta\}}{m} \\
            &+ \frac{\min\{d(x,y), \Delta\} - \min\{2d(u,x), \Delta\}}{n}
\end{align*}
}

\noindent If $d(x,y) \leq \Delta$, then
\begin{align*}
    & M'(u) - M(u) \\
        &= \frac{2d(u,x)}{2m-n-1} - \frac{d(x,y) - 2d(u,x)}{m-1} + \frac{d(x,y)}{m} + \frac{d(x,y) - 2d(u,x)}{n} \\
        &= 2d(u,x) \cdot \Paren{\frac{1}{2m-n-1} + \frac{1}{m}} + (d(x,y) - 2d(u,x)) \cdot \Paren{\frac{1}{n} + \frac{1}{m} - \frac{1}{m-1}} \\
        &> 0
\end{align*}
when $n \leq m(m-1)$.
If $d(x,y) > \Delta$ but $2d(u,x) \leq \Delta$, then
\begin{align*}
    M'(u) & - M(u) \\
        &= \frac{2d(u,x)}{2m-n-1} - \frac{\Delta - 2d(u,x)}{m-1} + \frac{\Delta}{m} + \frac{\Delta - 2d(u,x)}{n} \\
        &= 2d(u,x) \cdot \Paren{\frac{1}{2m-n-1} + \frac{1}{m}} + (\Delta - 2d(u,x)) \cdot \Paren{\frac{1}{n} + \frac{1}{m} - \frac{1}{m-1}} \\
        &> 0
\end{align*}
when $n \leq m(m-1)$. 
If $2d(u,x) > \Delta$ but $d(x,y)/2 \leq \Delta$, then
\begin{align*}
    M'(u) - M(u) 
        &= \frac{2d(u,x)}{2m-n-1} + \frac{2\Delta - 2d(u,x)}{m} \\
        &\geq \frac{2d(u,x)}{2m-n-1} + \frac{2\Delta - d(x,y)}{m} \tag{$2d(u,x) \leq d(x,y)$} \\
        &> 0
\end{align*}
If $d(x,y) > 2\Delta$, then
\begin{align*}
    M'(u) & - M(u) \\
        &= \frac{\Delta + \min\{2d(u,x) - d(x,y)/2, \Delta\}}{2m-n-1} + \frac{\Delta - \min\{2d(u,x) - d(x,y)/2, \Delta\}}{m} \\
        &> 0
\end{align*}

Now, note that by symmetry, $M'(u) > M(u)$ for all $u \in [(x+y)/2, y)$ and $M'(u) \geq M(u)$ for all $u \in [y, 1]$ with equality only if $d(x,y) > 2\Delta$.
Thus, either $M'_{n-m+1}((x+y)/2, v_{-(n-m+1)}) > M_{n-m+1}(v)$ or $M'_{m}((x+y)/2, v_{-m}) > M_m(v)$ since $\nu$ has full support and market share is simply the average market share from each location.
\qed \end{proof}

\begin{lemma}\label{lemma:block-time-specific-configuration-n=4}
When $n = 4$, the following configuration of validators is not an equilibrium, even if the user distribution $\nu$ does not have full support: 
\begin{enumerate}
    \item Two validators at $x \in [0,1)$
    \item Two validators at $y \in (x,1]$.
\end{enumerate}
\end{lemma}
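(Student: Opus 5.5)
The plan is to mirror the proof of Lemma~\ref{lemma:block-time-specific-configuration}. I will show that a validator at $x$ deviating to $y$, or a validator at $y$ deviating to $x$, is strictly profitable. The method is to compare the \emph{sum} of the two deviators' shares with the sum of their original shares, pointwise in $u$.

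With $n=4$ we have $m=\lfloor 8/3\rfloor+1=3$. Write $d=d(x,y)>0$. In the original configuration, every validator's nearest quorum consists of herself, her partner, and one validator at the other point, so $D_i(v)=d$ for all $i$. Hence $L_x(u,v)=d(u,x)+d$ and $L_y(u,v)=d(u,y)+d$.

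\textbf{Step 1: the original shares sum to $1/2$.} The two validators at $x$ always tie with each other, and so do the two at $y$. Consequently, at every $u$ and every $\tau$, one $x$-validator and one $y$-validator together receive exactly half of the transaction. By Lemma~\ref{lemma:block-time-market-share-identity}, the original shares of one $x$-validator and one $y$-validator therefore sum to exactly $1/2$ at every user location $u$, for every $\Delta>0$ and also for $\Delta=0$.

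\textbf{Step 2: each deviator gets at least $1/4$ everywhere.} After a deviation, three validators sit at one point, say $c$, and form a cluster with quorum radius $0$. The lone validator at the other point $\ell$ has quorum radius $d$. So $L_c(u)=d(u,c)$, and the lone validator has
\[
\Delta_\ell(u)=d(u,\ell)+d-d(u,c)\ge 0
\]
by the triangle inequality. The three cluster members always lie in $S_\tau$. By Lemma~\ref{lemma:block-time-market-share-identity}, the deviator's share at $u$ is
\[
\frac{\min\{\Delta_\ell(u),\Delta\}}{3\Delta}+\frac{\Delta-\min\{\Delta_\ell(u),\Delta\}}{4\Delta}\;\ge\;\frac14 .
\]
This inequality is strict whenever $\Delta_\ell(u)>0$. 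In the sophisticated model the share is $1/3$ whenever $\Delta_\ell(u)>0$, and again at least $1/4$ otherwise.

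\textbf{Step 3: strictness at every $u$.} On the line, $\Delta_\ell(u)=0$ exactly when $\ell$ lies between $c$ and $u$, that is, when $u$ is on the far side of $\ell$. I will check the two deviations separately.
\begin{itemize}
\item If the $y$-validator moves to $x$, then $\ell=y$, and $\Delta_\ell(u)=0$ only for $u\ge y$.
\item If the $x$-validator moves to $y$, then $\ell=x$, and $\Delta_\ell(u)=0$ only for $u\le x$.
\end{itemize}
These two sets are disjoint because $x<y$. So at every $u\in[0,1]$, at least one deviator has share strictly greater than $1/4$. The sum of the two deviators' shares is then strictly greater than $1/2$, which is the original sum from Step 1.

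\textbf{Step 4: conclude.} I will integrate against $\nu$; this uses only that $\nu$ is a probability measure. The result is
\[
M'_{x\to y}+M'_{y\to x}>M_x(v)+M_y(v),
\]
so at least one of the two deviations is strictly profitable. Since strictness holds at \emph{every} $u$ rather than merely on a set of positive measure, no full-support assumption is needed. This is exactly the claim that the conclusion holds even if $\nu$ lacks full support.

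The main thing to be careful about is the pointwise strictness in Step 3, in particular that it survives both $\Delta\to 0$ and large $\Delta$. Strictness needs only $\Delta_\ell(u)>0$, which does not depend on $\Delta$, so I expect this to be unproblematic.
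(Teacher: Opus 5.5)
Your proof is correct and takes essentially the same route as the paper: both compare, pointwise in $u$, the combined shares of an $x$-validator deviating to $y$ and a $y$-validator deviating to $x$ against the original combined share, and obtain strict inequality at every $u$, which is why full support is not needed. The difference is only presentational: you derive the original sum $1/2$ from tie-symmetry and the bound $\ge 1/4$ (strict unless $u$ lies on the far side of the lone validator) from the triangle inequality, whereas the paper computes each share explicitly on $[0,x]$, $(x,(x+y)/2]$, $((x+y)/2,y)$ and $[y,1]$.
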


\begin{proof}
Let $v_1 = v_2 = x$ and $v_3 = v_4 = y$.
We show that either validator 2 strictly profits from re-locating to $y$ or validator 3 strictly profits from re-locating to $x$.
For concision, let $L_i(u) = L_i(u,v)$, $L'_i(u) = L_i(u, y, v_{-2})$, and $L''_i(u, x, v_{-3})$.
Define $M_i$, $M'_i$, and $M''_i$, as well as $\Delta_i$, $\Delta'_i$, and $\Delta''_i$, similarly.

For all locations $u \in [0,1]$,
\begin{align*}
    L_{1}(u) = L_{2}(u) &= d(u, x) + d(x,y) \\
    L_{3}(u) = L_{4}(u) &= d(u, y) + d(x,y) \\
    L'_1(u) &= d(u,x) + d(x,y) \\
    L'_{2}(u) = L'_{3}(u) = L'_{4}(u) &= d(u, y) \\
    L''_{1}(u) = L''_{2}(u) = L''_{3}(u) &= d(u, x) \\
    L''_4(u) &= d(u,y) + d(x,y) \\
\end{align*}
Thus, 
\begin{align*}
    \Delta_1(u) = \Delta_2(u) &= \begin{cases}
        0 & u \in [0, (x+y)/2] \\
        d(u,x) - d(u, y) & u \in ((x+y)/2, y) \\
        d(x,y) & u \in [y,1] \\
    \end{cases} \\
    \Delta_3(u) = \Delta_4(u) &= \begin{cases}
        d(x,y) & u \in [0, x] \\
        d(u,y) - d(u, x) & u \in (x, (x+y)/2) \\
        0 & u \in [(x+y)/2,1] \\
    \end{cases} \\
    \Delta'_1(u) &= \begin{cases}
        0 & u \in [0, x] \\
        2d(u,x) & u \in (x, y) \\
        2d(x,y) & u \in [y,1] \\
    \end{cases} \\
    \Delta'_2(u) = \Delta'_3(u) = \Delta'_4(u) &= 0 \\
    \Delta''_1(u) = \Delta''_2(u) = \Delta''_3(u) &= 0 \\
    \Delta''_4(u) &= \begin{cases}
        2d(x,y) & u \in [0, x] \\
        2d(u,y) & u \in (x, y) \\
        0 & u \in [y,1] \\
    \end{cases} \\
\end{align*}
For $u \in [0, x]$,
\begin{align*}
    M_2(u) 
        &= \frac{1}{\Delta} \cdot \Paren{\frac{\min\{d(x,y), \Delta\}}{2} + \frac{\Delta - \min\{d(x,y), \Delta\}}{4}} \\ 
        &= \frac{1}{4} + \frac{\min\{d(x,y), \Delta\}}{4 \Delta} \\
    M_3(u)
        &= \frac{1}{4} - \frac{\min\{d(x,y), \Delta\}}{4 \Delta} \\
    M'_2(u) &= \frac{1}{4} \\
    M''_3(u) &= \frac{1}{\Delta} \Paren{\frac{\min\{2d(x,y), \Delta\}}{3} + \frac{\Delta - \min\{2d(x,y), \Delta\}}{4}} \\
        &= \frac{1}{4} + \frac{\min\{2d(x,y), \Delta\}}{12 \Delta}.
\end{align*} 
For $u \in (x, (x+y)/2]$,
\begin{align*}
    M_2(u)
        &= \frac{1}{\Delta} \Paren{\frac{\min\{d(u,y) - d(u,x), \Delta\}}{2} + \frac{\Delta - \min\{d(u,y) - d(u,x), \Delta\}}{4}} \\
        &= \frac{1}{4} + \frac{\min\{d(u,y) - d(u,x), \Delta\}}{4 \Delta} \\
    M_3(u)
        &= \frac{1}{4} - \frac{\min\{d(u,y) - d(u,x), \Delta\}}{4 \Delta} \\
    M'_2(u)
        &= \frac{1}{\Delta} \Paren{\frac{\min\{2d(u,x), \Delta\}}{3} + \frac{\Delta - \min\{2d(u,x), \Delta\}}{4}} \\
        &= \frac{1}{4} + \frac{\min\{2d(u,x), \Delta\}}{12 \Delta} \\
    M''_3(u)
        &= \frac{1}{\Delta} \Paren{\frac{\min\{2d(u,y), \Delta\}}{3} + \frac{\Delta - \min\{2d(u,y), \Delta\}}{4}} \\
        &= \frac{1}{4} + \frac{\min\{2d(u,y), \Delta\}}{12 \Delta}.
\end{align*}
For $u \in ((x+y)/2, y)$,
\begin{align*}
    M_2(u)
        &= \frac{1}{\Delta} \cdot \frac{\Delta - \min\{d(u,x) - d(u,y), \Delta\}}{4} \\
        &= \frac{1}{4} - \frac{\min\{d(u,x) - d(u,y), \Delta\}}{4 \Delta} \\
    M_3(u)
        &= \frac{1}{4} + \frac{\min\{d(u,x) - d(u,y), \Delta\}}{4 \Delta} \\
    M'_2(u)
        &= \frac{1}{4} + \frac{\min\{2d(u,x), \Delta\}}{12 \Delta} \\
    M''_3(u)
        &= \frac{1}{4} + \frac{\min\{2d(u,y), \Delta\}}{12 \Delta}.
\end{align*}
For $u \in [y,1]$,
\begin{align*}
    M_2(u)
        &= \frac{1}{4} - \frac{\min\{d(x,y), \Delta\}}{4 \Delta} \\
    M_3(u)
        &= \frac{1}{4} + \frac{\min\{d(x,y), \Delta\}}{4 \Delta} \\
    M'_2(u)
        &= \frac{1}{4} + \frac{\min\{2d(x,y), \Delta\}}{12 \Delta} \\
    M''_3(u)
        &= \frac{1}{4}
\end{align*}
Since $M'_2(u) + M''_3(u) > M_2(u) + M_3(u)$ for all locations $u$, it follows that either re-locating to $y$ is strictly profitable for validator 2 or re-locating to $x$ is strictly profitable for validator 3.
\qed \end{proof}

\begin{theorem}\label{theorem:colocation-necessary}
Suppose $n \geq 4$ and that the user distribution $\nu$ has full support.
A configuration $v$ of validators is a pure equilibrium only if $v_i = v_j$ for all validators $i$ and $j$, that is, all validators are co-located.
\end{theorem}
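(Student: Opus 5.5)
We argue by contradiction. Suppose $v$ is a pure equilibrium in which not all validators are co-located. By Lemma~\ref{lemma:colocated-quorum-implies-colocation}, no location hosts $m$ or more validators, so every validator has strictly positive quorum radius. The idea is to apply Lemma~\ref{lemma:profitable-deviation} with tiny local moves to constrain the structure of $v$ until only the configuration of Lemma~\ref{lemma:block-time-specific-configuration} (for $n\ge 5$) or of Lemma~\ref{lemma:block-time-specific-configuration-n=4} (for $n=4$) remains. Each of those configurations has a profitable deviation, which gives the contradiction.

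\textbf{Step 1: the midpoint property.} For each validator $i$, let $i_L$ and $i_R$ be the extreme members of a nearest quorum. If $d(v_i,v_{i_L})<d(v_i,v_{i_R})$, then a move of size $\varepsilon$ toward $v_{i_R}$ lowers $D_i$ by exactly $\varepsilon$ for small $\varepsilon$. For small $\varepsilon$ this move also does not pass any other validator, so conditions 1 and 3 of Lemma~\ref{lemma:profitable-deviation} hold. Condition 2 can fail only if the move strictly lowers some $D_j$. That happens only when $i$ is a \emph{critical} farthest member of $j$'s nearest quorum on the side toward which $i$ moves: every $m$-subset achieving $D_j$ must use $i$, and $i$ lies on the side of $v_j$ opposite to $v_{i_R}$. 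So in equilibrium, every validator is either at the midpoint of its quorum, or it is such a critical farthest point for some $j$.

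\textbf{Step 2: the extreme clusters.} Let $x$ be the leftmost occupied location, holding $c_x$ validators, and take $i$ at $x$. Since $c_x<m$, its quorum extends strictly right, so $i$ is not at its midpoint. By Step 1, some $j$ to the right of $x$ must need every validator at $x$, with $x$ as its left boundary. I would turn this into counting. The quorum of $j$ is the $m$ validators nearest $v_j$, it contains all $c_x$ validators at $x$, and it cannot drop any of them. Comparing with the rightmost cluster at $y$ by the symmetric argument, I expect to show that $v_j$ is equidistant from $x$ and $y$, and that the quorum of $j$ is the whole set of validators in $[x,y]$ minus exactly what $n-m$ allows. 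The next task is to show that no validator lies strictly between $x$ and $y$ other than at $(x+y)/2$. An off-midpoint interior validator would be at distance strictly less than $d(x,y)/2$ from one end. I would show that such a validator cannot be critical for anyone, and also cannot sit at its own quorum midpoint without a profitable move. Finally, each validator at $x$ must be indispensable to the quorum of a midpoint validator, which holds exactly when $c_x=n-m+1$; fewer makes it dispensable, and more violates $c_x$ plus the midpoint count. Similarly $c_y=n-m+1$.

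\textbf{Step 3: conclude.} The surviving configuration has $n-m+1$ validators at each of $x<y$ and the remaining $2m-n-2$ at $(x+y)/2$. For $n\ge5$ this contradicts Lemma~\ref{lemma:block-time-specific-configuration}. For $n=4$ we have $m=3$, so the midpoint is empty and we get the two-pairs configuration, which contradicts Lemma~\ref{lemma:block-time-specific-configuration-n=4}.

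The main obstacle is Step 2. The criticality condition couples the validators, so the casework on who is whose critical farthest point, including ties among nearest quorums, is delicate. I would organize it by processing locations from the outside in and using Lemma~\ref{lemma:market-share-intervals} to control adjacent validators.
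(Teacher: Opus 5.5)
Your outline follows the paper's overall strategy: Lemma~\ref{lemma:colocated-quorum-implies-colocation}, then local moves certified by Lemma~\ref{lemma:profitable-deviation}, then Lemmas~\ref{lemma:block-time-specific-configuration} and~\ref{lemma:block-time-specific-configuration-n=4}. However, the pruning step has a genuine gap, and the mechanism you sketch for it is wrong.

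\textbf{The test for condition~2.} Moving $i$ a small amount toward $v_j$ strictly lowers $D_j$ if and only if $d(v_i,v_j)=D_j(v)$ and exactly $m-1$ validators lie \emph{strictly} within distance $D_j(v)$ of $v_j$. Whether $i$ belongs to every nearest quorum of $j$ is neither necessary nor sufficient. Take $n=4$, $m=3$, $v=(0,1/2,1/2,1)$. Validator~1 is absent from the nearest quorum $\{2,3,4\}$ of validator~2, yet moving it right lowers $D_2=D_3$. So your Step~1 would certify a move that the lemma does not cover. In the other direction, validator~3 lies in every nearest quorum of validator~4, yet moving it right leaves $D_4$ unchanged.

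\textbf{The counting in Step~2 points the wrong way.} In the configuration you aim for, with $n-m+1$ validators at each of $x$ and $y$, a validator at $(x+y)/2$ has only $2m-n-2<m-1$ validators strictly inside its radius. So no end validator is critical for it. Instead the $x$-cluster is critical for the $y$-cluster, which puts your $j$ at $y$, not equidistant from $x$ and $y$.

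Conversely, when $c_x+c_y=n-m+1$ (so $c_x,c_y\le n-m$), exactly $m-1$ validators lie strictly inside $(x,y)$. Then every end validator \emph{is} critical for the validators at $(x+y)/2$. This is the paper's second terminal family, $v_{n-m+1}=v_m=(v_1+v_n)/2$, and your outline never confronts it.
\begin{itemize}
\item The configuration $(0,1/2,1/2,1)$ belongs to this family, and it is not the two-pairs configuration your $n=4$ step handles.
\item For $n=7$, the configuration $(0,1/4,1/2,1/2,1/2,1,1)$ contains an interior validator off $(x+y)/2$. It sits at the midpoint of its unique nearest quorum and is critical for nobody. So your validator-by-validator elimination of such points cannot succeed.
\end{itemize}

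\textbf{How the paper closes it.} The paper kills this family with a different deviation. Validator $m$, at $(v_1+v_n)/2$, moves to $v_m+\varepsilon$. This lowers her radius through the off-center nearest quorum $\{n-m+1,\dots,n\}$. It lowers no one else's radius, because at most $n-m<m-1$ validators lie strictly right of $v_m$.

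Isolating exactly these two families also needs the structural chain your Step~2 only promises. The two families are ($v_{n-m+1}=v_1$ and $v_m=v_n$) or ($v_{n-m+1}=v_m$). To get there, the paper shows:
\begin{itemize}
\item the leftmost validator $j$ whose radius tracks the left cluster satisfies $D_j(v)=d(v_j,v_1)$, and in fact $v_j=v_m$;
\item no validators lie in $((v_1+v_m)/2,v_m)$;
\item no validators lie in $(v_{n-m+1},(v_{n-m+1}+v_n)/2)$.
\end{itemize}
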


\begin{proof}
Re-label the validators so that $v_1 \leq \dots \leq v_n$, and suppose by way of contradiction that $v_1 < v_n$.
Let $k = |\{i : v_i = v_1\}|$, so validator $k+1$ is the first validator that is not located at $v_1$.
By Lemma~\ref{lemma:colocated-quorum-implies-colocation}, we focus on the case in which no location supports more than $m-1$ validators, so suppose that $m \geq k+1$.
If for all validators $i \geq k+1$, there exists $\varepsilon > 0$ such that
\begin{align*}
    D_i(v_k + \varepsilon, v_{-k}) &= D_i(v) \\
    v_k + \varepsilon &< \min\{v_{k+1}, (v_k + v_m)/2\},
\end{align*}
then by Lemma~\ref{lemma:profitable-deviation}, validator $k$ would strictly profit from deviating to $v_k + \varepsilon$: the first (and second) condition guarantees that $D_i(v_k + \varepsilon, v_{-k}) \geq D_i(v)$ for all $i \not= k$, while the second condition guarantees that $v_k + \varepsilon < v_{k+1}$ and that $D_k(v_k + \varepsilon, v_{-k}) = D_k(v) - \varepsilon$.
Thus, there must exist a validator $i \geq k+1$ such that for all sufficiently small $\varepsilon$,
\[
    D_i(v_k + \varepsilon, v_{-k}) = D_i(v) - \varepsilon.
\]
That is, her quorum radius depends on validator $k$'s location (as long as validator $k$ stays to the left of her).
Let $j$ denote the leftmost validator satisfying this condition, and note that
\[
    D_j(v) = d(v_j, v_k).
\]

We claim that $v_j \geq v_m$.
To see why, first note that $v_j > (v_k + v_m)/2$ since otherwise, $D_j(v) = d(v_j, v_m)$, so the fact that
\[
    D_j(v_k + \varepsilon, v_{-k}) = D_j(v) - \varepsilon = d(v_j, v_m) - \varepsilon
\]
for all sufficiently small $\varepsilon$ contradicts the fact that even if validator $k$ moves closer to validator $j$ by $\varepsilon$,
\[
    D_j(v_k + \varepsilon, v_{-k}) \geq d(v_j, v_m).
\]
Now, to see why $v_j \not\in ((v_k + v_m)/2, v_m)$, we appeal to the fact that deviating to $v_j - \varepsilon$ for some sufficiently small $\varepsilon$ would decrease validator $j$'s quorum radius by $\varepsilon$ without decreasing the radii of the other validators: the radius of a validator that lies (weakly) to the right of $v_j$ cannot decrease if validator $j$ moves to the left, and the radius of a validator strictly to the left of $v_j$ does not depend on the location of validator $j$ (since this validator must reach validator $m$ anyway).
Thus, by Lemma~\ref{lemma:profitable-deviation}, validator $j$ would strictly profit from this deviation.
Note that this argument holds for any validator $i$ with $D_i(v) = d(v_i, v_k)$, that is, $v_i \not\in ((v_k + v_m)/2, v_m)$ for all such validators.
It follows that $v_j \geq v_m$.

This inequality in fact holds with equality.
To see why, note that $D_j(v) = d(v_j, v_k)$ implies that
\[
    v_j + d(v_k, v_j) \leq v_{m+k}.\footnote{If $m+k > n$, then let $v_{m+k} = +\infty$.}
\]
It follows that $D_i(v) = \max\{d(v_i, v_k), d(v_i, v_m)\}$ for all validators $i \leq j$.
Now, suppose by way of contradiction that $v_j > v_m$, and consider what would happen if validator $j$ deviated from $v_j$ to $v_j - \varepsilon$ for sufficiently small $\varepsilon$.
Certainly, the quorum radius of those weakly to the right of $v_j$ would not decrease, as the deviation moves validator $j$ further away from these validators.
Meanwhile, $v_j - \varepsilon$ does not even enter the radius of validators with $v_i < (v_k + v_j)/2$ for sufficiently small $\varepsilon$, while $v_j$ was already strictly in the radius for validators $i \leq j$ with $v_i > (v_k + v_j)/2$ (since $D_i(v) = d(v_i, v_k)$ for all such validators).
For validators with $v_i = (v_k + v_j) / 2$, a decreased radius as a result of validator $j$'s deviation would imply that there are $m$ validators in $(v_k, v_j)$, which we know is not the case (since $D_j(v) = d(v_j, v_k)$), so this deviation must also not affect the radius of these validators.
Since validator $j$ decreases her radius by exactly $\varepsilon$ by moving to $v_j - \varepsilon$, it follows from Lemma~\ref{lemma:profitable-deviation} that this deviation is strictly profitable, a contradiction.
Thus, $v_j = v_m$.
A direct consequence of this equality and the fact that $D_i(v) = \max\{d(v_i, v_k), d(v_i, v_m)\}$ for all validators $i \leq j$ is that $v_{j-1} \leq (v_k + v_m) / 2$ (recall that $v_i \not\in ((v_k + v_m)/2, v_m)$ for any validators $i$ with $D_i(v) = d(v_i, v_k)$).
Here, we have also used the fact that the minimality of $j$ implies that $v_{j-1} < v_j$.

So far, we have demonstrated that
\begin{enumerate}
\itemsep 0em
    \item The quorum radius of validator $m$ depends on the validators at $v_1$: if a validator at $v_1$ moves $\varepsilon$ to the right, then the radius of validator $m$ decreases by $\varepsilon$ for sufficiently small $\varepsilon$.
    \item There are no validators in $((v_1 + v_m)/2, v_m)$.
    \item If there are at most $n - m$ validators at $v_1$, then $v_n \geq v_m + d(v_1, v_m)$.
\end{enumerate}
By symmetry, we also have that
\begin{enumerate}
\itemsep 0em
    \item The quorum radius of validator $n-m+1$ depends on the validators at $v_n$: if a validator at $v_n$ moves $\varepsilon$ to the left, then the radius of validator $n-m+1$ decreases by $\varepsilon$ for sufficiently small $\varepsilon$.
    \item There are no validators in $(v_{n-m+1}, (v_{n-m+1} + v_n)/2)$.
    \item If there are at most $n - m$ validators at $v_n$, then $v_1 \leq v_{n-m+1} - d(v_{n-m+1}, v_n)$.
\end{enumerate}
Note that no validators lie strictly between validators $n-m+1$ and $m$: since $v_{n-m+1} \geq v_1$ and $v_n \geq v_m$, if either inequality is strict, then
\[
    (v_{n-m+1}, v_m) \subseteq (v_{n-m+1}, (v_{n-m+1} + v_n)/2) \cup ((v_1 + v_m)/2, v_m).
\]
Otherwise, $v_{n-m+1} = v_1$ and $v_n = v_m$, and any validator in $(v_1, v_n)$ locates at $(v_1 + v_n)/2$.
In fact, each of validators $n-m+2, \dots, m-1$ locates at this point: any deviation by one of these validators (as long as her deviation remains within $[v_{n-m+1}, v_m]$) would not change the quorum radius of any other validator, so unless they reside at $(v_1 + v_n)/2$, they can decrease their radius without changing the radius of their opponents, which by Lemma~\ref{lemma:profitable-deviation}, implies that $v$ is not a pure equilibrium.
But note that by Lemmas~\ref{lemma:block-time-specific-configuration} and~\ref{lemma:block-time-specific-configuration-n=4}, this configuration of validators is also not a pure equilibrium, so it cannot be that $v_{n-m+1} = v_1$ and $v_n = v_m$ simultaneously.

These facts imply that $v_{n-m+1} = v_m$.
Otherwise, $v_{n-m+1} \leq (v_1+v_m)/2$, and since 
\[
    D_{n-m+1}(v_n - \varepsilon, v_{-n}) = D_{n-m+1}(v) - \varepsilon
\]
for all sufficiently small $\varepsilon$, we would have that $v_n = v_m$ and $v_{m-1} = v_{n-m+1}$: if $v_n > v_m$, then $D_{n-m+1}(v) = d(v_{n-m+1}, v_m) < d(v_{n-m+1}, v_n)$, and if $v_{m-1} > v_{n-m+1}$, then $v_{m-1} \geq v_m = v_n$, so there would only be $m-2$ validators in $[0, v_n)$.
In either case, validator $n$ moving left would not decrease the radius of validator $n-m+1$.
At the same time, that
\[
    D_{m}(v_1 + \varepsilon, v_{-1}) = D_{m}(v) - \varepsilon
\]
for all sufficiently small $\varepsilon$ implies that $v_{n-m+1} = v_1$: if $v_1 < v_{n-m+1}$, then the fact that $v_n = v_m$ implies that $D_m(v) = d(v_m, v_{n-m+1}) < d(v_m, v_1)$, so validator 1 moving right would not decrease the radius of validator $m$.
However, we have already shown that $v_{n-m+1} = v_1$ and $v_n = v_m$ cannot hold simultaneously.

We are now nearing the conclusion.
Note that $v_{n-m+1} = v_m$ implies that there are at most $n-m$ validators at each of $v_1$ and $v_n$, so $v_m = (v_1 + v_n)/2$.
We claim that moving to $v_m + \varepsilon$ is strictly profitable for validator $m$.
To see why, note that 
\[
    D_i(v_m + \varepsilon, v_{-m}) \geq D_i(v) \,\forall\, i \not= m : v_i \leq v_m
\]
since validator $m$'s deviation moves her further away from the validators weakly to her left.
Meanwhile,
\[
    D_i(v_m + \varepsilon, v_{-m}) = D_i(v) \geq d(v_i, v_{n-m+1}) \,\forall\, i : v_i > v_m = v_{n-m+1}
\]
since the validators strictly to the right of $v_m = v_{n-m+1}$ must reach validator $n-m+1$ anyway.
However,
\[
    D_m(v_m + \varepsilon, v_{-m}) = d(v_m + \varepsilon, v_n) = d(v_m, v_n) - \varepsilon = D_m(v) - \varepsilon
\]
for sufficiently small $\varepsilon$, so by Lemma~\ref{lemma:profitable-deviation}, deviating from $v_m$ to $v_m + \varepsilon$ is strictly profitable for validator $m$, contradicting the assumption that $v$ is an equilibrium.
\qed \end{proof}

\section{Faster Intervalidator Connections}\label{section:fast-connections}

In this section, we examine how faster intervalidator connections, including those provided by decentralized physical infrastructure networks (DePIN) such as DoubleZero, affect validator location decisions when validators simultaneously serve as attesters and proposers.
We show that faster connections make co-location less attractive (see Theorem~\ref{theorem:fast-connections-no-colocation}) and that validator coverage improves as connections speed up, as the number of validators increases, and as the block time decreases, although only to a certain extent in the latter case (see Theorem~\ref{theorem:fast-connections-expected-covering-radius}).

\begin{theorem}\label{theorem:fast-connections-summary}
\FastConnectionsSummary
\end{theorem}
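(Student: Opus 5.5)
The theorem has two parts: (i) co-location is not an equilibrium under the stated thresholds on $\mu$, and (ii) a covering-radius bound for uniform users. The paper previews these as Theorem~\ref{theorem:fast-connections-no-colocation} and Theorem~\ref{theorem:fast-connections-expected-covering-radius}. I would prove them separately and then combine.

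\textbf{Part (i): co-location is not an equilibrium.} Fix a cluster at $x$ and let validator $1$ deviate to $x-\varepsilon$ or $x+\varepsilon$, choosing the side on which $\nu$ has at least half its mass. WLOG the deviation is to the left, to $p_1=x-\varepsilon$. Since $n\ge 4$ gives $n-1\ge m$, the remaining cluster keeps quorum radius $0$, and the deviator's radius is $\varepsilon$.

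For a user $u\le p_1$ we have
\[
L_1(u)=d(u,x)-(1-\mu)\varepsilon < L_{\mathrm{cluster}}(u),
\]
so she is strictly preferred. For $u\in(p_1,x)$ we have $L_1(u)-L_{\mathrm{cluster}}(u)=2d(u,p_1)-(1-\mu)\varepsilon$. For $u\ge x$ the gap is $(1+\mu)\varepsilon$.

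When $\Delta=0$, the deviator therefore captures all of $[0,p_1]$, which has mass tending to $\nu([0,x))\ge 1/2$ as $\varepsilon\to0$. This uses that $\nu$ has no atoms. Since $1/2>1/n$, the deviation is profitable for every $\mu<1$.

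When $\Delta>0$, I would apply Lemma~\ref{lemma:block-time-market-share-identity}.
\begin{itemize}
\item A left user gets share $\tfrac{1}{\Delta}\big[(1-\mu)\varepsilon+(\Delta-(1-\mu)\varepsilon)/n\big]$.
\item A right user gets share $(\Delta-(1+\mu)\varepsilon)/(n\Delta)$.
\end{itemize}
Integrating over $u$ and comparing with $1/n$, the first-order term in $\varepsilon$ is roughly
\[
\nu_L(1-\mu)(1-1/n)-\nu_R(1+\mu)/n .
\]
Here $\nu_L$ and $\nu_R$ are the masses on the two sides, with $\nu_L\ge1/2$. The middle interval has vanishing mass, so it contributes nothing at this order. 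The expression is positive when $(1-\mu)(n-1)>1+\mu$, i.e.\ $\mu<(n-2)/n=1-2/n$, which is the stated threshold. The exact bookkeeping of which side to choose is routine.

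\textbf{Part (ii): the covering-radius bound.} I would use a local-deviation argument at a candidate (possibly mixed) equilibrium. Suppose some user is at distance $r$ from every validator, so there is a gap of length about $2r$.

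First I use Lemma~\ref{lemma:general-market-share-intervals} and Lemma~\ref{lemma:disjoint-market-shares}. Since $\mu<1$, markets of distinct locations essentially partition $[0,1]$, and some validator has equilibrium share at most about $1/n$.

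That validator deviates into the middle of the gap. Her quorum radius there is at most $r$ plus the diameter of a quorum. A user deep inside the gap now prefers her by a margin of order $(1-\mu)\cdot(\text{distance into gap})$. So with $\Delta=0$ she captures an interval of length $\Omega(r)$, and her share is $\Omega(r)$ after a $1-\mu$ loss in the constant.

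With $\Delta>0$, advantages smaller than $\Delta$ yield only fractional shares. By Lemma~\ref{lemma:block-time-market-share-identity} a user with margin $s$ contributes about $\min\{s,\Delta\}/\Delta$. Integrating $\min\{(1-\mu)t,\Delta\}/\Delta$ over $t\in[0,r]$ gives the deviator's share. Requiring this share to be at most $1/n$ (the equilibrium share) yields
\[
(1-\mu)r^2/\Delta\lesssim 1/n \quad\text{or}\quad r\lesssim 1/((1-\mu)n).
\]
The first gives the $\sqrt{\Delta/n}$ term. Replacing $\Delta$ by $\max\{\Delta,1/n\}$ handles the regime where the deviator's own market is coarser than the block time. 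The logarithmic term should come from summing over the harmonic tie-counts $1/|\{i:\Delta_i\le s\}|$ when many validators are within $\Delta$. For mixed equilibria, I would run the same argument in expectation, deviating to a fixed point chosen against the random gap.

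\textbf{Main obstacle.} The hardest step is bounding the deviator's own quorum radius and the induced changes to others' radii in a general non-co-located configuration. Without a clean quorum-diameter bound, the $\Omega(r)$ capture is unclear. I would first try bounding the quorum radius after deviation by the radius of a nearby dense location plus the gap distance, using that a quorum must span an interval whose length relates to the covering radius. I expect the log term to require the most careful integration.
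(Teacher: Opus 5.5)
Your Part (i) is essentially the paper's proof of Theorem~\ref{theorem:fast-connections-no-colocation}: the same deviation toward the heavier half, the same latency gaps $(1-\mu)\varepsilon$ and $(1+\mu)\varepsilon$, and the same first-order comparison giving $\mu<1-2/n$. It is fine.

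\textbf{The capture step in Part (ii).} The gap is exactly at the step you flag as the main obstacle: you never show that the relocated validator wins a window of width proportional to $(1-\mu)r$. The route you sketch is not what is needed. An absolute bound on her quorum radius via some quorum's diameter, plus tracking how her move changes others' radii, misses the point, because only her latency \emph{relative} to her neighbors matters.

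The missing idea is the post-deviation triangle inequality. Suppose validator $i$ moves to $x$ and $k$ is the nearest validator to the right of $x$. Then $D_i(x,v_{-i})\le d(x,v_k)+D_k(x,v_{-i})$, using $k$'s nearest quorum, which may contain $i$ herself. So for $u\in[x,v_k]$,
\[
L_i(u,x,v_{-i})-L_k(u,x,v_{-i})\le 2u-(1+\mu)x-(1-\mu)v_k ,
\]
regardless of the value of $D_k$. Since $D_k$ is already the post-deviation radius, nothing about the others' radii needs to be tracked. The same holds on the left. Lemmas~\ref{lemma:market-share-intervals} and~\ref{lemma:disjoint-market-shares} then reduce the comparison with all validators to the two adjacent ones.

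If $\min_{\ell\ne i}d(x,v_\ell)>\varepsilon$, this gives $i$ a strict margin of at least $(1-\mu)\varepsilon-2|u-x|$. By Lemma~\ref{lemma:block-time-market-share-identity}, her share for uniform users is then at least $\frac{(1-\mu)^2\varepsilon^2}{4\Delta}$, or $\frac{2(1-\mu)\varepsilon-\Delta}{4}$ once $(1-\mu)\varepsilon>\Delta$.

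This margin has slope $2$ on a window of half-width $\frac{1-\mu}{2}\varepsilon$. It does not have slope $1-\mu$ across the whole gap, as in your integral. When the other $n-1\ge m$ validators are co-located at $v_k$, the displayed inequality is an equality, so your profile is not a valid lower bound on that side. The correct computation gives $(1-\mu)^2r^2/\Delta\lesssim 1/n$, which is where the $\frac{1}{1-\mu}\sqrt{\Delta/n}$ term comes from. Your inequality $(1-\mu)r^2/\Delta\lesssim 1/n$ does not follow.

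\textbf{Mixed equilibria and the logarithm.} Your treatment of these also needs replacing. The paper fixes the location $u$ and lets a validator $j$ with minimal expected share (at most $1/n$) deviate to $u$. It uses $\{\min_i d(u,v_i)>\varepsilon\}\subseteq\{\min_{i\ne j}d(u,v_i)>\varepsilon\}$ together with the share bound above. This gives, for every $\varepsilon$, the tail bound
\[
\PP_v[\min_i d(u,v_i)>\varepsilon]\le\min\{1,(n\max\{m^\mu_-(u,\varepsilon),m^\mu_+(u,\varepsilon)\})^{-1}\}.
\]
The expected distance is then bounded by integrating this tail over $\varepsilon\in[0,1]$.

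The logarithm comes from integrating the $\varepsilon^{-1}$-type tail $\frac{4}{n(2(1-\mu)\varepsilon-\Delta)}$, in the regime $\varepsilon>\Delta/(1-\mu)$, up to $\varepsilon=1$. It does not come from harmonic tie counts $1/|\{i:\Delta_i\le s\}|$. Those play no role, because the deviator is the unique minimizer on her capture window. A single ``gap of size $r$'' argument controls at best pure equilibria; the tail-then-integrate step is what yields the stated bound on the expected distance at mixed equilibria.
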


\noindent We remind the reader of the model and the relevant notation for this section.
\begin{enumerate}
    \item $D_i(v) = \min_{S \subseteq [n]: |S| = m} \max_{j \in S} d(v_i,v_j)$ where $m = \lfloor 2n/3 \rfloor + 1$ denotes the quorum radius.
    \item $L_i(u,v) = d(u, v_i) + \mu \cdot D_i(v)$ denotes the time to inclusion for a user at $u$ who sends to validator $i$. 
    Recall that $\mu \in [0,1]$ represents the faster intervalidator connections.
    \item $S_\tau (u,v) = \arg\min_i \lceil (L_i(u,v) + \tau) / \Delta \rceil$ denotes the set of validators who can first include a transaction sent from $u$ at time $\tau$ in a block.
    \item Validator $i$'s share of users at location $u$ and total market share are given by
    \[
        M_i(u, v) = \frac{1}{\Delta} \int_0^\Delta \frac{\1(i \in S_\tau(u,v))}{|S_\tau(u,v)|} \,\mathrm{d}\tau \quad \text{and} \quad M_i(v) = \int_0^1 M_i(u,v) \,\mathrm{d}\nu(u),
    \]
    respectively.
    Recall that the users are distributed according to the probability measure $\nu$.
\end{enumerate}

\subsection{Proof Overviews}

Co-location is no longer an equilibrium when $\mu$ is sufficiently small for an intuitive reason: validators serve as gateways to faster routes, so users prefer to send to closer validators.
Thus, when a validator removes herself from her peers, who all co-locate at some location, and relocates either immediately to their left or right, she becomes the unique validator who minimizes time to inclusion for an entire side of the market.

This shift in user preference toward closer validators also incentivizes them to better cover the market when the users are located uniformly over it.
If there is some user whose closest validators are far away, then another validator may want to relocate to capture this user and the surrounding market.
That even the validator with the smallest market share does not want to do this at equilibrium allows us to bound the distance between each user and her closest validators.

\subsection{Faster Intervalidator Connections: Instability of Co-location}

\begin{theorem}\label{theorem:fast-connections-no-colocation}
Suppose $n \geq 4$.
If the distribution of users $\nu$ has full support and no point masses, then co-location of all validators is not an equilibrium whenever
\begin{enumerate}
    \item $\mu < 1$ in the non-block time (sophisticated user) model.
    \item $\mu < 1 - 2/n$ in the block time model for any block time $\Delta > 0$.
\end{enumerate}
\end{theorem}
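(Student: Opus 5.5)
Fix a co-located configuration with all $n$ validators at a point $x$. By symmetry it suffices to let a single validator $i$ deviate to $x' = x - \varepsilon$. Here "left" is chosen as the side of $x$ carrying at least half of $\nu$; mirror everything otherwise. Set $a = \nu([0,x))$ and $b = \nu((x,1])$. Since $\nu$ has no point masses, $a + b = 1$, and we may assume $a \ge 1/2 \ge b$. If $x = 0$, then $b = 1$, so we deviate to the right instead; the roles of $a$ and $b$ swap, and such a side always exists with $a = 1$ when $x$ is an endpoint.

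\textbf{Latencies after the deviation.} The remaining $n-1 \ge m$ validators (using $n \ge 4$) sit at $x$, so each has quorum radius $0$ and latency $L_c(u) = d(u,x)$. The deviator has quorum radius exactly $\varepsilon$. Splitting the line into three regions:
\begin{itemize}
\item For $u \le x'$: $L_i(u) = d(u,x) - (1-\mu)\varepsilon$, so the deviator is uniquely optimal with margin $(1-\mu)\varepsilon$.
\item For $u \ge x$: $L_i(u) = d(u,x) + (1+\mu)\varepsilon$, so the cluster beats her by $(1+\mu)\varepsilon$.
\item On $(x',x)$: only a set of $\nu$-measure $\nu((x',x)) \to 0$ as $\varepsilon \to 0$.
\end{itemize}

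\textbf{Sophisticated users ($\Delta = 0$).} When $\mu < 1$, the deviator captures all of $[0,x')$. Her market share is at least $\nu([0,x')) \to a \ge 1/2 > 1/n$ as $\varepsilon \to 0$, by continuity of $\nu$ (no atoms). Hence for small $\varepsilon$ the deviation is strictly profitable, and co-location is not an equilibrium.

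\textbf{Block time ($\Delta > 0$).} Take $\varepsilon$ small enough that $(1+\mu)\varepsilon \le \Delta$. Lemma~\ref{lemma:block-time-market-share-identity} gives the per-location shares directly from the gaps $\Delta_j$.
\begin{itemize}
\item For $u < x'$, the cluster validators have gap $(1-\mu)\varepsilon$, so the deviator gets
\[
\frac{(1-\mu)\varepsilon}{\Delta} + \frac{1}{n}\Paren{1 - \frac{(1-\mu)\varepsilon}{\Delta}} = \frac1n + \frac{(n-1)(1-\mu)\varepsilon}{n\Delta}.
\]
\item For $u > x$, she gets $\frac1n\Paren{1 - \frac{(1+\mu)\varepsilon}{\Delta}}$.
\item For $u \in (x',x)$, her share differs from $1/n$ by $O(\varepsilon/\Delta)$. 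This region has measure $o(1)$, so its total contribution is $o(\varepsilon)$.
\end{itemize}
Integrating over users,
\[
M_i = \frac1n + \frac{\varepsilon}{n\Delta}\Paren{a(n-1)(1-\mu) - b(1+\mu)} + o(\varepsilon).
\]
Here $\nu([0,x')) = a - o(1)$ has also been absorbed into the $o(\varepsilon)$ term. Since $a \ge b$, the bracket is positive whenever $(n-1)(1-\mu) > 1+\mu$, that is, $\mu < 1 - 2/n$. Taking $\varepsilon$ small then gives $M_i > 1/n$, so the deviation is strictly profitable.

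\textbf{Main obstacle.} The delicate step is the block-time expansion. One must verify that the gains on $[0,x')$ and the losses on $(x,1]$ are both exactly linear in $\varepsilon$, and that the middle sliver and the shift from $\nu([0,x))$ to $\nu([0,x'))$ are genuinely $o(\varepsilon)$. Both rely on the no-atoms assumption, which makes $\nu((x',x)) \to 0$.
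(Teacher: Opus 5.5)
Your proposal is correct and follows the paper's proof essentially step for step. You deviate by $\varepsilon$ toward the heavier side of $x$, note that the deviator wins that whole side outright when $\Delta = 0$, and in the block-time model use Lemma~\ref{lemma:block-time-market-share-identity} to compare the first-order gain $a(n-1)(1-\mu)\varepsilon/(n\Delta)$ against the loss $b(1+\mu)\varepsilon/(n\Delta)$, which gives the same threshold $\mu < 1-2/n$. The paper differs only cosmetically: it bounds the sliver $[x-\varepsilon,x]$ explicitly by splitting at $x-(1+\mu)\varepsilon/2$, whereas you absorb it into $o(\varepsilon)$. One small correction: $\nu((x',x))\to 0$ holds for any finite measure, so the no-atoms assumption is really needed only for $\nu(\{x\})=0$, i.e., $a+b=1$.
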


\begin{proof}
We show that co-location at $x \in [0,1]$ cannot be an equilibrium.
Note that each validator's market share in this case would be $1/n$.
Let $F$ denote the CDF of the user distribution $\nu$.
Suppose that $F(x) \geq 1/2$, and consider validator $i$'s market share if she locates at $x-\varepsilon$ for some $\varepsilon$ instead.
In the non-block time (sophisticated user) model, her new market share would be at least $F(x-\varepsilon)$,\footnote{This is because $L_i(u, x-\varepsilon, v_{-i}) = d(u,x) - (1 - \mu) \varepsilon < d(u,x) = L_{-i}(u, x-\varepsilon, v_{-i})$.} which approaches $1/2$ as $\varepsilon$ goes to 0, so co-location at $x$ cannot be an equilibrium.
If $F(x) < 1/2$, then considering a deviation to $x + \varepsilon$ yields the same conclusion.

We now turn to the block time model.
Let $\Delta$ denote the block time, and consider a deviation to $x - \varepsilon$ for $\varepsilon \ll \Delta$.
To simplify the presentation, let $L_i(u)$ and $L_{-i}(u)$ denote $L_i(u, x - \varepsilon, x)$ and $L_j(u, x - \varepsilon, x)$, which is the same for any $j \not= i$, respectively. 
Similarly, let $M_i(u)$ and $M_i$ denote $M_i(u, x-\varepsilon, x)$ and $M_i(x-\varepsilon, x)$, respectively.
Since $n \geq m + 1$ when $n \geq 4$,
\begin{align*}
    \forall\, u \in [0,x-\varepsilon], \, L_{-i}(u) - L_i(u) 
        &= \begin{aligned}[t]
            &d(u, x) - (d(u,x-\varepsilon) \\
            &+ \mu \cdot d(x-\varepsilon, x))
        \end{aligned} \\
        &= (1-\mu) \varepsilon \\
    \forall\, u \in [x-\varepsilon, x-(1+\mu)\varepsilon/2], \, L_{-i}(u) - L_i(u) 
        &= d(u,x) - (d(u, x-\varepsilon) + \mu \varepsilon) \\
        &= 2d(u,x) - (1+\mu)\varepsilon \\
    \forall\, u \in [x-(1+\mu)\varepsilon/2, x], \, L_i(u) - L_{-i}(u)
        &= (d(u,x-\varepsilon) + \mu \varepsilon) - d(u,x) \\
        &= (1+\mu)\varepsilon - 2d(u,x) \\
    \forall\, u \in [x, 1], \, L_i(u) - L_{-i}(u)
        &= (d(u,x-\varepsilon) + \mu \varepsilon) - d(u,x) \\
        &= (1+\mu)\varepsilon
\end{align*}
By Lemma~\ref{lemma:block-time-market-share-identity} (and since $\varepsilon \ll \Delta$),
\begin{align*}
    \forall\, u \in [0,x-\varepsilon], \, M_i(u)
        &= \textstyle \frac{1}{\Delta}(1-\mu)\varepsilon + \frac{1}{n}(1 - \frac{1}{\Delta}(1-\mu)\varepsilon) \\
        &= \textstyle \frac{1}{n} + \frac{1}{\Delta} (1-\frac{1}{n}) (1-\mu) \varepsilon \\
    \forall\, u \in [x-\varepsilon, x-(1+\mu)\varepsilon/2], \, M_i(u)
        &= \begin{aligned}[t]
            &\textstyle \frac{1}{\Delta} (2d(u,x) - (1+\mu)\varepsilon) \\
            &\textstyle + \frac{1}{n} (1 - \frac{1}{\Delta} (2d(u,x) - (1+\mu)\varepsilon)) \\
        \end{aligned} \\
        &= \textstyle \frac{1}{n} + \frac{1}{\Delta} (1 - \frac{1}{n}) (2d(u,x) - (1+\mu)\varepsilon) \\
    \forall\, u \in [x-(1+\mu)\varepsilon/2, x], \, M_i(u)
        &= \textstyle \frac{1}{n} (1 - \frac{1}{\Delta}((1+\mu)\varepsilon - 2d(u,x))) \\
    \forall\, u \in [x, 1], \, M_i(u)
        &= \textstyle\frac{1}{n} (1 - \frac{1}{\Delta}(1+\mu)\varepsilon)
\end{align*}
It follows that if $\nu$ has full support and no point masses, then
\begin{align*}
    M_i 
        &\geq \textstyle \frac{1}{n} + F(x-\varepsilon) \cdot \frac{1}{\Delta} (1-\frac{1}{n}) (1-\mu) \varepsilon - (1 - F(x-(1+\mu)\varepsilon/2)) \cdot \frac{1}{n\Delta}(1+\mu)\varepsilon \\
        &> \textstyle \frac{1}{n} + F(x-\varepsilon) \cdot \frac{1}{\Delta} (1-\frac{1}{n}) (1-\mu) \varepsilon - (1 - F(x-\varepsilon)) \cdot \frac{1}{n\Delta}(1+\mu)\varepsilon.
\end{align*}
Note that the last line exceeds $1/n$ if and only if
\[
    \textstyle F(x-\varepsilon) \cdot (1-\frac{1}{n}) (1-\mu) \geq (1 - F(x-\varepsilon)) \cdot \frac{1}{n}(1+\mu).
\]
Rearranging yields
\[
    \mu \leq \frac{F(x-\varepsilon)\cdot (1 - 1/n) - (1 - F(x-\varepsilon))/n}{F(x-\varepsilon)\cdot (1 - 1/n) + (1 - F(x-\varepsilon))/n} = 1 - \frac{2 (1 - F(x-\varepsilon))}{1 + F(x-\varepsilon)\cdot (n - 2)}.
\]
If $\mu < 1 - 2/n$, then since
\[
    \lim_{\varepsilon \to 0^+} \frac{2 (1 - F(x-\varepsilon))}{1 + F(x-\varepsilon)\cdot (n - 2)} = \frac{2 (1 - F(x))}{1 + F(x)\cdot (n - 2)} \leq \frac{2}{n}, \tag{$F(x) \geq 1/2$}
\]
the desired inequality on $\mu$ holds for sufficiently small $\varepsilon$. Thus, $M_i > 1/n$ for sufficiently small $\varepsilon$, and co-location at $x$ cannot be an equilibrium if $F(x) \geq 1/2$.
If $F(x) < 1/2$, then carrying out the argument with $x+\varepsilon$ as the deviation yields the same conclusion.
\qed \end{proof}

\subsection{Faster Intervalidator Connections: Coverage}

For concision, we introduce some notation specific to this section.
Given $\mu < 1, x \in [0,1], \varepsilon > 0$, define
\begin{align*}
    m^\mu_-(x, \varepsilon) &= \frac{1}{\Delta} \int_0^{\min\{(1-\mu)\varepsilon, \Delta\}} \textstyle \nu([\max\{0, x - \frac{1}{2} ((1-\mu) \cdot \varepsilon - s)\}, x]) \,\mathrm{d}s \\
    m^\mu_+(x, \varepsilon) &= \frac{1}{\Delta} \int_0^{\min\{(1-\mu)\varepsilon, \Delta\}} \textstyle \nu([x, \min\{1, x + \frac{1}{2} ((1-\mu) \cdot \varepsilon - s)\}]) \,\mathrm{d}s
\end{align*}
The next lemma says that these terms are the guaranteed market share of users to the left of $x$ and to the right of $x$, respectively, when a validator locates at $x$ and her nearest peers are more than $\varepsilon$ away.
Lemma~\ref{lemma:fast-connections-probability-upper-bound} uses these guarantees to argue that the probability that the closest validators to $x$ are more than $\varepsilon$ away cannot be too high at any mixed equilibrium: otherwise, the validator with the smallest market share would prefer to relocate to $x$.
Theorem~\ref{theorem:fast-connections-expected-covering-radius} then uses this tail bound to bound the expected distance between $x$ and the validators closest to it.

\begin{lemma}\label{lemma:fast-connections-market-share-lower-bound}
Suppose $\mu < 1$, and let $v_{-i}$ denote the locations of validators who are not validator $i$.
If there exists a location $x \in [0,1]$ such that $\min_{j \not= i} d(x, v_j) > \varepsilon$, then
\[
    M_i(x, v_{-i}) \geq \max\{m^\mu_-(x, \varepsilon), m^\mu_+(x, \varepsilon)\}.
\]
Moreover, if users are uniformly distributed across $[0,1]$, then
\[
    M_i(x, v_{-i}) \geq \begin{cases}
            \frac{(1-\mu)^2 \varepsilon^2}{4\Delta} & \varepsilon \leq \frac{\Delta}{1-\mu} \\
            \frac{2 (1-\mu) \varepsilon - \Delta}{4} & \varepsilon > \frac{\Delta}{1-\mu}
        \end{cases}
\]
\end{lemma}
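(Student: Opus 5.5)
The plan is to show that every user near $x$ is served strictly faster by validator $i$ than by anyone else, with an explicit latency gap. The block-time identity (Lemma~\ref{lemma:block-time-market-share-identity}) then turns this gap into a per-user market share, and Fubini integrates that share over users to produce $m^\mu_\pm$.

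\emph{Step 1 (latency gap).} Write $v=(x,v_{-i})$ and fix a user $u$ with $t=d(u,x)$. For any $j\neq i$, the triangle-inequality estimate used in Lemma~\ref{lemma:market-share-intervals} gives $D_i(v)\le d(x,v_j)+D_j(v)$: reach $v_j$, then $j$'s nearest quorum. Hence
\[
L_i(u,v)\le t+\mu\, d(x,v_j)+\mu D_j(v).
\]
Also $d(u,v_j)\ge d(x,v_j)-t$, so
\[
L_j(u,v)\ge d(x,v_j)-t+\mu D_j(v).
\]
Subtracting,
\[
L_j(u,v)-L_i(u,v)\ge (1-\mu)\,d(x,v_j)-2t>(1-\mu)\varepsilon-2t,
\]
using $\min_{j\neq i}d(x,v_j)>\varepsilon$ and $\mu<1$. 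Therefore, whenever $2t<(1-\mu)\varepsilon$, we have $\Delta_i(u,v)=0$ and $\Delta_j(u,v)>(1-\mu)\varepsilon-2t$ for every $j\neq i$.

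\emph{Step 2 (per-user share).} Apply Lemma~\ref{lemma:block-time-market-share-identity} with $\Delta_i(u,v)=0$. For $s<(1-\mu)\varepsilon-2t$ the set $\{k:\Delta_k\le s\}$ is exactly $\{i\}$, so the integrand equals $1$. Dropping the (nonnegative) remaining part of the integral gives
\[
M_i(u,v)\ge \frac1\Delta\int_0^{\min\{(1-\mu)\varepsilon,\Delta\}}\1\big(2d(u,x)<(1-\mu)\varepsilon-s\big)\,\mathrm{d}s .
\]

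\emph{Step 3 (integrate over users).} Integrate over $u\in[0,x]$ with respect to $\nu$ and swap the order by Fubini/Tonelli. For fixed $s$, the admissible users on the left form the set $\{u\le x: x-u<\tfrac12((1-\mu)\varepsilon-s)\}$, intersected with $[0,1]$. Up to the endpoint, which is harmless after the outer integral or when $\nu$ has no atoms (or one can use $\le$ with a limiting argument), this gives the inner quantity of $m^\mu_-(x,\varepsilon)$. The analogous computation on $[x,1]$ gives $m^\mu_+$. Since $M_i$ is at least the left contribution and at least the right contribution (indeed at least their sum), it is at least the maximum.

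\emph{Step 4 (uniform case).} Put $a=(1-\mu)\varepsilon$. Since some $v_j$ lies in $[0,1]$ at distance more than $\varepsilon$ from $x$, we have $\varepsilon<1$, so $a/2<1/2$. At least one side of $x$ has length at least $1/2$, so on that side there is no clipping and $\nu$ of the interval equals $\tfrac12(a-s)$. Then:
\begin{itemize}
\item If $a\le\Delta$: $\frac1\Delta\int_0^{a}\frac{a-s}{2}\,\mathrm{d}s=\frac{a^2}{4\Delta}=\frac{(1-\mu)^2\varepsilon^2}{4\Delta}$.
\item If $a>\Delta$: $\frac1\Delta\int_0^{\Delta}\frac{a-s}{2}\,\mathrm{d}s=\frac{2a-\Delta}{4}=\frac{2(1-\mu)\varepsilon-\Delta}{4}$.
\end{itemize}
These match the two cases in the statement. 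The only delicate point is Step 1, namely bounding $D_i$ through an arbitrary competitor $j$ so that the gap depends only on $d(x,v_j)>\varepsilon$. The rest is bookkeeping with the block-time identity and Fubini.
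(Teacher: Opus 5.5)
Your proof is correct, and it reaches the key step by a more direct route than the paper does.

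\textbf{The paper's route.} It takes the nearest validators $v_j$ and $v_k$ on either side of $x$ and uses $D_i \le d(x,v_j)+D_j$ only for these two. It then appeals to the interval-structure lemmas (Lemmas~\ref{lemma:market-share-intervals} and~\ref{lemma:disjoint-market-shares}) to conclude that $i$ is the strict latency minimizer on the window $(\frac{1+\mu}{2}x+\frac{1-\mu}{2}v_j,\frac{1+\mu}{2}x+\frac{1-\mu}{2}v_k)$. It handles a missing neighbor as a separate case, and only then cites Lemma~\ref{lemma:block-time-market-share-identity}.

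\textbf{Your route.} You apply the triangle inequality against every competitor at once and get the quantitative gap $L_j-L_i\ge(1-\mu)d(x,v_j)-2d(u,x)$ for all $j\ne i$. This avoids both the interval lemmas and the case split. It also supplies exactly what the block-time identity needs, namely a lower bound on $\min_{j\ne i}\Delta_j(u,v)$ rather than merely $\Delta_j>0$; the paper leaves this implicit for non-adjacent validators.

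You also justify two points the paper asserts without comment:
\begin{itemize}
\item the endpoint/atom issue under Fubini: atoms are countable, so the moving endpoint carries mass for only countably many $s$;
\item the absence of clipping in the uniform case, via $\varepsilon<1$ and the longer side of $x$.
\end{itemize}

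\textbf{What the paper's route buys.} It gives a sharper window, determined by the actual neighbor distances rather than $\varepsilon$, and a symmetric-interval bound $\nu([x\pm\frac12((1-\mu)\varepsilon-s)])$ that ties into the market-partition picture used elsewhere.

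\textbf{One small inaccuracy.} Your parenthetical claim that $M_i$ is ``at least their sum'' fails when $\nu$ has an atom at $x$, because both $m^\mu_-$ and $m^\mu_+$ count $\nu(\{x\})$. The correct combined statement is the paper's symmetric-interval bound. Since you never use the aside, the stated maximum bound is unaffected.
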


\begin{proof}
Suppose there exists a location $x \in [0, 1]$ such that $\min_{j \not= i} d(x,v_j) > \varepsilon$.
Let $j$ and $k$ denote the first validators located to the left and right of $x$, respectively, and assume that both validators exist for now. 
Note that by the triangle inequality,
\[
    D_i(x, v_{-i}) \leq \min \{d(x, v_j) + D_j(x, v_{-i}), d(x,v_k) + D_k(x, v_{-i})\}
\]
so by Lemmas~\ref{lemma:market-share-intervals} and~\ref{lemma:disjoint-market-shares}, validator $i$ strictly minimizes latency for users in the interval
\[
    \textstyle \Paren{\frac{1+\mu}{2} \cdot x + \frac{1-\mu}{2} \cdot v_j, \frac{1+\mu}{2} \cdot x + \frac{1-\mu}{2} \cdot v_k}.
\]
That is, for all locations $u$ in this interval, $L^\mu_i(u, x, v_{-i}) < L^\mu_\ell(u, x, v_{-i})$ for all validators $\ell \not= i$.
To see more specifically why this is the case, note that for $u \in [x, \frac{1+\mu}{2} \cdot x + \frac{1-\mu}{2} \cdot v_k)$,
\begin{align*}
    L^\mu_i(u, x, v_{-i}) 
        &= d(u, x) + \mu \cdot D_i(x, v_{-i}) \\
        &\leq d(u,x) + \mu \cdot (d(x,v_k) + D_k(x, v_{-i})) \\
        &= L^\mu_k(u, x, v_{-i}) + (d(u,x) - d(u, v_k)) + \mu \cdot d(x, v_k) \\
        &< L^\mu_k(u, x, v_{-i}). \tag{$u \in [x, \frac{1+\mu}{2} \cdot x + \frac{1-\mu}{2} \cdot v_k)$}
\end{align*}
A similar analysis demonstrates that the same inequality holds between validators $i$ and $j$ for users located in $(\frac{1+\mu}{2} \cdot x + \frac{1-\mu}{2} \cdot v_j, x]$.
Lemmas~\ref{lemma:market-share-intervals} and~\ref{lemma:disjoint-market-shares} then imply that these inequalities suffice to show that validator $i$ strictly minimizes latency across all validators for users in the desired region.

Now, note that since $\min_{j \not= i} d(x,v_j) > \varepsilon$,
\[
    \textstyle [x \pm \frac{1-\mu}{2} \cdot \varepsilon] \subseteq \Paren{\frac{1+\mu}{2} \cdot x + \frac{1-\mu}{2} \cdot v_j, \frac{1+\mu}{2} \cdot x + \frac{1-\mu}{2} \cdot v_k}.
\]
Thus, by Lemma~\ref{lemma:block-time-market-share-identity},
\[
    M_i(x, v_{-i}) \geq \frac{1}{\Delta} \int_0^{\min\{(1-\mu)\varepsilon, \Delta\}} \textstyle \nu([x \pm \frac{1}{2} ((1-\mu) \cdot \varepsilon - s)]) \,\mathrm{d}s
\]
If no validator locates to the left of $x$, then it is not hard to see that
\[
    M_i(x, v_{-i}) \geq \frac{1}{\Delta} \int_0^{\min\{(1-\mu)\varepsilon, \Delta\}} \textstyle \nu([0, x + \frac{1}{2} ((1-\mu) \cdot \varepsilon - s)]) \,\mathrm{d}s.
\]
Likewise, if no validator locates to the right of $x$, then
\[
    M_i(x, v_{-i}) \geq \frac{1}{\Delta} \int_0^{\min\{(1-\mu)\varepsilon, \Delta\}} \textstyle \nu([x - \frac{1}{2} ((1-\mu) \cdot \varepsilon - s), 1]) \,\mathrm{d}s.
\]
Each of these expressions is at least $\max\{m^\mu_-(x, \varepsilon), m^\mu_+(x, \varepsilon)\}$.
If users are uniformly distributed across $[0,1]$, then we can compute the value of this expression explicitly.
\begin{align*}
    \max\{m^\mu_-(x, \varepsilon), m^\mu_+(x, \varepsilon)\} 
        &= \frac{1}{2\Delta} \int_0^{\min\{(1-\mu)\varepsilon, \Delta\}}((1-\mu) \cdot \varepsilon - s) \,\mathrm{d}s \\
        &= \begin{cases}
            \frac{(1-\mu)^2 \varepsilon^2}{4\Delta} & \varepsilon \leq \frac{\Delta}{1-\mu} \\
            \frac{2 (1-\mu) \varepsilon - \Delta}{4} & \varepsilon > \frac{\Delta}{1-\mu}
        \end{cases}
\end{align*}
\qed \end{proof}

\begin{lemma}\label{lemma:fast-connections-probability-upper-bound}
Suppose $\mu < 1$.
If the configuration $v$ is a mixed equilibrium, then for all locations $u \in [0,1]$,
\[
    \textstyle \PP_v[\min_i d(u, v_i) > \varepsilon] \leq \min\{1, (n \cdot \max\{m^\mu_-(u, \varepsilon), m^\mu_+(u, \varepsilon)\})^{-1}\}.
\]
Moreover, if users are uniformly distributed across $[0,1]$, then
\[
    \PP_v[{\textstyle \min_i d(u, v_i) > \varepsilon}] \leq \begin{cases}
        \min\CrBr{1, \frac{4 \Delta}{n(1-\mu)^2 \varepsilon^2}} & \varepsilon \leq \frac{\Delta}{1-\mu} \\
        \min\CrBr{1, \frac{4}{n (2(1-\mu) \varepsilon - \Delta)}} & \varepsilon > \frac{\Delta}{1-\mu}
    \end{cases}
\]
\end{lemma}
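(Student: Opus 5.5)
The argument is an averaging one: compare the equilibrium payoff of the validator with the smallest expected market share against her payoff from relocating deterministically to $u$. At any (mixed) equilibrium the expected market shares satisfy $\sum_i \EE_v[M_i(v)] = 1$. So some validator $i$ has $\EE_v[M_i(v)] \leq 1/n$. Because $v$ is a mixed equilibrium, no pure deviation helps her, and in particular
\[
    \EE_{v_{-i}}\bigl[M_i(u, v_{-i})\bigr] \leq \EE_v[M_i(v)] \leq \frac{1}{n}.
\]

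Next, lower bound the left-hand side on the event that $u$ is poorly covered. Market shares are nonnegative, so
\[
    \EE_{v_{-i}}\bigl[M_i(u, v_{-i})\bigr] \geq \PP_{v_{-i}}\bigl[{\textstyle\min_{j\neq i}} d(u, v_j) > \varepsilon\bigr] \cdot \max\{m^\mu_-(u,\varepsilon), m^\mu_+(u,\varepsilon)\}.
\]
On that event, Lemma~\ref{lemma:fast-connections-market-share-lower-bound} (applied with $x = u$) gives $M_i(u, v_{-i}) \geq \max\{m^\mu_-(u,\varepsilon), m^\mu_+(u,\varepsilon)\}$ pointwise. Since the mixed strategies are independent, the law of $v_{-i}$ is just its marginal, and we get a bound on $\PP[\min_{j\neq i} d(u,v_j) > \varepsilon]$.

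Then pass to the event in the statement. The event $\{\min_j d(u, v_j) > \varepsilon\}$, taken over all validators, is contained in $\{\min_{j \neq i} d(u, v_j) > \varepsilon\}$. Hence
\[
    \PP_v\bigl[{\textstyle\min_j} d(u,v_j) > \varepsilon\bigr] \leq \PP_{v_{-i}}\bigl[{\textstyle\min_{j\neq i}} d(u,v_j) > \varepsilon\bigr] \leq \bigl(n \max\{m^\mu_-, m^\mu_+\}\bigr)^{-1}.
\]
Taking the minimum with the trivial bound $1$ gives the first claim. If the max is zero, read the bound as $1$.

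For uniform users, substitute the explicit formula from Lemma~\ref{lemma:fast-connections-market-share-lower-bound} in the two regimes $\varepsilon \leq \Delta/(1-\mu)$ and $\varepsilon > \Delta/(1-\mu)$; this yields the displayed piecewise bound directly.

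There is no real obstacle; the only care needed is the first step. One must check that the mixed-equilibrium condition gives the comparison for the specific minimum-share validator against a deterministic deviation to $u$. It does, since the definition quantifies over all alternative locations $p'_i$. When $\Delta = 0$, the uniform-case formulas should be read in the regime $\varepsilon > \Delta/(1-\mu)$, i.e., with $\Delta = 0$.
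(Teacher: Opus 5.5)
Your proposal is correct and follows the same argument as the paper. You take the validator with minimum expected share (at most $1/n$), deviate deterministically to $u$, and lower-bound her payoff on the event $\{\min_{j\neq i} d(u,v_j) > \varepsilon\}$ via Lemma~\ref{lemma:fast-connections-market-share-lower-bound}. You then use the containment of this event in the one from the statement, and get the uniform case by substitution. The paper phrases this as a contradiction rather than a direct chain of inequalities, but the content is identical; your explicit treatment of the degenerate case $\max\{m^\mu_-, m^\mu_+\} = 0$ is a small bit of care the paper leaves implicit.
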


\begin{proof}
Suppose there exists $u \in [0,1]$ such that
\[
    \textstyle \PP_v[\min_i d(u, v_i) > \varepsilon] > (n \cdot \max\{m^\mu_-(u, \varepsilon), m^\mu_+(u, \varepsilon)\})^{-1}.
\]
Let $j \in \arg\min_i \EE_v[M_i(v)]$, and note that $\EE_v[M_j(v)] \leq 1/n$ since $\EE_v[\sum_i M_i(v)] = 1$.
By deviating to $u$, her expected market share becomes at least
\begin{align*}
    \PP_v & \textstyle [\min_{i \not= j} d(u, v_i) > \varepsilon] \cdot \max\{m^\mu_-(u, \varepsilon), m^\mu_+(u, \varepsilon)\} \\
        &\geq \textstyle \PP_v[\min_i d(u, v_i) > \varepsilon] \cdot \max\{m^\mu_-(u, \varepsilon), m^\mu_+(u, \varepsilon)\} \\
        &> 1/n
\end{align*}
by Lemma~\ref{lemma:fast-connections-market-share-lower-bound}, so the configuration $v$ cannot be a mixed equilibrium.
The result for uniformly distributed users then follows from substituting in the expression for $\max\{m^\mu_-(u, \varepsilon), m^\mu_+(u, \varepsilon)\}$ from Lemma~\ref{lemma:fast-connections-market-share-lower-bound}.
\qed \end{proof}

\begin{theorem}\label{theorem:fast-connections-expected-covering-radius}
Suppose $\mu < 1$.
If the configuration $v$ is a mixed equilibrium, then for all locations $u \in [0,1]$,
\[
    \EE_v[{\textstyle \min_i d(u, v_i)}] \leq \int_0^1 \min\{1, (n \cdot \max\{m^\mu_-(u, \varepsilon), m^\mu_+(u, \varepsilon)\})^{-1}\} \,\mathrm{d}\varepsilon.
\]
Moreover, if users are uniformly distributed across $[0,1]$, then
\[
    \EE_v[{\textstyle \min_i d(u, v_i)}] \leq O\Paren{\frac{1}{1-\mu} \cdot \Paren{\sqrt{\frac{\max\{\Delta, 1/n\}}{n}} + \frac{\log \Paren{1 + \frac{1-\mu}{\max\{\Delta, 1/n\}}}}{n}}}
\]
whenever $\mu \leq 1 - \frac{4}{n}$ and $\Delta \leq n(1-\mu)^2/4$.
\end{theorem}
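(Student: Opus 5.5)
The general bound follows from the layer-cake identity
\[
    \EE_v[\min_i d(u,v_i)] = \int_0^1 \PP_v[\min_i d(u,v_i) > \varepsilon]\,\mathrm{d}\varepsilon,
\]
which holds because all distances lie in $[0,1]$. We then substitute the tail bound of Lemma~\ref{lemma:fast-connections-probability-upper-bound} pointwise in $\varepsilon$; this is immediate.

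For uniformly distributed users, we bound the integral using the explicit tail bound of Lemma~\ref{lemma:fast-connections-probability-upper-bound}. Write $c = 1-\mu$ and $\Delta' = \max\{\Delta, 1/n\}$. We may replace $\Delta$ by $\Delta'$ because enlarging $\Delta$ only weakens the bound. Indeed, $\frac{4\Delta}{n c^2\varepsilon^2}$ increases in $\Delta$, the second branch $\frac{4}{n(2c\varepsilon-\Delta)}$ also increases in $\Delta$, and the underlying quantity $m^\mu_\pm$ decreases in $\Delta$ since it is an average over $[0,\Delta]$ of a nonincreasing integrand, padded by zeros beyond $c\varepsilon$. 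So the bound derived with $\Delta'$ is still valid. Alternatively, we can handle the two regimes $\Delta \ge 1/n$ and $\Delta<1/n$ separately. Either way, we split $[0,1]$ into three pieces.

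\textbf{Piece 1: $\varepsilon \le \varepsilon_0 := \frac{2}{c}\sqrt{\Delta'/n}$.} Here we bound the integrand by $1$, which contributes $O\big(\frac1c\sqrt{\Delta'/n}\big)$.

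\textbf{Piece 2: $\varepsilon_0 < \varepsilon \le \Delta'/c$.} Here we use $\frac{4\Delta'}{nc^2\varepsilon^2}$, which integrates to at most $\frac{4\Delta'}{nc^2\varepsilon_0} = O\big(\frac1c\sqrt{\Delta'/n}\big)$. The hypothesis $\Delta \le nc^2/4$ guarantees $\varepsilon_0 \le \Delta'/c$, possibly after checking the $1/n$ case using $\mu\le 1-4/n$. So the pieces are ordered correctly, and $\varepsilon_0\le 1$ as well.

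\textbf{Piece 3: $\Delta'/c < \varepsilon \le 1$.} Here we use $\frac{4}{n(2c\varepsilon-\Delta')}$. For $\varepsilon \ge \Delta'/c$ we have $2c\varepsilon - \Delta' \ge c\varepsilon$, so the integrand is at most $\frac{4}{nc\varepsilon}$. This integrates to $\frac{4}{nc}\log\frac{c}{\Delta'} \le \frac{4}{nc}\log\big(1+\frac{c}{\Delta'}\big)$, and the piece is empty if $\Delta'/c\ge1$.

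Summing the three pieces gives the stated $O(\cdot)$ bound.

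The main obstacle is the bookkeeping at the regime boundaries. We must verify that $\varepsilon_0$ lies inside $[0,\min\{1,\Delta'/c\}]$ under the stated conditions $\mu\le 1-4/n$ and $\Delta\le n(1-\mu)^2/4$, and we must justify the replacement of $\Delta$ by $\max\{\Delta,1/n\}$. If the monotonicity argument for that replacement proves awkward, the fallback is to treat $\Delta<1/n$ directly. In that case the first branch of the tail bound is even smaller, and Piece 3 starts at $\Delta/c$. The resulting $\log(1+c/\Delta)$ would then need to be compared against $\sqrt{1/n}$-type terms, so working with $\Delta'$ is cleaner and is what I would try first.
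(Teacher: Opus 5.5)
Your proof is correct. It follows the paper's skeleton: the layer-cake formula plus the tail bound of Lemma~\ref{lemma:fast-connections-probability-upper-bound}, then splitting $[0,1]$ at $\frac{2}{1-\mu}\sqrt{\Delta/n}$ and $\Delta/(1-\mu)$ and using the two branches.

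Where you differ is in the treatment of small block times.
\begin{itemize}
\item The paper splits into three regimes, $4/n\le\Delta\le 1-\mu$, $1-\mu\le\Delta\le n(1-\mu)^2/4$, and $\Delta\le 4/n$, and computes each integral exactly. In the last regime it bounds the integrand by $1$ up to $\frac{\Delta+4/n}{2(1-\mu)}$ and uses only the second branch afterwards.
\item You instead note that $m^\mu_\pm$ is an average over $[0,\Delta]$ of a nonnegative function of $s$ that is nonincreasing in $s$. Hence $m^\mu_\pm$ is nonincreasing in $\Delta$, and the tail bound at $\Delta$ implies the one at $\Delta'=\max\{\Delta,1/n\}$.
\end{itemize}
This observation is valid for any $\nu$. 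It collapses the case analysis into one computation and lands directly on the $\max\{\Delta,1/n\}$ form of the statement. The only price is looser constants, partly also from your estimate $2c\varepsilon-\Delta'\ge c\varepsilon$ in piece 3.

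One claim in your bookkeeping is false, though harmless. The inequality $\varepsilon_0\le\Delta'/c$ is equivalent to $\Delta'\ge 4/n$. It does not follow from $\Delta\le nc^2/4$, which (for $\Delta'=\Delta$) is instead equivalent to $\varepsilon_0\le 1$. For instance, $\Delta'=1/n$ gives $\varepsilon_0=2\Delta'/c$. This is exactly why the paper's first regime requires $\Delta\ge 4/n$.

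In your scheme nothing breaks. When $\Delta'<4/n$, piece 2 is empty and pieces 1 and 3 overlap. Since the integrand is nonnegative, summing valid upper bounds over pieces that cover $[0,1]$ still bounds the integral, by $\frac{2}{c}\sqrt{\Delta'/n}+\frac{4}{nc}\log(1+c/\Delta')$. You should replace the ordering claim with this remark.

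In fact, your computation uses neither $\mu\le 1-4/n$ nor $\Delta\le n(1-\mu)^2/4$. Outside these ranges the right-hand side is already $\Omega(1)$, while the left-hand side is at most $1$.
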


\begin{proof}
The general result follows from Lemma~\ref{lemma:fast-connections-probability-upper-bound} and the (alternative) definition of the expectation of a real-valued random variable.
Now, suppose $\mu \leq 1 - 4/n$, and consider the case of uniformly distributed users.
If $4/n \leq \Delta \leq 1-\mu$, then
\begin{align*}
    \EE_v[{\textstyle \min_i d(u, v_i)}] 
        \leq {} & \frac{2}{1-\mu} \sqrt{\frac{\Delta}{n}} + \int_{\frac{2}{1-\mu} \sqrt{\Delta/n}}^{\Delta/(1-\mu)} \frac{4\Delta}{n(1-\mu)^2 \varepsilon^2} \,\mathrm{d}\varepsilon \\
            &+ \int_{\Delta / (1-\mu)}^1 \frac{4}{n (2(1-\mu) \varepsilon - \Delta)} \,\mathrm{d}\varepsilon \\
        = {} & \frac{4}{1-\mu} \sqrt{\frac{\Delta}{n}} - \frac{4}{n(1-\mu)} + \frac{2}{n(1-\mu)} \log\Paren{\frac{2(1-\mu)}{\Delta} - 1}.
\end{align*}
If $1-\mu \leq \Delta \leq n(1-\mu)^2/4$, then 
\begin{align*}
    \EE_v[{\textstyle \min_i d(u, v_i)}] 
        &\leq \frac{2}{1-\mu} \sqrt{\frac{\Delta}{n}} + \int_{\frac{2}{1-\mu} \sqrt{\Delta/n}}^1 \frac{4\Delta}{n(1-\mu)^2 \varepsilon^2} \,\mathrm{d}\varepsilon \\
        &= \frac{4}{1-\mu} \Paren{\sqrt{\frac{\Delta}{n}} - \frac{\Delta}{n(1-\mu)}}.
\end{align*}
If $\Delta \leq 4/n$, then
\begin{align*}
    \EE_v[{\textstyle \min_i d(u, v_i)}] 
        &\leq \frac{\Delta + 4/n}{2(1-\mu)} + \int_{\frac{\Delta + 4/n}{2(1-\mu)}}^1 \frac{4}{n (2(1-\mu) \varepsilon - \Delta)} \,\mathrm{d}\varepsilon \\
        &= \frac{\Delta + 4/n}{2(1-\mu)} + \frac{2}{n(1-\mu)} \log \Paren{\frac{n(2(1-\mu) - \Delta)}{4}}.
\end{align*}
\qed \end{proof}

\section{Preconfirmations}\label{section:appendix-preconf}

The popularity of preconfirmations has an effect on validator locations similar to that of faster intervalidator connections, although the exact quantitative bound on the distance between each user and her closest validators is different.
There is also no dependence on block time since validators can provide preconfirmations at any time.

\begin{theorem}\label{theorem:preconfirmations-summary}
\PreconfirmationsSummary
\end{theorem}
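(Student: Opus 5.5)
We will mirror the structure used for Theorem~\ref{theorem:fast-connections-summary}, proving three pieces: instability of co-location for $\Delta>0$ with $\alpha<1$, instability for $\Delta=0$ with $\alpha<1-1/(n-1)$, and a covering-radius bound obtained from a tail bound at any mixed equilibrium.

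\textbf{Instability of co-location.} Suppose all validators are at $x$, so each earns $1/n$. Let $F$ be the CDF of $\nu$, and assume without loss of generality that $F(x)\ge 1/2$. Let validator $i$ deviate to $x-\varepsilon$.

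The preconfirmation users in $[0,x-\varepsilon/2)$ now go only to $i$. Since $\nu$ has no point masses, this contributes $(1-\alpha)F(x-\varepsilon/2)\to(1-\alpha)F(x)\ge(1-\alpha)/2$.

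For the inclusion users, $n\ge m+1$ gives, as in the proof of Theorem~\ref{theorem:colocation-sufficient}, $L_i(u)-L_{-i}(u)\in[0,2\varepsilon]$ for every $u$. In the block-time model, Lemma~\ref{lemma:block-time-market-share-identity} then shows that validator $i$ keeps share $\frac{1}{n}(1-2\varepsilon/\Delta)$ of these users at every location. Her total is therefore at least
\[
\alpha\tfrac1n\Paren{1-\tfrac{2\varepsilon}{\Delta}}+(1-\alpha)F(x-\varepsilon/2).
\]
As $\varepsilon\to0$ this tends to $\frac{\alpha}{n}+\frac{1-\alpha}{2}$. That limit exceeds $1/n$ whenever $\alpha<1$ and $n\ge 3$, since $\frac{1-\alpha}{2}>\frac{1-\alpha}{n}$. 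So some small $\varepsilon$ gives a profitable deviation.

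When $\Delta=0$, the deviator loses every inclusion user, because $L_i>L_{-i}$ strictly for all $u$ except possibly a null set. She also gets only a $1/n$ share of none of them. Her payoff is then $(1-\alpha)F(x-\varepsilon/2)$, compared against $1/n$.

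Co-locating at a $\nu$-median $x$ is the hardest case for the deviator, since then $F(x)=1/2$. There the deviation pays if $(1-\alpha)/2>1/n$. This is not the threshold $1-1/(n-1)$ in the statement, so the $\Delta=0$ deviation must be sharper. The first thing to try is to compare her share with the $(1-\alpha)/n$ she already collects from preconfirmation users when co-located: the gain $(1-\alpha)(\max(F,1-F)-1/n)$ must beat the lost inclusion share $\alpha/n$. Since $\max(F,1-F)\ge1/2$, this requires
\[
(1-\alpha)\Paren{\tfrac12-\tfrac1n}>\tfrac{\alpha}{n}.
\]
I will work through this inequality and, if needed, choose a deviation that keeps $i$ within the cluster's quorum structure, to recover exactly the stated threshold. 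This bookkeeping is the step I expect to be the main obstacle.

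\textbf{Coverage.} I follow Lemmas~\ref{lemma:fast-connections-market-share-lower-bound}--\ref{lemma:fast-connections-probability-upper-bound} and Theorem~\ref{theorem:fast-connections-expected-covering-radius}. Suppose no other validator is within $\varepsilon$ of $u$, and validator $i$ moves to $u$. Then preconfirmation users in $[u-\varepsilon/2,u+\varepsilon/2]\cap[0,1]$ all go to her. For uniform $\nu$ this gives share at least $(1-\alpha)\varepsilon/2$, and this bound does not depend on $\Delta$.

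Take $j$ to be the validator with the smallest expected share, which is at most $1/n$. If she deviates to $u$, equilibrium forces
\[
\PP_v\Bracket{\min_i d(u,v_i)>\varepsilon}\le\min\CrBr{1,\frac{2}{n(1-\alpha)\varepsilon}}.
\]
Integrating this tail over $\varepsilon\in[0,1]$, splitting at $\varepsilon_0=2/(n(1-\alpha))$ when $\varepsilon_0<1$, gives
\[
\varepsilon_0+\varepsilon_0\log(1/\varepsilon_0)=O\Paren{\frac{1+\log(n(1-\alpha))}{n(1-\alpha)}}.
\]
When $n(1-\alpha)\le 1$, the trivial bound $1$ applies. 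Together these match $O\Paren{\frac{1+\log\max\{1,n(1-\alpha)\}}{\max\{1,n(1-\alpha)\}}}$.
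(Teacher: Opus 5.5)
Your block-time instability argument and your coverage argument are correct and essentially the paper's. The coverage argument has the same ingredients: the $(1-\alpha)\varepsilon/2$ guarantee from preconfirmation users, the smallest-share validator deviating to $u$, and integrating the tail $\min\{1, 2/(n(1-\alpha)\varepsilon)\}$ split at $2/(n(1-\alpha))$.

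The gap is in the $\Delta=0$ instability step. Your claim that the deviator at $x-\varepsilon$ loses every inclusion user is false, and it contradicts your own computation $L_i(u)-L_{-i}(u)\in[0,2\varepsilon]$. For every $u\le x-\varepsilon$ the deviator lies between the user and the cluster, so $L_i(u)=d(u,x-\varepsilon)+\varepsilon=d(u,x)=L_{-i}(u)$ exactly. All $n$ validators then lie in $S(u,\cdot)$, and the random tie-break gives her a $1/n$ share of the inclusion users on $[0,x-\varepsilon]$. That set has $\nu$-mass $F(x-\varepsilon)$, not zero. With this term included,
\[
M_i \ge \frac{\alpha}{n}F(x-\varepsilon)+(1-\alpha)F(x-\varepsilon/2)\longrightarrow F(x)\Paren{\frac{\alpha}{n}+1-\alpha}.
\]
This limit exceeds $1/n$ if and only if $\alpha<1-\frac{1-F(x)}{(n-1)F(x)}$, and the right side is at least $1-1/(n-1)$ whenever $F(x)\ge 1/2$. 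That is exactly where the stated threshold comes from. No different deviation is needed, and the idea of keeping $i$ inside the cluster's quorum structure is a red herring.

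Because you dropped the tie term, your two attempted conditions, $(1-\alpha)/2>1/n$ and $(1-\alpha)(1/2-1/n)>\alpha/n$, are the same inequality. Both reduce to $\alpha<1-2/n$. Since you under-counted her payoff, this still validly shows instability for $\alpha<1-2/n$. However, $2/n>1/(n-1)$ for $n\ge 3$, so as written your proposal leaves $\alpha\in[1-2/n,\,1-1/(n-1))$ unproved in the sophisticated-user model.
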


We clarify the model and remind the user of the relevant notation.
Users are distributed along the unit interval according to the probability measure $\nu$ and send transactions at a constant rate for $\Delta$ time.
At each location, there are two types of users.
The first type of user cares about time to inclusion and sends to a validator $i$ that minimizes
\[
    \lceil (L_i(u,v) + \tau) / \Delta \rceil
\]
where $u$ is the user's location and $\tau$ the time at which she sends her transaction.
$\alpha$ denotes the fraction of these users at each location.
The second type of user cares about time to preconfirmation and sends to her closest validator.
These users constitute a $1 - \alpha$ fraction of the users at each location.
If multiple validators tie for a user of either type, then she breaks the tie randomly.
\begin{enumerate}
    \item $D_i(v) = \min_{S \subseteq [n]: |S| = m} \max_{j \in S} d(v_i,v_j)$ where $m = \lfloor 2n/3 \rfloor + 1$ denotes the quorum radius.
    \item $L_i(u,v) = d(u, v_i) + D_i(v)$ denotes the time to inclusion for a user at $u$ who sends to validator $i$. 
    \item $S_\tau (u,v) = \arg\min_i \lceil (L_i(u,v) + \tau) / \Delta \rceil$ denotes the set of validators who can first include a transaction sent from $u$ at time $\tau$ in a block.
    \item $P(u, v) = \arg\min_i d(u,v_i)$ denotes the set of validators who can provide a preconfirmation the fastest.
    \item Validator $i$'s share of users at location $u$ is given by
    \[
        M_i(u, v) = \frac{\alpha}{\Delta} \int_0^\Delta \frac{\1(i \in S_\tau(u,v))}{|S_\tau(u,v)|} \,\mathrm{d}\tau + (1 - \alpha) \cdot \frac{\1(i \in P(u,v))}{|P(u,v)|}.
    \]
    \item Validator $i$'s total market share is given by
    \[
        M_i(v) = \int_0^1 M_i(u,v) \,\mathrm{d}\nu(u).
    \]
\end{enumerate}

\subsection{Proof Overviews}

The proofs in this section closely resemble those in Section~\ref{section:fast-connections}.
Co-location is no longer an equilibrium because preconfirmation users send their transactions to their closest validators.
Thus, the existence of any positive fraction of such users would incentivize a validator to separate from her co-located peers: by locating just next to them, she continues to capture $\Omega(1/n)$ of the users who care about time to inclusion but now captures all the preconfirmation users on one side, which is a constant fraction of all users in the market.

Validator coverage improves because a validator can capture all the preconfirmation users near a location by re-locating there.
That the validator with the smallest market share does not want to relocate anywhere implies each location must have a validator sufficiently nearby.
The proof of this result uses the same ideas as the proof of Theorem~\ref{theorem:fast-connections-expected-covering-radius}: use this fact to establish tail bounds on the probability that the nearest validator to a particular location is more than $\varepsilon$ away at equilibrium, then use this tail bound to bound the expected distance.

\subsection{Preconfirmations: Instability of Co-location }

\begin{theorem}\label{theorem:preconf-no-colocation}
Suppose $n \geq 4$.
If the distribution of users $\nu$ has full support and no point masses, then co-location of all validators is not an equilibrium whenever
\begin{enumerate}
    \item $\alpha < 1 - 1/(n-1)$ in the non-block time model.
    \item $\alpha < 1$ in the block time model for any $\Delta > 0$.
\end{enumerate}

\end{theorem}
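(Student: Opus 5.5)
We show that co-location at an arbitrary point $x \in [0,1]$ is not an equilibrium by exhibiting a profitable one-sided deviation, in the same spirit as the proof of Theorem~\ref{theorem:fast-connections-no-colocation}. Throughout, $\mu = 1$. Let $F$ be the CDF of $\nu$. Because $\nu$ has no point masses, $F$ is continuous. By symmetry we may assume $F(x) \geq 1/2$, and we consider validator $i$ deviating to $x - \varepsilon$ for small $\varepsilon > 0$. If $F(x) < 1/2$, the deviation to $x+\varepsilon$ is handled identically.

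At co-location every validator has market share exactly $1/n$, from both user types. After the deviation, the other $n-1 \geq m$ validators still form a quorum at $x$. So $L_{-i}(u) = d(u,x)$ and $L_i(u) = d(u,x-\varepsilon) + \varepsilon$.

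We split users into the following regions.
\begin{enumerate}
    \item For $u \leq x-\varepsilon$ we have $L_i(u) = L_{-i}(u)$. All $n$ validators tie for inclusion, so validator $i$ receives exactly a $1/n$ share of the inclusion users there, whether or not there is a block time.
    \item For $u > x-\varepsilon$, validator $i$ is strictly worse. Her lag is $\Delta_i(u) = 2d(u,x-\varepsilon)$ for $u \in (x-\varepsilon, x]$ and $\Delta_i(u) = 2\varepsilon$ for $u \geq x$.
    \item Validator $i$ is the unique closest validator for every $u < x - \varepsilon/2$. She therefore captures all preconfirmation users there, contributing $(1-\alpha)F(x-\varepsilon/2)$.
\end{enumerate}

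In the non-block time model, inclusion users with $u > x-\varepsilon$ give her nothing. Her new share is therefore at least
\[
\alpha F(x-\varepsilon)/n + (1-\alpha)F(x-\varepsilon/2),
\]
which tends to $F(x)\,(\alpha/n + 1-\alpha) \geq \tfrac12(\alpha/n + 1 - \alpha)$ as $\varepsilon \to 0$. This limit strictly exceeds $1/n$ if and only if $n(1-\alpha) > 2-\alpha$, that is, $\alpha(n-1) < n-2$, equivalently $\alpha < 1 - 1/(n-1)$. By continuity of $F$, some small $\varepsilon$ then gives a strict improvement.

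In the block time model, we apply Lemma~\ref{lemma:block-time-market-share-identity}. For $u > x-\varepsilon$ the lag satisfies $\Delta_i(u) \leq 2\varepsilon$. Hence the inclusion share there is at least $\frac1n(1 - 2\varepsilon/\Delta)$, and the total inclusion loss relative to $\alpha/n$ is at most $2\alpha\varepsilon/(n\Delta) = O(\varepsilon)$. The preconfirmation component changes from $(1-\alpha)/n$ to at least $(1-\alpha)F(x-\varepsilon/2)$. As $\varepsilon \to 0$ this tends to $(1-\alpha)F(x) \geq (1-\alpha)/2 > (1-\alpha)/n$, using $n \geq 4$ and $\alpha < 1$. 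So the net change is at least $(1-\alpha)(1/2 - 1/n) - O(\varepsilon) - o(1) > 0$ for small $\varepsilon$, and the deviation is strictly profitable for every $\alpha < 1$ and every $\Delta > 0$.

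The only delicate points are bookkeeping. First, the tie among all $n$ validators on $[0,x-\varepsilon]$ must be handled correctly, which relies on $n - 1 \geq m$ for $n \geq 4$, as in Theorem~\ref{theorem:colocation-sufficient}. Second, the limits in $\varepsilon$ must be justified, which is where the no-point-mass assumption (continuity of $F$) enters. Third, the threshold computation $\alpha < 1 - 1/(n-1)$ in the sophisticated model must be done correctly.
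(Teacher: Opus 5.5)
Your proposal is correct and follows essentially the same route as the paper. Both use a deviation to $x-\varepsilon$ when $F(x)\geq 1/2$, the fact $n-1\geq m$ to get $L_{-i}(u)=d(u,x)$ and $L_i(u)=d(u,x-\varepsilon)+\varepsilon$, the exact sophisticated-model share $\frac{\alpha}{n}F(x-\varepsilon)+(1-\alpha)F(x-\varepsilon/2)$, and the $O(\varepsilon)$ inclusion-loss bound from Lemma~\ref{lemma:block-time-market-share-identity} in the block time model; the only cosmetic difference is that you let $\varepsilon\to 0$ before solving for the $\alpha$-threshold, whereas the paper first rearranges into an explicit $\varepsilon$-dependent bound on $\alpha$.
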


\begin{proof}
We show that co-location at $x \in [0,1]$ cannot be an equilibrium.
Note that each validator's market share in this case would be $1/n$.
Let $F$ denote the CDF of $\nu$.
Suppose that $F(x) \geq 1/2$, and consider validator $i$'s market share if she locates at $x-\varepsilon$ for some $\varepsilon$ instead.
In the non-block time (sophisticated user) model, her new market share would be
\[
    \frac{\alpha}{n} \cdot F(x-\varepsilon) + (1-\alpha) \cdot F(x-\varepsilon/2) > F(x-\varepsilon) \cdot \Paren{\frac{\alpha}{n} + (1 - \alpha)}
\]
which is at least $1/n$ if and only if 
\[
    \alpha \leq \frac{F(x-\varepsilon) - 1/n}{F(x - \varepsilon) - F(x - \varepsilon) / n} = 1 - \frac{1 - F(x - \varepsilon)}{n F(x - \varepsilon) - F(x - \varepsilon)}
\]
If $\alpha < 1 - 1/(n-1)$, then since
\[
    \lim_{\varepsilon \to 0^+} \frac{1 - F(x - \varepsilon)}{n F(x - \varepsilon) - F(x - \varepsilon)} \leq \frac{1}{n-1} \tag{$F(x) \geq 1/2$}
\]
the desired inequality on $\alpha$ holds for sufficiently small $\varepsilon$. Thus, $M_i > 1/n$ for sufficiently small $\varepsilon$, and co-location at $x$ cannot be an equilibrium if $F(x) \geq 1/2$.
If $F(x) < 1/2$, then carrying out the argument with $x+\varepsilon$ as the deviation yields the same conclusion.

We now turn to the block time model.
Let $\Delta$ denote the block time, and consider a deviation to $x - \varepsilon$ for $\varepsilon \ll \Delta$.
To simplify the presentation, let $L_i(u)$ and $L_{-i}(u)$ denote $L_i(u, x - \varepsilon, x)$ and $L_j(u, x - \varepsilon, x)$, which is the same for any $j \not= i$, respectively. 
Similarly, let $M_i(u)$ and $M_i$ denote $M_i(u, x-\varepsilon, x)$ and $M_i(x-\varepsilon, x)$, respectively.
Since $n \geq m + 1$ when $n \geq 4$,
\begin{align*}
    \forall\, u \in [0,x-\varepsilon], \, L_{-i}(u) - L_i(u) 
        &= d(u, x) - d(u,x-\varepsilon) - d(x-\varepsilon, x) \\
        &= 0 \\
    \forall\, u \in [x-\varepsilon, x], \, L_i(u) - L_{-i}(u)
        &= (d(u,x-\varepsilon) + \varepsilon) - d(u,x) \\
        &= 2(\varepsilon - d(u,x)) \\
    \forall\, u \in [x, 1], \, L_i(u) - L_{-i}(u)
        &= (d(u,x-\varepsilon) + \varepsilon) - d(u,x) \\
        &= 2\varepsilon 
\end{align*}
By Lemma~\ref{lemma:block-time-market-share-identity} (and since $\varepsilon \ll \Delta$),
\begin{align*}
    \forall\, u \in [0,x-\varepsilon], \, M_i(u)
        &= \textstyle \frac{\alpha}{n} + (1-\alpha)  \\
    \forall\, u \in [x-\varepsilon, x-\varepsilon/2), \, M_i(u)
        &= \textstyle \frac{\alpha}{n}(1 - \frac{2(\varepsilon - d(u,x))}{\Delta}) + (1-\alpha) \\
        &\geq \textstyle \frac{\alpha}{n} (1 - \frac{2\varepsilon}{\Delta}) + (1-\alpha) \\
    M_i(x-\varepsilon/2) 
        &= \textstyle \frac{\alpha}{n} (1 - \frac{\varepsilon}{\Delta}) + \frac{1 - \alpha}{n} \\
        &= \textstyle \frac{1}{n} - \frac{\alpha \varepsilon}{n \Delta}  \\
    \forall\, u \in (x-\varepsilon/2, x], \, M_i(u)
        &= \textstyle \frac{\alpha}{n}(1 - \frac{2(\varepsilon - d(u,x))}{\Delta}) \\
        &\geq \textstyle \frac{\alpha}{n} (1 - \frac{2\varepsilon}{\Delta}) \\
    \forall\, u \in [x, 1], \, M_i(u)
        &= \textstyle \frac{\alpha}{n} (1 - \frac{2\varepsilon}{\Delta})
\end{align*}
It follows that if $\nu$ has full support and no point masses, then
\[
    M_i > \textstyle \frac{\alpha}{n} + (1 - \alpha) \cdot F(x-\varepsilon/2) - \frac{2\alpha\varepsilon}{n\Delta} \cdot (1 - F(x-\varepsilon)),
\]
which is at least $1/n$ if and only if
\begin{align*}
    \alpha 
        &\leq \frac{F(x-\varepsilon/2) - 1/n}{F(x-\varepsilon/2) - 1/n + 2\varepsilon (1 - F(x-\varepsilon))/(n\Delta)} \\
        &= 1 - \frac{2\varepsilon (1 - F(x-\varepsilon))/(n\Delta)}{F(x-\varepsilon/2) - 1/n + 2\varepsilon (1 - F(x-\varepsilon))/(n\Delta)}
\end{align*}
If $\alpha < 1$, then since
\[
    \lim_{\varepsilon \to 0^+} \frac{2\varepsilon (1 - F(x-\varepsilon))/(n\Delta)}{F(x-\varepsilon/2) - 1/n + 2\varepsilon (1 - F(x-\varepsilon))/(n\Delta)} = 0 \tag{$F(x) \geq 1/2$}
\]
the desired inequality on $\alpha$ holds for sufficiently small $\varepsilon$. Thus, $M_i > 1/n$ for sufficiently small $\varepsilon$, and co-location at $x$ cannot be an equilibrium if $F(x) \geq 1/2$.
If $F(x) < 1/2$, then carrying out the argument with $x+\varepsilon$ as the deviation yields the same conclusion.
\qed \end{proof}

\subsection{Preconfirmations: Coverage}

\begin{lemma}\label{lemma:preconf-market-share-lower-bound}
Suppose $\alpha < 1$, and let $v_{-i}$ denote the locations of validators who are not validator $i$.
If there exists a location $u \in [0,1]$ such that $\min_{j \not= i} d(u, v_j) > \varepsilon$, then
\[
    M_i(u, v_{-i}) \geq (1 - \alpha) \cdot \nu([u \pm \varepsilon/2] \cap [0,1])
\]
Moreover, if users are uniformly distributed across $[0,1]$, then
\[
    M_i(u, v_{-i}) \geq (1 - \alpha)\varepsilon / 2
\]
\end{lemma}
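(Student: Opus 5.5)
The plan is to drop the time-to-inclusion users entirely and lower-bound only the preconfirmation share. Since the $\alpha$-term in $M_i(u,v)$ is nonnegative, we have
\[
  M_i(u, v_{-i}) \geq (1-\alpha)\int_0^1 \frac{\1(i \in P(w,(u,v_{-i})))}{|P(w,(u,v_{-i}))|}\,\mathrm{d}\nu(w).
\]
It therefore suffices to show that validator $i$, located at $u$, is the \emph{unique} closest validator for every user $w \in [u \pm \varepsilon/2] \cap [0,1]$.

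Uniqueness follows from the triangle inequality. Take such a $w$ and any $j \neq i$. Then
\[
  d(w, v_j) \geq d(u, v_j) - d(u,w) > \varepsilon - \varepsilon/2 = \varepsilon/2 \geq d(w,u).
\]
So $P(w,(u,v_{-i})) = \{i\}$, and the integrand equals $1$ on this interval. Integrating gives
\[
  M_i(u,v_{-i}) \geq (1-\alpha)\,\nu([u\pm\varepsilon/2]\cap[0,1]).
\]
This needs no case split on whether other validators exist on each side, and no appeal to the quorum structure or to Lemma~\ref{lemma:block-time-market-share-identity}.

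For the uniform case, $\nu([u\pm\varepsilon/2]\cap[0,1])$ is the length of the interval $[\max\{0,u-\varepsilon/2\},\min\{1,u+\varepsilon/2\}]$. At least one of its two halves, $[u-\varepsilon/2,u]$ or $[u,u+\varepsilon/2]$, lies fully inside $[0,1]$ as long as $\varepsilon \le 1$. That half alone contributes length $\varepsilon/2$, which gives the claimed bound $(1-\alpha)\varepsilon/2$.

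The main obstacle is the edge regime where $\varepsilon$ is large relative to $[0,1]$ (for instance $\varepsilon > 1$, where neither half may fit). Note that the hypothesis $\min_{j\neq i} d(u,v_j) > \varepsilon$ with all $v_j \in [0,1]$ forces $\varepsilon < 1$ whenever any other validator exists, and $n \geq 4$ guarantees one does. So the bound holds throughout. This mirrors the paper's convention in Lemma~\ref{lemma:fast-connections-market-share-lower-bound}, where only the one-sided half-interval is used.
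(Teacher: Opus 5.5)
Your proof is correct and follows the same route as the paper. You drop the time-to-inclusion users, use the triangle inequality to show that validator $i$ at $u$ is the unique closest validator for every user in $[u \pm \varepsilon/2] \cap [0,1]$, and in the uniform case bound the measure by the longer half-interval; your explicit check that the hypothesis forces $\varepsilon < 1$ (so one half-interval fits inside $[0,1]$) fills in a step the paper leaves implicit.
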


\begin{proof}
If there exists a location $u \in [0, 1]$ such that $\min_{j \not= i} d(u,v_j) > \varepsilon$, then by locating at $u$, validator $i$ becomes the unique closest validator at least for the users in $[u \pm \varepsilon/2] \cap [0,1]$.
Thus,
\[
    M_i(u, v_{-i}) \geq (1 - \alpha) \cdot \nu([u \pm \varepsilon/2] \cap [0,1]) 
\]
If a continuum of users is uniformly distributed across $[0,1]$, then we can further lower bound the market share more explicitly using the longer side.
\[
    M_i(u, v_{-i}) \geq (1 - \alpha)\varepsilon / 2
\]
\qed \end{proof}

\begin{lemma}\label{lemma:preconf-probability-upper-bound}
Suppose $\alpha < 1$.
If the configuration $v$ is a mixed equilibrium, then for all locations $u \in [0,1]$,
\[
    \PP_v[{\textstyle \min_i d(u, v_i) > \varepsilon}] \leq \min\CrBr{1, \frac{1}{n (1-\alpha) \cdot \nu([u \pm \varepsilon/2] \cap [0,1]) }}.
\]
Moreover, if users are uniformly distributed across $[0,1]$, then
\[
    \PP_v[{\textstyle \min_i d(u, v_i) > \varepsilon}] \leq \min\CrBr{1, \frac{2}{n (1-\alpha) \varepsilon}}
\]
\end{lemma}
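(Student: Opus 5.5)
The proof will mirror that of Lemma~\ref{lemma:fast-connections-probability-upper-bound}, with Lemma~\ref{lemma:preconf-market-share-lower-bound} supplying the guaranteed market share in place of Lemma~\ref{lemma:fast-connections-market-share-lower-bound}. The argument is by contradiction. Fix a mixed equilibrium $v$, a location $u \in [0,1]$, and $\varepsilon > 0$. If the right-hand side equals $1$, the bound is trivial, so assume
\[
    \PP_v[{\textstyle \min_i d(u, v_i) > \varepsilon}] > \frac{1}{n(1-\alpha)\cdot \nu([u \pm \varepsilon/2] \cap [0,1])}.
\]

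The first step is to pick the validator with the least to lose. Let $j \in \arg\min_i \EE_v[M_i(v)]$. Every user's mass is split among the validators, in both the inclusion term and the preconfirmation term, so $\sum_i M_i(v) = 1$ pointwise. Hence $\EE_v[M_j(v)] \le 1/n$.

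The second step is to deviate validator $j$ to the pure location $u$ and lower-bound her expected payoff. The event $\{\min_i d(u,v_i) > \varepsilon\}$ is contained in $\{\min_{i \ne j} d(u, v_i) > \varepsilon\}$. By independence of the mixed strategies, the latter event depends only on $v_{-j}$. On this event, Lemma~\ref{lemma:preconf-market-share-lower-bound} gives
\[
    M_j(u, v_{-j}) \ge (1-\alpha)\,\nu([u\pm\varepsilon/2]\cap[0,1]),
\]
and market share is nonnegative otherwise. Therefore
\[
    \EE_{v_{-j}}[M_j(u, v_{-j})] \ge \PP_v[{\textstyle\min_i d(u,v_i)>\varepsilon}]\cdot(1-\alpha)\,\nu([u\pm\varepsilon/2]\cap[0,1]) > \frac{1}{n} \ge \EE_v[M_j(v)].
\]
This contradicts the equilibrium condition, which proves the general bound.

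The uniform case follows by substituting the explicit bound from Lemma~\ref{lemma:preconf-market-share-lower-bound}: the guaranteed share is at least $(1-\alpha)\varepsilon/2$, since at least one side of $[u \pm \varepsilon/2]$ lies inside $[0,1]$ when $\varepsilon \le 1$, and the bound is vacuous otherwise. This gives the tail bound $\min\{1, 2/(n(1-\alpha)\varepsilon)\}$.

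The only delicate point is that the deviation bound must be taken in expectation over $v_{-j}$ alone. This requires passing from the event defined on all of $v$ to the event defined only on the opponents $v_{-j}$. That step uses monotonicity of the events together with the independence of the mixed strategies, and it is otherwise routine.
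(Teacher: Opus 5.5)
Your proposal is correct and matches the paper's own proof step for step: take the validator $j$ with the smallest expected market share (at most $1/n$), deviate her to $u$, combine the inclusion $\{\min_i d(u,v_i)>\varepsilon\}\subseteq\{\min_{i\neq j} d(u,v_i)>\varepsilon\}$ with Lemma~\ref{lemma:preconf-market-share-lower-bound} to push her expected share above $1/n$, and then substitute the bound $(1-\alpha)\varepsilon/2$ for uniform users. Your justification of that substitution is a useful detail the paper leaves implicit; one wording fix: for $\varepsilon>1$ the bound is not ``vacuous''. It holds because the event $\min_i d(u,v_i)>\varepsilon$ is empty.
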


\begin{proof}
Suppose there exists $u \in [0,1]$ such that
\[
    \PP_v[{\textstyle \min_i d(u, v_i) > \varepsilon}] > \frac{1}{n (1-\alpha) \cdot \nu([u \pm \varepsilon/2] \cap [0,1]) }
\]
Let $j \in \arg\min_i \EE_v[M_i(v)]$, and note that $\EE_v[M_j(v)] \leq 1/n$ since $\EE_v[\sum_i M_i(v)] = 1$.
By deviating to $u$, her expected market share becomes at least
\begin{align*}
    \PP_v & \textstyle [\min_{i \not= j} d(u, v_i) > \varepsilon] \cdot (1-\alpha) \cdot \nu([u \pm \varepsilon/2] \cap [0,1]) \\
        &\geq \textstyle \PP_v[\min_i d(u, v_i) > \varepsilon] \cdot (1-\alpha) \cdot \nu([u \pm \varepsilon/2] \cap [0,1]) \\
        &> 1/n
\end{align*}
by Lemma~\ref{lemma:preconf-market-share-lower-bound}, so $v$ cannot be a mixed equilibrium.
The result for uniformly distributed users then follows from substituting in the corresponding lower bound from Lemma~\ref{lemma:preconf-market-share-lower-bound}.
\qed \end{proof}

\begin{theorem}\label{theorem:preconf-expected-covering-radius}
Suppose $\alpha < 1$.
If the configuration $v$ is a mixed equilibrium, then for all locations $u \in [0,1]$,
\[
    \EE_v[{\textstyle \min_i d(u, v_i)}] \leq \int_0^1 \min\CrBr{1, \frac{1}{n (1-\alpha) \cdot \nu([u \pm \varepsilon/2] \cap [0,1]) }} \,\mathrm{d}\varepsilon.
\]
Moreover, if users are uniformly distributed across $[0,1]$, then
\[
    \EE_v[{\textstyle \min_i d(u, v_i)}] \leq O\Paren{\frac{1 + \log (\max \{1, n(1-\alpha)\})}{\max \{1, n(1-\alpha)\}}}.
\]
\end{theorem}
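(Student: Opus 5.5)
The first part follows the pattern of Theorem~\ref{theorem:fast-connections-expected-covering-radius}. For a nonnegative random variable, $\EE[X] = \int_0^\infty \PP[X > \varepsilon]\,\mathrm{d}\varepsilon$. Since $\min_i d(u, v_i) \le 1$ always, the integral can be truncated at $1$. Plugging in the tail bound of Lemma~\ref{lemma:preconf-probability-upper-bound} pointwise in $\varepsilon$ then gives the stated general bound directly. (Where $\nu([u\pm\varepsilon/2]\cap[0,1]) = 0$, the minimum with $1$ handles it.)

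For the uniform case, I would use the second part of Lemma~\ref{lemma:preconf-probability-upper-bound}, namely $\PP_v[\min_i d(u,v_i) > \varepsilon] \le \min\{1, 2/(n(1-\alpha)\varepsilon)\}$. Write $K = n(1-\alpha)$ and split into two cases.

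\textbf{Case $K \le 2$.} The integrand is trivially at most $1$, so $\EE_v[\min_i d(u,v_i)] \le 1$. This matches $O((1+\log\max\{1,K\})/\max\{1,K\})$, because when $K \le 2$ that expression is $\Theta(1)$.

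\textbf{Case $K > 2$.} Split the integral at $\varepsilon_0 = 2/K < 1$:
\[
\int_0^1 \min\CrBr{1, \frac{2}{K\varepsilon}} \,\mathrm{d}\varepsilon \le \frac{2}{K} + \int_{2/K}^1 \frac{2}{K\varepsilon}\,\mathrm{d}\varepsilon = \frac{2}{K}\Paren{1 + \log\frac{K}{2}} \le \frac{2(1+\log K)}{K}.
\]
Since $\max\{1,K\} = K$ here, this is $O((1+\log\max\{1,K\})/\max\{1,K\})$.

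There is no real obstacle in this proof. The only point needing care is the case split at $K = 2$ versus $K = 1$: on $K \in (1, 2]$, the expression $(1+\log K)/K$ is bounded below by a constant, so the trivial bound $1$ is absorbed into the $O(\cdot)$. I would state this explicitly so that the constant is seen to be uniform in $n$ and $\alpha$.
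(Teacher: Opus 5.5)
Your proof is correct and matches the paper's: the tail-integral formula combined with Lemma~\ref{lemma:preconf-probability-upper-bound} gives the general bound. In the uniform case, splitting the integral at $\varepsilon_0 = 2/(n(1-\alpha))$ is exactly the paper's computation under its assumption $\alpha \le 1 - 2/n$. Your explicit handling of the remaining case $n(1-\alpha) \le 2$ by the trivial bound $1$, which you check is absorbed into the $O(\cdot)$, fills in a case the paper leaves implicit.
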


\begin{proof}
The general result follows from Lemma~\ref{lemma:preconf-probability-upper-bound} and the (alternative) definition of the expectation of a real-valued random variable.
For uniformly distributed continuum of users, if $\alpha \leq 1 - 2/n$, then
\begin{align*}
    \EE_v[{\textstyle \min_i d(u, v_i)}] 
        & \leq \frac{2}{n(1-\alpha)} + \int_{2/(n(1-\alpha))}^1 \frac{2}{n(1-\alpha)\varepsilon} \,\mathrm{d}\varepsilon \\
        &= \frac{2(1 + \log (n(1-\alpha)/2))}{n(1-\alpha)}
\end{align*}
\qed \end{proof}

\section{Attester-Proposer Separation}\label{section:appendix-APS}

The main result of this section is that, when attesters and proposers are disjoint, almost all proposers locate at the midpoints of attester quorums at equilibrium, provided that the attester distribution satisfies a regularity condition.
The only proposers who may locate elsewhere are the left- and rightmost proposers, so any departures from this pattern are confined to the boundaries.

\begin{theorem}[Informal, see Theorem~\ref{theorem:attesters-Hotelling-reduction}]\label{theorem:attesters-informal}
Suppose the user distribution $\nu$ has full support and that each quorum of attesters has the same diameter.
Let $p$ denote a pure equilibrium configuration of proposers.
\begin{enumerate}
    \item All proposers except possibly the left- and rightmost proposers locate at a midpoint of a quorum of attesters.
    \item If the leftmost proposer does not locate at a midpoint, then the closest midpoint must lie to her left, and her closest peer must lie strictly to her right.
    \item Similarly, if the rightmost proposer does not locate at a midpoint, then the closest midpoint must lie to her right, and her closest peer must lie strictly to her left.
\end{enumerate}
\end{theorem}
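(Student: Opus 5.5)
The plan is to show that every proposer who can cut her quorum radius by moving toward an adjacent proposer strictly gains, and then to see which locations survive. Write $\rho$ for the common diameter of the attester quorums, and let $Q$ be the set of midpoints of minimal quorums. These are the points where $D(x)=\rho/2$. The first step is structural: show that $D(x) = \rho/2 + d(x,Q)$. Equivalently, $D$ is piecewise linear with slope $\pm 1$ everywhere off $Q$, and from any $x\notin Q$ one can move toward the nearest midpoint while reducing $D$ at rate exactly one.

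I would argue this directly from the inf–sup definition. For any quorum $S$, $\sup_{y\in S} d(x,y)\ge \rho(S)/2 + d(x,\mathrm{mid}(S))$, because the farther endpoint of the hull of $S$ lies at that distance. Taking the nearest $S$, with minimal diameter $\rho$ by the equal-diameter assumption, gives equality. The continuous-attester case needs a limiting argument with $\lambda(S)>2/3$, and I expect that to be a technical nuisance rather than a real obstacle.

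Since $D$ depends only on location, condition 2 of Lemma~\ref{lemma:profitable-deviation-helper} is automatic. Hence moving a proposer $i$ with $p_i\notin Q$ a distance $\varepsilon$ toward her nearest midpoint satisfies condition 1 and is weakly profitable. For strictness, I would mimic the proof of Lemma~\ref{lemma:profitable-deviation}: restrict the move so that $p'_i$ does not pass an adjacent proposer, and check that in the direction of motion there is another proposer $j$. Let $a$ be the right endpoint of the set of users for which $\Delta_i=0$. Two cases arise.

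\textbf{Case $a>p_i$.} Users in $(a, a+\varepsilon)$ switch strictly to $i$, and they have positive mass by full support.

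\textbf{Case $a=p_i$.} Users in $(p_i,p'_i)$ satisfy $\Delta_i<\Delta+\eta$.

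In either case the necessary condition for equality in Lemma~\ref{lemma:profitable-deviation-helper} fails. The intervals involved are independent of $\Delta$, so the result also holds at $\Delta=0$. One also needs the analogue of Lemma~\ref{lemma:general-market-share-intervals}, namely $D(p_i)-D(p_j)\le d(p_i,p_j)$, which follows from the 1-Lipschitz property of $D$ established in the first step.

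The conclusion is that at equilibrium every proposer off $Q$ has no other proposer in the direction of her nearest midpoint. The main obstacle is handling that direction carefully when the proposer sits exactly halfway between two midpoints, or when a tied or co-located proposer shares her location. If another proposer lies on the descending side, the deviation is strictly profitable. If $i$ is interior, meaning there are proposers on both sides, then whichever side her nearest midpoint lies on contains a proposer, a contradiction; this gives item 1. For the leftmost proposer off $Q$, the descent direction must contain no proposer, so the nearest midpoint lies to her left.

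Her closest peer must lie strictly to her right. If some peer were co-located with her, that peer would also be off $Q$. Then the move left is still toward the nearest midpoint, and I must show it is strictly profitable because it separates her from the tie. To handle this, I would rerun the helper-lemma calculation with a co-located $j$: after the move, $j$'s latency is unchanged while $i$'s drops for users on the left. The tied users at $u\le p_i$ now strictly prefer $i$, which gives a positive-mass gain whenever there are users to the left, and full support guarantees this unless $p_i=0$. If $p_i=0$, a midpoint cannot lie to her left, so the descent direction is right, which contains her peers only through co-location; this is the case that needs extra care. Item 3 follows by symmetry.
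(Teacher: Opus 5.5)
Your plan is, in all essentials, the paper's proof of Theorem~\ref{theorem:attesters-Hotelling-reduction}. It has the same three ingredients: the identity $D(x)=D^*+d(x,X)$, Lemma~\ref{lemma:profitable-deviation-helper} with its second hypothesis automatic, and the split on $a>p_i$ versus $a=p_i$. The one structural difference is that the paper moves proposer $i$ all the way to her nearest point $x\in X$ rather than by $\varepsilon$. That makes a non-passing restriction unnecessary, and it turns a co-located peer into just the sub-case $p_j=p_i$ of ``the first proposer weakly toward $x$.'' Your local step instead forces you to treat co-location separately.

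That separate treatment is where your sketch is incomplete, and you misdiagnose the case you defer. If the nearest midpoint lies to her left, then $p_i>0$ automatically, so ``unless $p_i=0$'' never applies. The case your two cases actually miss is when every peer on the descent side is co-located with $i$, at any location. Examples are all proposers stacked at one point outside $Q$ (which you must exclude to get the first half of item~2), or a proposer whose only neighbour toward her nearest midpoint shares her location (item~1). Take the motion to the right, as in your case analysis. There is then no proposer strictly to the right, so $a=1$, the interval $(a,a+\varepsilon)$ is empty, and the case $a=p_i$ cannot occur; neither argument yields a positive-measure set. Your own tie-breaking idea closes the case once you apply it on the side of motion rather than the opposite side. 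For $u\in(p_i,1]$ the co-located peer $j$ has $\Delta_j(u,p)=\Delta_i(u,p)=0$, while $\delta(u)=-\eta(u)<0$. So the equality condition ``$\Delta_j(u,p)\geq\Delta$ for all $j\neq i$'' of Lemma~\ref{lemma:profitable-deviation-helper} fails on a $\Delta$-independent set of positive $\nu$-measure, which is nonempty because $p_i<x\leq 1$. This is exactly how the paper handles $p_j=p_i$.

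Two smaller points.
\begin{itemize}
\item In your first step, $\sup_{y\in S}d(x,y)=\mathrm{diam}(S)/2+d(x,\mathrm{mid}(S))$ does not by itself give $D(x)\geq\rho/2+d(x,Q)$. A non-minimal quorum can have its midpoint closer to $x$ than every point of $Q$. What you need is that the hull of every quorum contains a minimal one, whose farther endpoint from $x$ is no farther away; this is what Lemma~\ref{lemma:attesters-quorum-radius-intermediary-identity} supplies.
\item In the case $a>p_i$ you should verify $a<p_j$ for the first proposer $j$ strictly toward $x$. Here it is immediate from the identity for $D$, since $L_i(p_j,p)-L_j(p_j,p)\geq 2\min\{d(p_i,p_j),d(p_i,x)\}>0$.
\end{itemize}
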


When each quorum of attesters has the same diameter, the quorum radius of each midpoint coincides.
Thus, if proposers only locate at the midpoints of these quorums at equilibrium, then users determine who to send their transactions to based only on their latencies to proposers and not their quorum radii.
This reduces the game to a pure Hotelling game over the midpoints of attester quorums.
Recall that pure Hotelling games have spread out equilibria.

In Section~\ref{section:Hotelling-reduction-applications}, we apply this result to the cases in which the attester distribution is uniform and either discrete (e.g., the number of attesters is finite) or continuous (e.g., the number of attesters is large).
In these cases, when users are uniformly and continuously distributed, all proposers must locate at a midpoint of a quorum of attesters at equilibrium, so all equilibria are Hotelling equilibria over the midpoints.
We characterize the set of equilibria in these two cases.

We remind the reader of the model and the relevant notation for this section.
Instead of propagating transactions to a quorum of other proposers, proposers now propagate transactions to a quorum of attesters.
Suppose attesters are distributed across $[0,1]$ according to the probability measure $\lambda$.
A proposer who locates at $x \in [0,1]$ has quorum radius
\begin{align*}
    D(x) 
        &= \textstyle \inf \{\sup_{y \in S} d(x,y) : S \subseteq [0,1]: \lambda(S) > 2/3\} \\
        &= \inf \{\max\{d(x,a), d(x,b)\} : [a,b] \subseteq [0,1]: \lambda([a,b]) > 2/3\}.
\end{align*}
Users are distributed along the unit interval according to some probability measure $\nu$ and send transactions at a constant rate for $\Delta$ time.

\begin{enumerate}
    \item $L_i(u,p) = d(u, p_i) + D(p_i)$ denotes the time to inclusion for a user at $u$ who sends to proposer $i$. 
    \item $S(u, p) = \arg\min_i L_i(u,p)$ denotes the set of proposers who minimize the time to inclusion for location $u$.
    \item $S_\tau (u,p) = \arg\min_i \lceil (L_i(u,p) + \tau) / \Delta \rceil$ denotes the set of proposers who can first include a transaction sent from $u$ at time $\tau$ in a block.
    \item Proposer $i$'s share of users at location $u$ and total market share are given by
    \[
        M_i(u, p) = \frac{1}{\Delta} \int_0^\Delta \frac{\1(i \in S_\tau(u,p))}{|S_\tau(u,p)|} \,\mathrm{d}\tau \quad \text{and} \quad M_i(p) = \int_0^1 M_i(u,p) \,\mathrm{d}\nu(u),
    \]
    respectively.
    \item $\Delta_i(u,p) = L_i(u,p) - \min_j L_j(u,p)$ denotes how much longer it takes proposer $i$ to propagate a transaction from location $u$ compared to her optimal proposer.
    \item $U_i(p) = \{u : i \in S(u, p)\}$ denotes the set of users whose latency-minimizing proposers include proposer $i$.
\end{enumerate}

\subsection{A Regularity Condition}

We prove some facts about the model and formalize what we mean when we say that each attester quorum has the same radius.
The following lemma will allow us to apply Lemma~\ref{lemma:profitable-deviation-helper} and show that the set of locations for which proposer $i$ minimizes time to inclusion is a closed interval containing her location.

\begin{lemma}\label{lemma:attester-quorum-radius-lipschitz}
$D$ is 1-Lipschitz.
\end{lemma}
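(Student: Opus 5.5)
The plan is to show directly that $D(x) \leq D(x') + d(x,x')$ for all $x,x' \in [0,1]$; by symmetry this gives $|D(x) - D(x')| \leq d(x,x')$. The key observation is that the feasible family in the definition of $D$, namely the sets $S \subseteq [0,1]$ with $\lambda(S) > 2/3$, does not depend on the proposer's location $x$. Only the objective $\sup_{y \in S} d(x,y)$ changes with $x$.

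Fix $x, x'$ and any feasible $S$. By the triangle inequality, $d(x,y) \leq d(x,x') + d(x',y)$ for each $y \in S$. Taking the supremum over $y \in S$ gives
\[
    \textstyle \sup_{y \in S} d(x,y) \leq d(x,x') + \sup_{y \in S} d(x',y).
\]
The left-hand side is at least $D(x)$, since $S$ is feasible for $x$. Taking the infimum over feasible $S$ on the right then yields $D(x) \leq d(x,x') + D(x')$. This is the standard fact that an infimum of 1-Lipschitz functions is 1-Lipschitz, since each map $x \mapsto \sup_{y \in S} d(x,y)$ is itself 1-Lipschitz.

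The one point needing care is finiteness, so that the subtraction is legitimate. The set $S = [0,1]$ is always feasible because $\lambda([0,1]) = 1 > 2/3$, so $D(x) \leq 1$ everywhere and $D$ is finite.

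The same argument applies verbatim to the interval form of the definition, since the family of intervals $[a,b]$ with $\lambda([a,b]) > 2/3$ is likewise independent of $x$. No real obstacle is expected. The only subtlety is not to confuse the $x$-independence of the feasible family with the min/max structure of the quorum radius when proposers are themselves the attesters; there, the feasible quorums move with the deviating proposer, but that case is not at issue here.
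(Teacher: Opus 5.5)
Your proof is correct and follows the paper's argument: triangle inequality pointwise, supremum over a fixed feasible $S$, infimum over the $x$-independent family of sets with $\lambda(S) > 2/3$, then symmetry. Your finiteness check via the always-feasible choice $S = [0,1]$ is a small detail the paper leaves implicit.
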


\begin{proof}
That $|D(x) - D(z)| \leq d(x,z)$ for all $x, z \in [0,1]$ follows from the triangle inequality: for all $y \in [0, 1]$,
\[
    d(x,y) \leq d(x, z) + d(z, y).
\]
For any $S \subseteq [0,1]$, taking the supremum over $y \in S$ yields 
\[
    \sup_{y \in S} d(x,y) \leq d(x, z) + \sup_{y \in S} d(z,y).
\]
Taking the infimum over $S \subseteq [0,1]$ such that $\lambda(S) > 2/3$ then yields 
\[
    D(x) \leq d(x, z) + D(z).
\]
That $D(z) - D(x) \leq d(x,z)$ as well follows from the same argument but with $x$ and $z$ swapped.
\qed \end{proof}

\begin{lemma}\label{lemma:attester-market-share-intervals}
$U_i(p)$ is a closed interval containing $p_i$.
Moreover, if the interval $(p_i, p_j)$ contains no validators, then $U_i(p) \cap U_j(p) \not= \varnothing$.
\end{lemma}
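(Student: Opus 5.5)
The plan is to reduce the statement to Lemma~\ref{lemma:general-market-share-intervals}, exactly as Lemma~\ref{lemma:market-share-intervals} did in the co-location section. That lemma gives both conclusions, that $U_i(p)$ is a closed interval containing $p_i$ and that adjacent proposers' sets intersect. Its only hypothesis is that $D_i(p) - D_j(p) \leq d(p_i, p_j)$ for all proposers $i$ and $j$; the factor $\mu \leq 1$ enters only through the triangle-inequality steps there, and we work with $\mu \in [0,1]$ throughout.

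In the attester-proposer separation model, the quorum radius of proposer $i$ is $D(p_i)$. It depends only on her own location, not on her identity or on $p_{-i}$. The hypothesis therefore reads
\[
    D(p_i) - D(p_j) \leq d(p_i, p_j),
\]
which is immediate from Lemma~\ref{lemma:attester-quorum-radius-lipschitz}, since $D$ is $1$-Lipschitz. With that hypothesis verified, both claims follow directly from Lemma~\ref{lemma:general-market-share-intervals}.

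The only point to check is that nothing in the proof of Lemma~\ref{lemma:general-market-share-intervals} used the specific form of the validator quorum radius. I would verify this by rereading it. It uses only the hypothesis above, the triangle inequality for $d$, the continuity of $L_i$ in $u$, and the upper-hemicontinuity of $S(u,p)$; the last two give existence of $\min U_i(p)$ and $\max U_i(p)$. Continuity of $L_i(u,p) = d(u,p_i) + \mu D(p_i)$ in $u$ is clear, because the second term is constant in $u$. This verification is the only step needing care, and I expect no real obstacle: the argument is a one-line reduction, with the Lipschitz lemma having been placed just beforehand for exactly this purpose.
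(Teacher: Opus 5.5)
Your proposal is correct and matches the paper's proof: the paper also obtains this lemma in one line from Lemma~\ref{lemma:general-market-share-intervals}. It verifies that lemma's hypothesis $D(p_i) - D(p_j) \leq d(p_i, p_j)$ using the 1-Lipschitz property of $D$ (Lemma~\ref{lemma:attester-quorum-radius-lipschitz}).
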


\begin{proof}
Follows from Lemmas~\ref{lemma:general-market-share-intervals} and~\ref{lemma:attester-quorum-radius-lipschitz}.
\qed \end{proof}

We now formalize what it means for each attester quorum to have the same diameter.
Define $\mathcal{I}$ as the set of closed intervals that contain more than $2/3$ of the attesters and $\overline{\mathcal{I}}$ as its ``closure.''
\begin{align*}
    \mathcal{I} &= \{[a, b] \subseteq [0,1] : \lambda([a,b]) > 2/3\} \\
    \overline{\mathcal{I}} &= \CrBr{\bigcap_{k=1}^\infty I_k: I_k \in \mathcal{I}, I_k \supseteq I_{k+1}}
\end{align*}
Let $\mathcal{I}^*$ then denote the set of minimal elements of $\overline{\mathcal{I}}$.
\[
    \mathcal{I}^* = \{I \in \overline{\mathcal{I}} : \nexists\,J \in \overline{\mathcal{I}} : J \subsetneq I\}
\]

To better understand each of these sets, consider their elements when the attester distribution $\lambda$ is continuous and uniform.
$\mathcal{I}$ contains elements like $[0, 2/3 + \varepsilon]$ for all $\varepsilon > 0$ but not $[0,2/3]$ since the measure of the latter interval is exactly $2/3$, not greater.
On the other hand, $\overline{\mathcal{I}}$ does contain $[0,2/3]$ since it is the intersection of all intervals of the form $[0, 2/3 + \varepsilon]$, which are elements of $\mathcal{I}$.
However, since $[0, 2/3 + \varepsilon]$ does not determine any location's quorum radius (the quorum radius takes the infimum), $\mathcal{I}^*$ prunes these elements and only keeps intervals like $[0,2/3]$, which are minimal in $\overline{\mathcal{I}}$.

The next lemma relates the quorum radius $D$ to $\mathcal{I}^*$.

\begin{lemma}\label{lemma:attesters-quorum-radius-intermediary-identity}
$D(x) = \min \{\max \{d(x,a), d(x,b)\} : [a,b] \in \mathcal{I}^*\}$ for all $x \in [0,1]$.
\end{lemma}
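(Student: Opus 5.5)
We prove the identity by establishing the two inequalities separately. Throughout, write $f_x([a,b]) = \max\{d(x,a), d(x,b)\}$. This function is continuous in the endpoints and monotone under inclusion: if $J \subseteq I$, then $f_x(J) \leq f_x(I)$ whenever $x \in J$. More generally, $f_x([a,b]) = \max\{|x-a|,|x-b|\}$ is nondecreasing as $a$ decreases or $b$ increases, for any $x$.

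\emph{Lower bound ($\geq$).} By the second expression for $D(x)$, which restricts attention to intervals, take a sequence $[a_k,b_k] \in \mathcal{I}$ with $f_x([a_k,b_k]) \downarrow D(x)$. We want to replace it by a nested sequence. Passing to a subsequence, we may assume $a_k \to a$ and $b_k \to b$ by compactness. Then we replace each interval by $I_k = [a - 1/k, b + 1/k] \cap [0,1]$. To see that $I_k \in \mathcal{I}$, note that for $k$ large, $I_k \supseteq [a_j, b_j]$ for some $j$, so $\lambda(I_k) > 2/3$ by monotonicity of $\lambda$. The sets $I_k$ are nested and decreasing with intersection $[a,b]$, so $[a,b] \in \overline{\mathcal{I}}$, and $f_x([a,b]) = \lim f_x([a_k,b_k]) = D(x)$ by continuity. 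It remains to find a minimal element of $\overline{\mathcal{I}}$ inside $[a,b]$. We would argue by Zorn's lemma, or more concretely by compactness: a decreasing chain in $\overline{\mathcal{I}}$ has intersection again in $\overline{\mathcal{I}}$ via a diagonal argument on the defining nested sequences. This gives some $J^* \in \mathcal{I}^*$ with $J^* \subseteq [a,b]$, and by monotonicity under inclusion, $f_x(J^*) \leq D(x)$.

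\emph{Upper bound ($\leq$).} Let $J = [a,b] \in \mathcal{I}^*$ with $J = \bigcap_k I_k$, where the $I_k \in \mathcal{I}$ are nested. Nested closed intervals have endpoints converging monotonically to $a$ and $b$, so $f_x(I_k) \to f_x(J)$. Since each $I_k \in \mathcal{I}$, we have $D(x) \leq f_x(I_k)$, and letting $k \to \infty$ gives $D(x) \leq f_x(J)$. Taking the infimum over $J \in \mathcal{I}^*$ gives the inequality. Combined with the lower bound, which exhibits some $J^* \in \mathcal{I}^*$ attaining the value $D(x)$, this shows the infimum is achieved, so the identity holds with $\min$.

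\emph{Main obstacle.} The delicate step is showing that $\overline{\mathcal{I}}$ is closed under intersections of decreasing chains, so that minimal elements exist below every element. We would first try a direct construction. Parametrize elements of $\overline{\mathcal{I}}$ by their endpoint pairs $(a,b)$ and check that this set is closed in $[0,1]^2$. An element $[a,b]$ lies in $\overline{\mathcal{I}}$ iff $\lambda([a-\epsilon,b+\epsilon]\cap[0,1]) > 2/3$ for every $\epsilon > 0$, and this condition is preserved under limits of endpoints. Having this characterization, we take the set of pairs $(a', b')$ with $a \leq a' \leq b' \leq b$ in this closed set and pick one minimizing $b' - a'$. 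Such a pair exists by compactness, and any strictly smaller element of $\overline{\mathcal{I}}$ would have smaller length, so it is minimal under inclusion, which gives $J^*$ directly.
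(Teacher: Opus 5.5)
Your proof is correct and follows essentially the same route as the paper: approximate $D(x)$ by intervals in $\mathcal{I}$, extract convergent endpoints to get a limit interval in $\overline{\mathcal{I}}$ that attains $D(x)$, then pass to a minimal element of $\overline{\mathcal{I}}$ inside it. You also supply steps the paper leaves implicit, namely the nested sequence $[a-1/k,b+1/k]\cap[0,1]$ certifying that the limit lies in $\overline{\mathcal{I}}$, the closedness-plus-length-minimization argument that a minimal element exists below it, and the reverse inequality $D(x)\le \max\{d(x,a),d(x,b)\}$ for every $[a,b]\in\overline{\mathcal{I}}$.
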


\begin{proof}
Let $x \in [0,1]$.
For all $\varepsilon > 0$, there exist $a_\varepsilon, b_\varepsilon \in [0,1]$ such that $\lambda([a_\varepsilon, b_\varepsilon]) > 2/3$ and
\[
    D(x) \leq \max\{d(x,a_\varepsilon), d(x,b_\varepsilon)\} < D(x) + \varepsilon.
\]
Let $a_\varepsilon \to a, b_\varepsilon \to b$ as $\varepsilon \to 0^+$ (passing through a subsequence if necessary), and note that $[a,b] \in \overline{\mathcal{I}}$ and that
\[
    D(x) = \max\{d(x,a), d(x,b)\}.
\]
If $[a,b] \in \mathcal{I}^*$, then we are done.
Otherwise, there exists $[f,g] \in \mathcal{I}^*$ such that $[f, g] \subsetneq [a,b]$ and thus, witnesses the desired identity.
\qed \end{proof}

Let $D^*$ and $X$ denote the minimum possible quorum radius and the set of locations that achieve this radius, respectively.
\begin{align*}
    D^* &= \min \{D(x) : x \in [0,1]\} \\
    X &= \arg\min \{D(x) : x \in [0,1]\}
\end{align*}
$D^*$ is well-defined and $X$ is non-empty since $D$ is continuous by Lemma~\ref{lemma:attester-quorum-radius-lipschitz}. 
The next lemma says that the locations with minimum quorum radius are midpoints of intervals in $\mathcal{I}^*$.
When each interval in $\mathcal{I}^*$ has the same diameter, their midpoints are identically the locations that minimize the quorum radius.

Lemma~\ref{lemma:attester-quorum-radius-identity} then provides a useful identity of the quorum radius: the quorum radius of any location $x$ is the sum of the minimum quorum radius and the distance between $x$ and the nearest location that achieves this minimum value.
We do not use Lemma~\ref{lemma:attester-midpoints-weakly-dominant} anywhere, but it emphasizes the idea that locating outside of the locations that minimize the quorum radius is not a good idea: these locations are weakly dominated.

\begin{lemma}\label{lemma:attester-quorum-radius-minimizers-identity}
$X \subseteq \{(a+b)/2 : [a, b] \in \mathcal{I}^*\}$, with equality if $\mathrm{diam}(I) = \mathrm{diam}(J)$ for all $I, J \in \mathcal{I}^*$.
\end{lemma}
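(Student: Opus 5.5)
By Lemma~\ref{lemma:attesters-quorum-radius-intermediary-identity}, for every $x$ we have
\[
D(x) = \min_{[a,b]\in\mathcal{I}^*} \max\{d(x,a),d(x,b)\}.
\]
For a fixed interval $[a,b]$, the function $x\mapsto\max\{|x-a|,|x-b|\}$ equals $|x-(a+b)/2| + (b-a)/2$. It is therefore uniquely minimized at the midpoint $(a+b)/2$, with value $\mathrm{diam}([a,b])/2$. This gives the working identity
\[
D(x) = \min_{[a,b]\in\mathcal{I}^*} \Bigl( \bigl|x-\tfrac{a+b}{2}\bigr| + \tfrac{b-a}{2} \Bigr).
\]
I will also need that the minimum over $\mathcal{I}^*$ is attained. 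This is already guaranteed by the cited lemma, since it writes $D(x)$ as a min.

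\textbf{Inclusion.} Let $x\in X$, and let $[a,b]\in\mathcal{I}^*$ attain the minimum at $x$, so $D^* = D(x) = |x-(a+b)/2| + (b-a)/2$. Evaluating at $x'=(a+b)/2$ gives $D(x') \le (b-a)/2$. Since $D(x)=D^*\le D(x')$, this forces $|x-(a+b)/2| \le 0$. Hence $x=(a+b)/2$, which is the midpoint of an element of $\mathcal{I}^*$.

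\textbf{Equality under equal diameters.} Suppose every element of $\mathcal{I}^*$ has the common diameter $\delta$. The identity then becomes
\[
D(x) = \delta/2 + \min_{[a,b]\in\mathcal{I}^*}\bigl|x-\tfrac{a+b}{2}\bigr|.
\]
Thus $D\ge \delta/2$ everywhere, and $D=\delta/2$ at every midpoint of an element of $\mathcal{I}^*$. Since $\mathcal{I}^*$ is nonempty by the lemma, we get $D^*=\delta/2$, and every such midpoint lies in $X$. Combined with the inclusion, this gives equality.

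\textbf{Main obstacle.} The only delicate point is attainment of the minimum over $\mathcal{I}^*$, which may be infinite; this is exactly what Lemma~\ref{lemma:attesters-quorum-radius-intermediary-identity} supplies. I would double-check that its statement indeed gives a min rather than an inf. If only an inf were available, I would instead argue that the midpoints of a minimizing sequence of intervals converge to $x$, and that the limiting interval lies in $\overline{\mathcal{I}}$ and contains a minimal element. That limiting interval would then play the role of $[a,b]$ above.
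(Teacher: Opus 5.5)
Your proof is correct and follows essentially the same route as the paper: take a minimizing element $[a,b]\in\mathcal{I}^*$ at $x\in X$ via Lemma~\ref{lemma:attesters-quorum-radius-intermediary-identity}, compare with $D((a+b)/2)\le (b-a)/2$ to force $x=(a+b)/2$, and, under equal diameters, note that every midpoint attains the common lower bound $\delta/2 = D^*$. Your identity $\max\{|x-a|,|x-b|\} = |x-(a+b)/2| + (b-a)/2$ makes explicit two steps the paper leaves implicit: that $\max\{d(x,a),d(x,b)\} = d(a,b)/2$ holds only at the midpoint, and that $D\ge \delta/2$ everywhere. The cited lemma does state that the minimum is attained, so your fallback argument is not needed.
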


\begin{proof}
Let $x \in X$ (so $D(x) = D^*$), and let 
\[
    [a,b] \in \arg\min\{\max \{d(x,f), d(x,g)\} : [f,g] \in \mathcal{I}^*\}.
\]
We first argue that $x = (a+b)/2$, which implies that $X \subseteq \{(a+b)/2 : [a, b] \in \mathcal{I}^*\}$.
By Lemma~\ref{lemma:attesters-quorum-radius-intermediary-identity},
\[
    D^* = \max \{d(x,a), d(x,b)\} \geq d(a,b)/2.
\]
At the same time,
\[
    D^* \leq D((a+b)/2) = d(a,b)/2.
\]
It follows that $D^* = d(a,b)/2$ and that $x = (a+b)/2$, as desired.

Now, to see why the two sets coincide when $\mathrm{diam}(I) = \mathrm{diam}(J)$ for all $I, J \in \mathcal{I}^*$, simply note that this condition implies that for all $[f,g] \in \mathcal{I}^*$,
\[
    D((f+g)/2) = d(f,g) / 2 = d(a, b) / 2 = D^*.
\]
\qed \end{proof}

\begin{lemma}\label{lemma:attester-quorum-radius-identity}
If $\mathrm{diam}(I) = \mathrm{diam}(J)$ for all $I, J \in \mathcal{I}^*$, then for all $x \in [0,1]$,
\[
    D(x) = d(x, X) + D^* 
\]
where $d(x, X) = \min \{d(x, y) : y \in X\}$.
\end{lemma}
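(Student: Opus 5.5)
We prove the two inequalities $D(x) \le d(x,X) + D^*$ and $D(x) \ge d(x,X) + D^*$ separately; the second is the real content.

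\textbf{Upper bound.} Let $y \in X$ attain $d(x,X)$; a minimizer exists because $X$ is closed, being the argmin of the continuous function $D$ (Lemma~\ref{lemma:attester-quorum-radius-lipschitz}). Since $D$ is 1-Lipschitz,
\[
D(x) \le D(y) + d(x,y) = D^* + d(x,X).
\]

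\textbf{Lower bound.} By Lemma~\ref{lemma:attesters-quorum-radius-intermediary-identity}, choose $[a,b] \in \mathcal{I}^*$ with $D(x) = \max\{d(x,a), d(x,b)\}$. Set $c = (a+b)/2$. Under the equal-diameter hypothesis, Lemma~\ref{lemma:attester-quorum-radius-minimizers-identity} gives $c \in X$, and $D^* = D(c) = d(a,b)/2$; here $D(c) = d(a,b)/2$ because, again by the common diameter, no interval in $\mathcal{I}^*$ gives a smaller max-distance from $c$. Elementary geometry on the line gives
\[
\max\{d(x,a), d(x,b)\} = d(x,c) + d(a,b)/2.
\]
Indeed, if $x \le c$ the farther endpoint is $b$, and $d(x,b) = d(x,c) + d(c,b)$; the case $x \ge c$ is symmetric. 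Hence
\[
D(x) = d(x,c) + D^* \ge d(x,X) + D^*,
\]
and combining with the upper bound proves the identity.

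\textbf{Main obstacle.} The delicate point is that the midpoint of the interval attaining $D(x)$ actually lies in $X$. This is exactly where the equal-diameter assumption enters, via Lemma~\ref{lemma:attester-quorum-radius-minimizers-identity}. The other requirement is that the minimum in Lemma~\ref{lemma:attesters-quorum-radius-intermediary-identity} is attained by a minimal interval; that lemma already provides this.
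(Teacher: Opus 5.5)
Your proof is correct and matches the paper's argument essentially step for step. The upper bound comes from the 1-Lipschitz property of $D$. For the lower bound, you take the interval $[a,b]\in\mathcal{I}^*$ that attains $D(x)$, write $\max\{d(x,a),d(x,b)\} = d(x,(a+b)/2) + d(a,b)/2$, and use the equal-diameter hypothesis (via Lemma~\ref{lemma:attester-quorum-radius-minimizers-identity}) to get $d(a,b)/2 = D^*$ and $(a+b)/2 \in X$. You also supply two details the paper leaves implicit: that $X$ is closed, so $d(x,X)$ is attained, and why $D\bigl((a+b)/2\bigr) = d(a,b)/2$.
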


\begin{proof}
Note that Lemma~\ref{lemma:attester-quorum-radius-lipschitz} implies that $D(x) \leq d(x, X) + D^*$.
To see why this inequality in fact holds with equality, let
\[
    [a,b] \in \arg\min\{\max \{d(x,f), d(x,g)\} : [f,g] \in \mathcal{I}^*\},
\]
and note that
\begin{align*}
    D(x) 
        &= \max\{d(x,a), d(x,b)\} \\
        &= d(x, (a+b)/2) + d(a, b)/2 \\
        &= d(x, (a+b)/2) + D^* \\
        &\geq d(x, X) + D^*.
\end{align*}
The first equality follows from Lemma~\ref{lemma:attesters-quorum-radius-intermediary-identity}, while the second equality uses the fact to travel from $x$ to the further point between $a$ and $b$, one must cross the midpoint $(a+b)/2$.
The third equality and the only inequality then follow from Lemma~\ref{lemma:attester-quorum-radius-minimizers-identity} and the definition of $a$ and $b$.
\qed \end{proof}

\begin{lemma}\label{lemma:attester-midpoints-weakly-dominant}
Suppose $\mathrm{diam}(I) = \mathrm{diam}(J)$ for all $I, J \in \mathcal{I}^*$.
Regardless of the locations of the other proposers, locating outside of $X$ is a weakly dominated strategy for proposer $i$.
\end{lemma}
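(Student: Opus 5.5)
The plan is to exhibit, for any location $x \notin X$, a dominating location $x^* \in X$ with $d(x,x^*) = d(x,X)$, so the projection of $x$ onto the closed set $X$. Closedness of $X$ follows from continuity of $D$ (Lemma~\ref{lemma:attester-quorum-radius-lipschitz}). We then show that for every profile $p_{-i}$ of the other proposers, $M_i(x^*, p_{-i}) \geq M_i(x, p_{-i})$. Weak dominance in the usual sense only needs this inequality for all $p_{-i}$. If strictness for some profile is also wanted, we can take the other proposers placed so that users beyond $x$ toward $x^*$ are contested, and then invoke the strictness part of Lemma~\ref{lemma:profitable-deviation-helper}.

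The key identity comes from Lemma~\ref{lemma:attester-quorum-radius-identity}:
\[
D(x^*) = D^* = D(x) - d(x,X) = D(x) - d(x,x^*).
\]
This is exactly the first condition of Lemma~\ref{lemma:profitable-deviation-helper} for the deviation from $p_i = x$ to $p'_i = x^*$. The second condition, $D_j(x^*, p_{-i}) \geq D_j(x, p_{-i})$ for all $j \neq i$, holds trivially with equality. Under attester-proposer separation, a proposer's quorum radius depends only on her own location, $D_j = D(p_j)$, so it is unaffected by proposer $i$'s move. Lemma~\ref{lemma:profitable-deviation-helper} is stated with $\mu = 1$ and for a general quorum-radius function. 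Its proof uses only the two displayed conditions and the structure $L_i = d(u,p_i) + D_i$, so it applies verbatim here and yields $M_i(x^*, p_{-i}) \geq M_i(x, p_{-i})$ for every block time $\Delta > 0$.

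The sophisticated model $\Delta = 0$ then follows by the bounded-convergence remark in the model section: pointwise inequalities for all small $\Delta$ pass to the limit. Alternatively, we can argue directly that $L_i(u, x^*, p_{-i}) \leq L_i(u, x, p_{-i})$ for all $u$. By the triangle inequality, $d(u,x^*) \leq d(u,x) + d(x,x^*)$, and the identity above cancels the extra term. Meanwhile, all other $L_j$ are unchanged, so $i$ can only enter, never leave, $S(u,\cdot)$, and her tie share can only grow.

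The main obstacle is verifying that Lemma~\ref{lemma:profitable-deviation-helper} transfers cleanly. In particular, the step showing $\delta(u) \geq 0$ for users on the far side, and the change of variables, must hold when $D$ is the exogenous attester radius rather than the validator radius. Checking the proof line by line shows it only invokes the two hypotheses, so I expect no real difficulty. If instead a strict version is desired, the delicate part is choosing $p_{-i}$ so that some positive $\nu$-measure of users strictly switches, which mirrors the argument in Lemma~\ref{lemma:profitable-deviation}.
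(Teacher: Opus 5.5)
Your proposal is correct and is essentially the paper's argument. The paper also moves proposer $i$ to the nearest point of $X$ and uses Lemma~\ref{lemma:attester-quorum-radius-identity} to show $L_i$ is unchanged for users on the far side of $p_i$ and strictly smaller on the near side, with all other $L_j$ unaffected since quorum radii depend only on attesters; your extra routing through Lemma~\ref{lemma:profitable-deviation-helper} (which the paper itself applies to this same deviation in Theorem~\ref{theorem:attesters-Hotelling-reduction}) only makes explicit the step from pointwise latencies to a market-share inequality for $\Delta > 0$, which the paper leaves implicit.
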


\begin{proof}
Let $p_{-i}$ denote the locations of proposer $i$'s opponents.
We show that $p_i \not\in X$ is weakly dominated by the closest point in $X$ to $p_i$, denoted by $x$.
Assume without loss of generality that $p_i < x$, and note that for all $u \in [0, p_i]$,
\begin{align*}
    L_i(u, p)
        &= d(u, p_i) + D(p_i) \\
        &= d(u, p_i) + d(p_i, x) + D^* \\
        &= d(u, x) + D^* \\
        &= L_i(u, x, p_{-i}).
\end{align*}
Here, the second equality follows from Lemma~\ref{lemma:attester-quorum-radius-identity}.
Meanwhile, for all $u \in (p_i, 1]$,
\begin{align*}
    L_i(u, p) 
        &= d(u, p_i) + d(p_i, x) + D^* \\
        &= 2 \min\{d(u, p_i), d(x,p_i)\} + d(u, x) + D^* \\
        &> L_i(u, x, p_{-i}).
\end{align*}
That is, the latencies of users that lie weakly to the left of proposer $i$ do not increase, while the latencies of users that lie strictly to the right of proposer $i$ strictly decrease.
This suffices to show weak dominance since a proposer's quorum radius does not depend on the locations of other proposers, only attesters, so $L_j(u, p) = L_j(u, x, p_{-i})$ for all $j \not= i$.
In fact, the quorum radii of proposer $i$'s opponents would not change, regardless of where proposer $i$ chooses to locate.
\qed \end{proof}

\subsection{Reduction to Pure Hotelling}

\begin{theorem}\label{theorem:attesters-Hotelling-reduction}
Suppose the user distribution $\nu$ has full support and $\mathrm{diam}(I) = \mathrm{diam}(J)$ for all $I, J \in \mathcal{I}^*$.
Let $p$ denote a configuration of proposers.
If $p_i \not\in X$ and there exists another proposer weakly on the same side of $p_i$ that contains a closest point in $X$ to $p_i$, then $p_i$ is not a best response to $p_{-i}$.
\end{theorem}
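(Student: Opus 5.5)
The plan is to exhibit a strictly profitable local deviation toward $X$ and certify it with Lemma~\ref{lemma:profitable-deviation-helper}. Suppose without loss of generality that the closest point $x \in X$ to $p_i$ satisfies $p_i < x$, and that some proposer $j \neq i$ has $p_j \geq x$; this is the ``same side'' hypothesis. Since $X$ is closed, $(p_i, x)$ contains no point of $X$. By Lemma~\ref{lemma:attester-quorum-radius-identity}, for every $y \in [p_i, x]$ we then have $D(y) = d(y,x) + D^*$.

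The deviation is $p'_i = p_i + \varepsilon$, where $\varepsilon > 0$ is small enough that $p'_i \leq x$ and $p'_i$ lies strictly left of the nearest proposer strictly to the right of $p_i$ (if one exists). Then $D(p'_i) = D(p_i) - \varepsilon$, which is condition 1 of Lemma~\ref{lemma:profitable-deviation-helper}. Condition 2 is automatic, because other proposers' quorum radii depend only on $\lambda$. By Lemma~\ref{lemma:attester-quorum-radius-lipschitz} and Lemma~\ref{lemma:attester-market-share-intervals}, the interval structure of the $U_k$ holds, so the helper lemma gives $M_i(p'_i,p_{-i}) \geq M_i(p)$. It remains to show strictness: there must be a $\Delta$-independent set of users $u \in (p_i,1]$ of positive $\nu$-measure on which one of the necessary equality conditions fails.

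To find these users, I will mimic the proof of Lemma~\ref{lemma:profitable-deviation}. Let $k$ be the first proposer strictly right of $p_i$; it exists because $p_j \geq x > p_i$, and we may take $p_k \leq p_j$. If no proposer's location coincides with $p_i$ other than $i$, $k$ is just her right neighbor. First I will show $\Delta_i(p_k, p) > 0$, so that $U_i$ does not extend to $p_k$. There are two cases.
\begin{enumerate}
\item If $p_k \leq x$, then $L_i(p_k,p) = d(p_k,p_i) + d(p_i,x) + D^* = L_k(p_k,p) + 2d(p_i,p_k)$.
\item If $p_k > x$, then $L_k(p_k,p) \leq d(p_k, x) + D^* + \text{(nothing)}$ by the Lipschitz bound, so $L_i(p_k,p) - L_k(p_k,p) \geq 2d(p_i,x) > 0$.
\end{enumerate}
Let $a = \max U_i(p) \cap [p_i, p_k)$. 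If $a > p_i$, the users in $(\max\{a,p'_i\}, \min\{a+\varepsilon, p_k\})$ satisfy $\delta(u)<0$ but $\Delta_i(u,p)>0$; this is exactly the computation in Lemma~\ref{lemma:profitable-deviation}, using $L_i(a,p)=L_{k'}(a,p)$ for the proposer $k'$ that wins just right of $a$, which is $k$ by Lemma~\ref{lemma:attester-market-share-intervals}. If $a = p_i$, then $L_i(p_i,p) = L_k(p_i,p)$. Combined with the identity above, this should force $L_k(u,p) = L_i(u,p) - 2d(u,p_i)$ on $(p_i,p_k]$. Hence for $u \in (p_i,p'_i)$ we get $\delta(u)\geq 0$ and $\Delta_i(u,p) = \eta(u) < \Delta + \eta(u)$, which again violates the equality condition on a positive-measure set by full support.

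The main obstacle is the case $a = p_i$, specifically verifying that $\delta(u)\ge 0$ for $u\in(p_i,p'_i)$. This requires $L_k(u,p) = d(u,p_k) + D(p_k)$ to equal $d(u,p_i) + D(p_i) - 2d(u,p_i)$ there. From $L_i(p_i)=L_k(p_i)$ we get $D(p_k) = D(p_i) - d(p_i,p_k)$. Then Lemma~\ref{lemma:attester-quorum-radius-identity} and geometry should give the needed equality, since for $u$ between $p_i$ and $p_k$ we have $d(u,p_k) = d(p_i,p_k) - d(u,p_i)$. The remaining work is a short check of these identities and of the case where several proposers share $p_i$.
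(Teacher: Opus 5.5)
Your strictness mechanics are sound and follow the paper's proof closely. You get weak improvement from Lemma~\ref{lemma:profitable-deviation-helper}, then split on whether $a=\max U_i(p)$ exceeds $p_i$. In the first case the witness interval sits just right of $a$, where $\delta<0<\Delta_i$; in the second it sits just right of $p_i$, where $\delta=0$ and $\Delta_i=\eta<\Delta+\eta$. The paper jumps all the way to $x$ rather than moving by $\varepsilon$. Your local move requires $p'_i<p_k$, but it spares you the paper's check that $p_k\le x$ when $a=p_i$. Your announced obstacle is also immediate. The identities $D(p_k)=D(p_i)-d(p_i,p_k)$ and $d(u,p_k)=d(p_i,p_k)-d(u,p_i)$ give $L_k(u,p)=D(p_i)-d(u,p_i)=L_i(u,p'_i,p_{-i})$ on $(p_i,p'_i)$. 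Since $U_k(p)\supseteq[p_i,p_k]$, proposer $k$ is the minimizer there, so $\delta(u)=0$.

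The genuine gap is in how you read the hypothesis. ``Another proposer weakly on the same side of $p_i$ that contains a closest point in $X$'' means some $j\neq i$ with $p_j\ge p_i$ (the paper's proof says exactly this), not $p_j\ge x$. You use $p_j\ge x$ to obtain a proposer $k$ strictly right of $p_i$, and both of your witness intervals are built from $k$. Under the actual hypothesis, every other proposer on the $x$-side may sit exactly at $p_i$, and then $k$ does not exist. By Lemma~\ref{lemma:attester-quorum-radius-lipschitz}, no proposer at or left of $p_i$ strictly beats $i$ at any $u\ge p_i$, so $U_i(p)\supseteq[p_i,1]$, $a=1$, and your interval $(\max\{a,p'_i\},\min\{a+\varepsilon,p_k\})$ is empty. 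This case needs the co-located proposer $j$ as the witness. For $u\in(p_i,1]$ the move makes $i$ the unique minimizer, so $\delta(u)=-\eta(u)<0$, while $\Delta_j(u,p)=0<\Delta$. This violates the first equality condition of the helper lemma on a $\Delta$-independent set of positive $\nu$-measure. If some proposer lies in $(p_i,x)$ but none at or beyond $x$, your argument already works through your case $p_k\le x$; only your stated reason for the existence of $k$ must change. The distinction matters downstream: Theorem~\ref{theorem:reduction-to-hotelling-formal} invokes the statement for every interior proposer $i\in\{2,\dots,n-1\}$, and her only guaranteed neighbor on the $x$-side may be co-located with her.
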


Like the proof of Theorem~\ref{theorem:colocation-necessary}, the proof of Theorem~\ref{theorem:attesters-Hotelling-reduction} uses the fact that reducing one's quorum radius is strictly profitable (Lemma~\ref{lemma:profitable-deviation-helper}).
However, unlike when proposers are also attesters, a re-locating proposer when proposers and attesters are separate cannot affect the quorum radius of any other proposer since one's quorum radius does not depend on the locations of other proposers.

\begin{proof}
Let $p$ denote a profile of proposer locations, and consider any proposer $i$ such that $p_i \not\in X$.
Let $x$ denote the closest point in $X$ to $p_i$ (so $d(p_i, x) = d(p_i, X)$), and assume without loss of generality that $p_i < x$.
Suppose there exists another proposer whose location lies weakly to the right of $p_i$.
We examine proposer $i$'s market share when she locates at $p_i$ and when she locates at $x$.
Note that for all $u \in [0, p_i]$,
\begin{align*}
    L_i(u, p)
        &= d(u, p_i) + D(p_i) \\
        &= d(u, p_i) + d(p_i, x) + D^* \\
        &= d(u, x) + D^* \\
        &= L_i(u, x, p_{-i}).
\end{align*}
Here, the second equality follows from Lemma~\ref{lemma:attester-quorum-radius-identity}.
Meanwhile, for all $u \in (p_i, 1]$,
\begin{align*}
    L_i(u, p) 
        &= d(u, p_i) + d(p_i, x) + D^* \\
        &= 2 \min\{d(u, p_i), d(x,p_i)\} + d(u, x) + D^* \\
        &= L_i(u, x, p_{-i}) + \eta(u)
\end{align*}
where
\[
    \eta(u) = 2 \min\{d(u, p_i), d(x,p_i)\} > 0.
\]
Now, define
\[
    \delta(u) = \min_j L_j(u, x, p_{-i}) - \min_j L_j(u, p),
\]
and note that by Lemma~\ref{lemma:profitable-deviation-helper}, for all locations $u \in [0,1]$, $M_i(u, x, p_{-i}) \geq M_i(u, p)$, with equality for $u \in (p_i,1]$ only if
\begin{enumerate}
    \item $\Delta_i(u,p) = 0$ and $\Delta_j(u,p) \geq \Delta$ for all proposers $j \not= i$ if $\delta(u) < 0$ or 
    \item $\Delta_i(u,p) \geq \Delta + \eta(u)$ if $\delta(u) = 0$.
\end{enumerate}
We now show that there exists a ($\Delta$-independent)\footnote{So that the result holds when taking $\Delta$ to 0, i.e., in the sophisticated user model. 
Recall that if, for all $\Delta > 0$, $M^\Delta_i(u, p'_i, p_{-i}) \geq M^\Delta_i(u, p)$ for all $u$, then $M_i(p'_i, p_{-i}) \geq M_i(p)$ as well.
If, in addition, $M^\Delta_i(u, p'_i, p_{-i}) > M^\Delta_i(u, p)$ for a $\Delta$-independent set of users with positive $\nu$-measure, then $M_i(p'_i, p_{-i}) > M_i(p)$ as well.}positive measure of users in $(p_i, 1]$ for which the necessary condition does not hold, which implies that proposer $i$ has a better response and that $p$ is not an equilibrium.

Let $a \geq p_i$ denote the maximum point such that $\Delta_i(a,p) = 0$, and note that $\Delta_i(u, p) = 0$ for all $u \in (p_i, a]$.
For these users,
\begin{align*}
    L_i(u, x, p_{-i}) &= L_i(u,p) - \eta(u) < L_i(u,p) \\
    L_j(u, x, p_{-i}) &= L_j(u,p) \geq L_i(u,p) \,\forall\, j \not= i,
\end{align*}
so $\Delta_i(u, x, p_{-i}) = 0$ and
\[
    \delta(u) = L_i(u, x, p_{-i}) - L_i(u,p) = - \eta(u) < 0.
\]
Suppose $a > p_i$, and let $j$ denote the first proposer weakly to the right of $p_i$.
If $p_j = p_i$, then for all $u \in (p_i, a]$,
\[
    \Delta_j(u, p) = 0 < \Delta.
\]
In particular, if $a > p_i$, then since $\nu$ has full support, $(p_i, a]$ is a ($\Delta$-independent) interval of positive measure for which $\delta(u) < 0$ but $\Delta_j(u,p) < \Delta$.

Otherwise, $p_j > a$,\footnote{Note that if $p_j > p_i$, then $L_j(p_j, p) \leq d(p_j, x) + D^* = L_i(p_j, p) - \eta(p_j) < L_i(p_j, p)$, so proposer $j$ strictly minimizes latency for some users strictly to her left.} in which case for all $u \in (a, \min\{a + d(p_i, x), p_j\})$, $\Delta_i(u,p) > 0$ by definition of $a$, and $\Delta_j(u,p) = 0$ by Lemma~\ref{lemma:attester-market-share-intervals}, so
\begin{align*}
    \delta(u) 
        &\leq L_i(u,x,p_{-i}) - L_j(u,p) \\
        &= (d(u,x) + D^*) - L_j(u,p) \\
        &= L_i(u,p) - \eta(u) - L_j(u,p) \\
        &= (L_i(a,p) + d(a,u)) - \eta(u) - (L_j(a,p) - d(a,u)) \\
        &= 2(d(a,u) - \min\{d(u, p_i), d(x,p_i)\}) \tag{$L_i(a,p) = L_j(a,p)$} \\
        &< 0. \tag{$u \in (a, a+d(p_i,x))$; if $p_i < a < u$}
\end{align*}
Thus, if $a > p_i$, then since $\nu$ has full support, $(a, \min\{a + d(p_i, x), p_j\})$ is a ($\Delta$-independent) interval of positive measure for which $\delta(u) < 0$ but $\Delta_i(u,p) > 0$.

In the case that $a = p_i$, we instead to show that there exists a positive measure of users in $(p_i, 1]$ such that $\delta(u) = 0$ and $\Delta_i(u,p) < \Delta + \eta(u)$.
Let $k$ denote the first proposer strictly to the right of $p_i$.
Note that this proposer must exist since otherwise, $a = 1 \geq x > p_i$, a contradiction.
Moreover, it must be that $p_k \leq x$: if $p_k > x$, then the fact that $\Delta_k(p_i, p) = 0$ would imply that
\[
    d(p_i, x) + D^* = L_i(p_i, p) = L_k(p_i, p) = d(p_i, p_k) + d(p_k, X) + D^* > d(p_i, x) + D^*,
\]
another contradiction.
It follows that $x$ is also the closest point in $X$ to $p_k$.
Note then that for all $u \in (p_i, p_k]$,
\begin{align*}
    L_i(u, x, p_{-i})
        &= d(u, x) + D^* \\
        &= d(u, p_k) + d(p_k, x) + D^* \\
        &= d(u, p_k) + D(p_k) \\
        &= L_k(u, x, p_{-i}),
\end{align*}
so $\min_\ell L_\ell(u, x, p) = L_k(u, x, p_{-i})$ and
\[
    \delta(u) = L_k(u, x, p_{-i}) - L_k(u, p) = 0.
\]
The last equality follows from the fact that the locations of other proposers does not affect proposer $k$'s quorum radius.
Moreover, for these $u$, 
\begin{align*}
    \Delta_i(u, p) 
        &= L_i(u, p) - L_k(u, p) \\
        &= (d(u, p_i) + d(p_i, x) + D^*) - (d(u, p_k) + d(p_k, x) + D^*) \\
        &= d(u, p_i) + d(p_i, p_k) - d(u, p_k) \\
        &= 2d(u, p_i) \\
        &< \Delta + \eta(u).
\end{align*}
Since $\nu$ has full support, there exist a ($\Delta$-independent) positive measure of users in $(p_i, p_k]$.
\qed \end{proof}

\subsection{Applications of Theorem~\ref{theorem:attesters-Hotelling-reduction}}
\label{section:Hotelling-reduction-applications}

\begin{theorem}\label{theorem:reduction-to-hotelling-formal}
Let $\Delta = 0$ and $n \geq 4$.
Suppose the user distribution $\nu$ has full support and that $\mathrm{diam}(I) = \mathrm{diam}(J)$ for all $I, J \in \mathcal{I}^*$.
If both $\nu([0, \min X]) \geq 2/n$ and $\nu([\max X, 1]) \geq 2/n$, then a proposer configuration $p$ is a pure equilibrium in our model if and only if it is a pure equilibrium in a Hotelling game where the strategy space of firms is restricted to $X$.
\end{theorem}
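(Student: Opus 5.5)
The plan is to first show that at any pure equilibrium every proposer lies in $X$, and then to show that, restricted to configurations in $X$, our game and the Hotelling game over $X$ have the same payoffs and the same profitable deviations.

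\textbf{Step 1: interior proposers lie in $X$.} Let $p$ be a pure equilibrium. Suppose $p_i\notin X$, and let $x$ be a closest point of $X$ to $p_i$. By Theorem~\ref{theorem:attesters-Hotelling-reduction}, no other proposer lies weakly on the side of $p_i$ containing $x$. So if $x<p_i$, proposer $i$ is the unique leftmost proposer, and every proposer lies strictly to the right of $\min X\le x$. The case $x>p_i$ is symmetric, with $i$ the unique rightmost proposer.

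\textbf{Step 2: the mass condition rules out the boundary cases.} Suppose the leftmost proposer lies strictly to the right of $\min X$. Pick a proposer $j$ whose market share satisfies $M_j(p)\le 1/n$; one exists because the shares sum to one. Consider $j$ deviating to $\min X$.
\begin{itemize}
\item Since $j$ moves weakly left, every other proposer still lies strictly to the right of $\min X$. This holds whether or not $j$ was the leftmost proposer.
\item For $u\le \min X$ and any other proposer $k$, we have $L_j(u,\min X,p_{-j})=d(u,\min X)+D^*$.
\item By Lemma~\ref{lemma:attester-quorum-radius-identity}, $L_k(u,p)\ge d(u,p_k)+D^*>d(u,\min X)+D^*$.
\end{itemize}
Hence with $\Delta=0$, proposer $j$ is the unique latency minimizer on $[0,\min X]$. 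By continuity she is also the unique minimizer on a small interval to the right of $\min X$, which has positive $\nu$-measure by full support. Her deviation share is therefore strictly greater than $\nu([0,\min X])\ge 2/n>1/n\ge M_j(p)$, contradicting equilibrium. The right side is symmetric, using $\nu([\max X,1])\ge 2/n$. Together with Step~1, every equilibrium has $p\in X^n$.

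\textbf{Step 3: both directions of the equivalence.} When every proposer lies in $X$, each $L_i(u,p)=d(u,p_i)+D^*$. The argmin sets $S(u,p)$ therefore coincide with the nearest-firm sets, so $M_i(p)$ equals the Hotelling share. The same holds after any deviation that stays inside $X$.
\begin{itemize}
\item \emph{Only if.} An equilibrium of our game lies in $X^n$ by Steps~1--2. It has no profitable deviation within $X$, so it is a Hotelling equilibrium over $X$.
\item \emph{If.} Take a Hotelling equilibrium $p\in X^n$. A deviation within $X$ is unprofitable by assumption. A deviation to $y\notin X$ is weakly dominated by the closest point of $X$ to $y$, by Lemma~\ref{lemma:attester-midpoints-weakly-dominant}, since quorum radii of the others are unaffected. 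That point is in turn no better than $p_i$.
\end{itemize}

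The main obstacle is Step~2, specifically the strictness and tie accounting under $\Delta=0$. I need to confirm that no other proposer ties with $j$ at $\min X$, which holds because all others lie strictly to its right. I also need the extra positive mass just right of $\min X$, so the inequality is strict even when $\nu([0,\min X])$ is exactly $2/n$. Strictness is in fact available from slack anyway, since $2/n>1/n$.
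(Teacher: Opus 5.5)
Your proof is correct. Steps~1--2 are essentially the paper's ``only if'' argument. Theorem~\ref{theorem:attesters-Hotelling-reduction} leaves at most a unique extreme proposer outside $X$, and the tail-mass hypothesis then lets a minimum-share proposer take all of $[0,\min X]$ (or $[\max X,1]$), for a share of at least $2/n>1/n$.

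The ``if'' direction is where you take a genuinely different route. The paper first argues that any restricted Hotelling equilibrium has paired extremes, $p_1=p_2=\min X$ and $p_{n-1}=p_n=\max X$. This again uses the tail mass and full support, now splitting the tail with the proposer already sitting at $\min X$. Pairing guarantees that, for every deviation $y\notin X$, some other proposer lies on the side of the nearest point of $X$. The paper then invokes Theorem~\ref{theorem:attesters-Hotelling-reduction}, implicitly using its proof's conclusion that moving from $y$ to that nearest point is strictly better.

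You instead use Lemma~\ref{lemma:attester-midpoints-weakly-dominant}, which the paper proves but never uses. Moving from $y$ to its nearest point $x\in X$ weakly lowers your own latency for every user and leaves all other latencies unchanged, so $M_i(y,p_{-i})\le M_i(x,p_{-i})\le M_i(p)$.

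Your route is shorter, and it shows that the ``if'' direction needs neither the tail-mass condition, nor $n\ge 4$, nor full support. The paper's route buys extra structure in exchange: every restricted Hotelling equilibrium has at least two proposers at each of $\min X$ and $\max X$, and every deviation outside $X$ is strictly, not just weakly, unprofitable.
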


\begin{proof}
Throughout the proof, keep in mind that if all proposers locate in $X$, then users send to their closest proposers since each location in $X$ has the same quorum radius.

Let $p_1 \leq \dots \leq p_n$ denote an equilibrium configuration of proposers in our model.
It suffices to show that $p_i \in X$ for all proposers $i$.
Note that Theorem~\ref{theorem:attesters-Hotelling-reduction} implies that this containment holds for $i \in \{2, \dots, n-1\}$ and forces $p_1 \geq \min X$ and $p_n \leq \max X$.
However, note that if either inequality were strict, then the proposer with the smallest market share, which is at most $1/n$, would obtain strictly more than $2/n$ of the market by locating at either $\min X$ or $\max X$: she would uniquely capture one tail, which has user mass at least $2/n$, in addition to some users on her other side.
This contradicts the assumption that $p$ is an equilibrium in our model, so it must be that $p_1, p_n \in X$ as well.
In fact, $p_1 = \min X$ and $p_n = \max X$.

Now, let $p$ denote an equilibrium configuration of proposers in a Hotelling game where the strategy space of firms is restricted to $X$.
Formally, the equilibrium condition is that
\[
    M_i(p) \geq M_i(p'_i, p_{-i})
\]
for all proposers $i$ and alternative locations $p'_i \in X$.
Note that $p_1 = p_2 = \min X$ and $p_{n-1} = p_n = \max X$ by the same argument used in the forward direction.
It follows from Theorem~\ref{theorem:attesters-Hotelling-reduction} that the Hotelling equilibrium condition implies the equilibrium condition in our model for all proposers.

\qed \end{proof}

For a continuous uniform distribution of users, we apply Theorem~\ref{theorem:attesters-Hotelling-reduction} to characterize the set of pure equilibria when the distribution of attesters is also uniform but can be either continuous (see Theorem~\ref{theorem:attesters-continuous-uniform-characterization}) or discrete (see Theorem~\ref{theorem:attesters-discrete-uniform-characterization}).

Theorem~\ref{theorem:attesters-continuous-uniform-characterization} is reminiscent of the equilibrium characterization of \cite{EatonL1975} for the unit interval, which says that firms generally locate away from one another in equilibrium except for peripheral firms who must pair up.
Theorem~\ref{theorem:attesters-discrete-uniform-characterization} assumes that the number of proposers is at least the number of attesters so that it is possible for the midpoint of each attester quorum to have a proposer located there.
When the opposite inequality holds, proposers still only locate at the midpoints, but the exact location of interior proposers (i.e., not the left- nor rightmost proposers) varies.
Since the attesters are uniformly distributed, the midpoints are spread across the market, so proposers are also spread out at equilibrium.
The number of proposers at each midpoint is proportional to the fraction of users who send their transactions to some proposer at that location.

\begin{theorem}\label{theorem:attesters-continuous-uniform-characterization}
Suppose $\Delta = 0$ and $n \geq 4$.
If both the user distribution $\nu$ and the attester distribution $\lambda$ are uniform, then a profile of proposer locations is a pure equilibrium if and only if
\begin{enumerate}
    \item No proposer locates outside of $[1/3,2/3]$.
    \item There are enough proposers at $1/3$ and $2/3$ that no proposer benefits from re-locating to these points. 
    \item No proposer's market share is smaller than another proposer's half-market in $[1/3,2/3]$.  
    A proposer's half-market refers to both the users to her left for which she minimizes latency and those users to her right (regardless of whether they are of equal length).
\end{enumerate}
\end{theorem}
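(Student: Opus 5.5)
The plan is to reduce to a Hotelling game on $[1/3,2/3]$ whose two endpoints carry atoms of user mass, and then to adapt the Eaton–Lipsey characterization \cite{EatonL1975} to that game.

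\textbf{Step 1: identify $X$ and $D^*$.} For the continuous uniform $\lambda$, the set $\mathcal{I}^*$ consists exactly of the intervals $[a,a+2/3]$ with $a\in[0,1/3]$. All of these have diameter $2/3$, so the regularity condition holds. Lemma~\ref{lemma:attester-quorum-radius-minimizers-identity} then gives $X=[1/3,2/3]$ and $D^*=1/3$.

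\textbf{Step 2: every equilibrium lies in $X$.} I cannot invoke Theorem~\ref{theorem:reduction-to-hotelling-formal}. Its tail condition $\nu([0,1/3])\ge 2/n$ fails for $n\in\{4,5\}$, and in any case I want the "if and only if" statement directly. Instead I apply Theorem~\ref{theorem:attesters-Hotelling-reduction} to every proposer.
\begin{itemize}
\item Suppose $p_i<1/3$. The closest point of $X$ is $1/3$, to her right. Unless all proposers sit strictly to her left, some other proposer lies weakly to her right, so $p_i$ is not a best response.
\item If all other proposers are strictly left of $p_i<1/3$, take the rightmost of them instead. Her closest point of $X$ is also to the right, and $p_i$ is a proposer on that side, so the same theorem rules her out.
\end{itemize}
By symmetry, no proposer is at an equilibrium outside $[1/3,2/3]$. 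This gives condition 1 as a necessary condition.

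\textbf{Step 3: reduce to Hotelling with atoms.} Once all proposers lie in $X$, they share the same quorum radius $D^*$, so users choose by distance alone. A user at $u<1/3$ has distance $d(u,1/3)+d(1/3,p_j)$ to every proposer $p_j\in X$. She therefore ranks proposers exactly as a user located at $1/3$ would. Hence the game restricted to $X$ is a Hotelling game on $[1/3,2/3]$ with uniform density in the interior and atoms of mass $1/3$ at each endpoint.

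Deviations outside $X$ need not be considered. Lemma~\ref{lemma:attester-midpoints-weakly-dominant} says any point outside $X$ is weakly dominated by the nearest point of $X$, and the rival quorum radii do not change. So a profile in $X$ is an equilibrium of our model if and only if it is an equilibrium of this atom-augmented Hotelling game.

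\textbf{Step 4: characterize equilibria of the atom-augmented game.} I follow the Eaton–Lipsey argument.
\begin{itemize}
\item \emph{Sufficiency.} Fix a profile satisfying conditions 1–3 and consider a deviation. A move into the interior, landing strictly between two neighbors, captures at most half of the gap between them. That gap is at most a neighbor's two half-markets combined, so by condition 3 the deviator gains no more than her current share. A move to an endpoint captures a share of the atom together with an adjacent half-interval; condition 2 is exactly the requirement that this is not profitable.
\item \emph{Necessity of condition 3.} If some proposer's share is below another proposer's half-market, she moves to a point just inside that half-market. She then captures nearly all of it, which is strictly more than her current share.
\item \emph{Necessity of condition 2.} This holds essentially by definition, since it is the non-profitability of moving to $1/3$ or $2/3$.
\item \emph{Peripheral proposers must sit at the endpoints when alone.} A lone peripheral proposer strictly inside $(1/3,2/3)$ can move outward toward the endpoint. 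She keeps the whole outer tail, including the atom, and gains interior mass, so this is strictly profitable. Hence an unpaired peripheral proposer sits at $1/3$ (respectively $2/3$), and that case falls under condition 2.
\end{itemize}

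\textbf{Main obstacle.} I expect the hardest part to be the bookkeeping around the atoms with random tie-breaking. When $k$ proposers share the location $1/3$, each receives $1/(3k)$ of the atom plus a share of the adjacent interior interval. A deviator moving to $1/3$ changes $k$ to $k+1$ and shifts the boundary on the right. I would first write each proposer's market share explicitly in terms of the gaps between consecutive locations and the endpoint multiplicities. I would then verify that conditions 2 and 3 together exhaust all profitable deviations, including a deviation from an interior cluster to an endpoint. I would check these inequalities case by case, split according to whether the deviator comes from an endpoint cluster or from the interior.
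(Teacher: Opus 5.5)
Your route is the paper's: Theorem~\ref{theorem:attesters-Hotelling-reduction} confines every equilibrium to $X=[1/3,2/3]$, and an Eaton--Lipsey style analysis then finishes. Steps 1--3 are sound. Disposing of deviations outside $X$ via Lemma~\ref{lemma:attester-midpoints-weakly-dominant} is cleaner than the paper's closing remark, which needs at least two proposers at each endpoint.

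The gap is in sufficiency, and your fourth bullet is wrong. Take a profile with a \emph{single} proposer at $1/3$ and next occupied location $q_2>1/3$. Suppose she moves to $1/3+\varepsilon$ with $\varepsilon<q_2-1/3$. She does not land ``strictly between two neighbors'': having vacated $1/3$, she becomes the new leftmost proposer and keeps the whole atom $[0,1/3]$. Her share rises from $(1/3+q_2)/2$ to $(1/3+q_2+\varepsilon)/2$. Neither condition~3 nor your endpoint clause addresses this move. Your bullet also has the direction backwards. A lone peripheral proposer moving \emph{outward} loses $\varepsilon/2$, whereas moving \emph{inward} always gains, even from the endpoint. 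So an unpaired peripheral proposer is never in equilibrium, and the lone-at-$1/3$ case must be shown to be \emph{excluded} by condition~2, not merely ``covered'' by it.

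That exclusion is the heart of the paper's sufficiency proof: condition~2 forces at least two proposers at each of $1/3$ and $2/3$. Let $n_1$ be the number of proposers at $1/3$, let $q_2<q_3$ be the next occupied locations, and let $n_2$ be the number at $q_2$. Consider a proposer at $q_2$ contemplating a move to $1/3$.
\begin{itemize}
\item If she is alone at $q_2$, non-profitability reads $\frac{1/3+q_3}{2(n_1+1)}\le\frac{q_3-1/3}{2}$, i.e.\ $n_1\ge 2/(3q_3-1)\ge 2$.
\item If $n_2\ge 2$, the same comparison gives $n_1\ge 2n_2-1\ge 3$.
\item With only two distinct locations, the same comparison together with $n\ge 4$ gives $n_1\ge 2$.
\end{itemize}
A similar jump shows the endpoints are occupied at all. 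Otherwise the smallest-share proposer, holding at most $1/n$, gets more than $1/3$ by jumping to $1/3$. The paper uses this in the necessity direction to get $p_1=1/3$ and $p_n=2/3$; your outward move does not deliver it.

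Once both endpoint clusters have size at least two, every deviator either leaves her old location occupied or is a lone interior proposer whose departure merges her two half-markets. Half of any gap between consecutive occupied locations is then exactly one half-market. For the merged gap, it is exactly her current share. That fact, rather than ``the gap is at most a neighbor's two half-markets combined,'' is what lets condition~3 close the interior case. You should also list landing on an already occupied location as a case; condition~3 handles it the same way.
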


\begin{proof}
Note that when the attester distribution $\lambda$ is uniform, $X = [1/3, 2/3]$, so by Lemma~\ref{lemma:attester-quorum-radius-identity},
\[
    D(x) = d(x, [1/3, 2/3]) + 1/3.
\]
In particular, $D(x) = 1/3$ for all $x \in [1/3,2/3]$, so if no proposers locate outside of $X$, then they all have the same quorum radius, and users send to their nearest proposers.
Now, suppose the profile of proposer locations $p_1 \leq \dots \leq p_n$ is a pure equilibrium.
By Theorem~\ref{theorem:attesters-Hotelling-reduction}, $p_1, p_n \in [1/3,2/3]$.
That is, no proposers locate outside of $[1/3,2/3]$. 
In fact, one can show that $p_1 = 1/3$ and $p_n = 2/3$.
Suppose by way of contradiction that $p_1 > 1/3$, and consider the proposer with the smallest market share.
Note that this proposer's market share is at most $1/n$, but if she locates immediately to the left of $p_1$ instead, then her market share would be strictly greater than $1/3$ since once no proposer locates outside of $[1/3,2/3]$, users simply send to their nearest proposer.
The same argument implies that $p_n = 2/3$.
The second condition is clearly necessary.
It is not too hard to see that the third condition is also necessary: if proposer $i$'s market share is smaller than proposer $j$'s half-market in $[1/3, 2/3]$, then proposer $i$ can capture this half-market (at least) by locating either immediately to left or right of $p_j$.

We now show that any profile of proposer locations that satisfy the three conditions is an equilibrium.
Let $1/3 = q_1 < \dots < q_m = 2/3$ denote the distinct proposer locations, $n_k$ the number of proposers at $q_k$, and $M_k$ the market share of each proposer at $q_k$.
The $n_1$ proposers at $1/3$ jointly capture the users in $[0,(1/3 + q_2)/2]$, while the $n_m$ proposers at $2/3$ jointly capture the users in $[(q_{m-1} + 2/3)/2, 1]$.
Meanwhile, for the remaining locations, the $n_k$ proposers at $q_k$ capture the users in $[(q_{k-1} + q_k)/2, (q_k + q_{k+1})/2]$.
Thus,
\begin{align*}
    M_1 &= \frac{1/3+q_2}{2n_1} \\
    M_m &= \frac{4/3 - q_{m-1}}{2n_m} \\
    M_k &= \frac{q_{k+1} - q_{k-1}}{2n_k} \quad \forall\,1 < k < m
\end{align*}
Now, consider the implications of the fact that a proposer at $q_2$ does not benefit from re-locating to $1/3$. 
If $n_2 = 1$, then this fact implies that
\[
    \frac{1/3+q_3}{2(n_1+1)} \leq \frac{q_3 - 1/3}{2}.
\]
Rearranging yields
\[
    n_1 \geq \frac{2}{3q_3 - 1} \geq 2.\footnote{This argument requires $m\geq3$. If $m = 2$, then $n \geq 4$ and $n_2 = 1$ implies that $n_1 \geq 3$.}
\]
Otherwise, $n_2 \geq 2$, and this fact implies that
\[
    \frac{1/3+q_2}{2(n_1+1)} \leq \frac{q_{3} - 1/3}{2n_2}.\footnote{If $m=2$, then the inequality becomes $1/(2(n_1+1)) \leq 1/(2(n-n_1))$ since $n_1 + n_2 = n$. Rearranging yields $n_1 \geq (n-1)/2$. When $n \geq 4$, this inequality implies that $n_1 \geq 2$.}
\]
Rearranging yields
\[
    n_1 \geq \frac{n_2(1/3 + q_2)}{q_{3} - 1/3} - 1 \geq 2n_2 - 1 \geq 3
\]
The same argument using the proposers at $q_{m-1}$ implies that there are at least two proposers at $q_m$.
The third condition now implies that the proposers are at equilibrium:
\begin{enumerate}
    \item A proposer who shares her location with another proposer can capture at most another proposer's half-market in $[1/3, 2/3]$ by re-locating within this interval.
    \item A proposer at $q_k$ who does not share her location with another proposer (so $k \in \{2,\dots, m-1\}$) cannot increase her market share by re-locating within $[q_{k-1}, q_{k+1}]$. 
    If she relocates outside this interval but stays within $[1/3, 2/3]$, then the most she can capture is another proposer's half-market in $[1/3, 2/3]$.
\end{enumerate}
Moreover, since there are at least two proposers at each of $1/3$ and $2/3$, re-locating outside of $[1/3, 2/3]$ is strictly worse than locating at $1/3$ or $2/3$ since either way, one splits $[0,1/3]$ or $[2/3,1]$ with the proposer(s) at $1/3$ and $2/3$, respectively.
However, by locating at these two points, one also splits the users to the right of $1/3$ and to the left of $2/3$ that the proposer(s) at these points capture.
\qed \end{proof}

\begin{theorem}\label{theorem:attesters-discrete-uniform-characterization}
Suppose $k \geq 4$ attesters are uniformly distributed along $[0,1]$, i.e., there exists an attester at each of $\ell / (k-1)$ for $\ell \in \{0, \dots, k-1\}$.
Let $m = \lfloor 2k/3 \rfloor + 1$, $\Delta = 0$, and $n \geq k$.
If a continuum of users is uniformly distributed along $[0,1]$, then a profile of proposer locations is a pure equilibrium if and only if
\begin{enumerate}
    \item No proposer locates outside of $X = \{q_0, \dots, q_{k-m}\}$ where
    \[
        q_\ell = \frac{m+2\ell-1}{2(k-1)}.
    \]
    Let $n_\ell$ denote the number of proposers at $q_\ell$.
    \item $n_\ell \geq 1$ for all $\ell \in \{0, \dots, k-m\}$.
    \item $|n_0 - n_{k-m}| \leq 1$ and $|n_j - n_\ell| \leq 1$ for all $j, \ell \in \{1, \dots, k-m-1\}$ such that $n_j \geq 2$.
    \item $\max\{n_0, n_{k-m}\} \leq m(n_\ell + 1)/2$ for all $\ell \in \{1, \dots, k-m-1\}$.
    \item $n_\ell \leq 2(\min\{n_0, n_{k-m}\} + 1)/m$ for all $\ell \in \{1, \dots, k-m-1\}$ such that $n_\ell \geq 2$.
    \item If $n_1 = 1$, then $n_0 \geq (m-1)/2$.
    \item If $n_{k-m-1} = 1$, then $n_{k-m} \geq (m-1)/2$.
\end{enumerate}
\end{theorem}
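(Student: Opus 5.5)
The plan is to reduce to a finite Hotelling game via Theorem~\ref{theorem:attesters-Hotelling-reduction} and then compute every market share and every deviation explicitly.

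\textbf{Step 1: identify $X$.} With $k$ equally spaced attesters and $h = 1/(k-1)$, the minimal quorum intervals $\mathcal{I}^*$ are exactly the blocks of $m$ consecutive attesters, $[\ell h, (\ell+m-1)h]$ for $\ell \in \{0,\dots,k-m\}$. These all have diameter $(m-1)h$, so the regularity hypothesis holds. By Lemma~\ref{lemma:attester-quorum-radius-minimizers-identity}, $X$ is the set of their midpoints $q_\ell = (m+2\ell-1)h/2$, which gives the set in condition~1. By Lemma~\ref{lemma:attester-quorum-radius-identity}, every point of $X$ has the same quorum radius. Hence once all proposers are in $X$, users simply go to the nearest proposer.

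\textbf{Step 2: all proposers lie in $X$.} Theorem~\ref{theorem:attesters-Hotelling-reduction} puts every interior proposer in $X$ and forces $p_1 \ge q_0$ and $p_n \le q_{k-m}$. I cannot quote Theorem~\ref{theorem:reduction-to-hotelling-formal} directly, since $\nu([0,q_0]) \ge 2/n$ may fail (e.g.\ $k=4$). Instead I argue as in Theorem~\ref{theorem:attesters-continuous-uniform-characterization}.
\begin{itemize}
\item Since $m-1 = \lfloor 2k/3\rfloor \ge (2k-2)/3$, we get $q_0 = (m-1)h/2 \ge 1/3$.
\item Suppose $p_1 > q_0$. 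Then the proposer with the smallest share (at most $1/n \le 1/4$) can move to just left of $p_1$, or to $q_0$. There she uniquely captures more than $\nu([0,q_0]) \ge 1/3$, which is a profitable deviation.
\end{itemize}
The same argument gives $p_n = q_{k-m}$, so every proposer lies in $X$.

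For the converse direction I use Lemma~\ref{lemma:attester-midpoints-weakly-dominant}: any location outside $X$ is weakly dominated by the nearest point of $X$. So in a profile supported on $X$ it suffices to rule out deviations into $X$, and the problem becomes a Hotelling game on the grid $\{q_0,\dots,q_{k-m}\}$.

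\textbf{Step 3: market shares on the grid.} Suppose first that every grid point is occupied.
\begin{itemize}
\item The group at $q_0$ jointly gets $q_0 + h/2 = mh/2$, so each proposer there gets $mh/(2n_0)$.
\item Symmetrically, each proposer at $q_{k-m}$ gets $mh/(2n_{k-m})$.
\item Each proposer at an interior $q_\ell$ gets $h/n_\ell$.
\end{itemize}

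I then derive each condition as follows.
\begin{itemize}
\item \textbf{Condition 2.} If some $q_\ell$ is empty, a proposer from a location holding at least two proposers (one exists because $n \ge k$) moves into the gap. The gap's half-markets give her at least $h$. This beats the minimum share, which I expect to show is below $h$ by a counting argument.
\item \textbf{Conditions 3--5.} These are the ``no profitable move from a shared location'' inequalities.
  \begin{itemize}
  \item Interior to interior: $h/n_\ell \ge h/(n_j+1)$ whenever $n_\ell \ge 2$.
  \item Endpoint to endpoint: this gives $|n_0 - n_{k-m}| \le 1$.
  \item Endpoint to interior: this gives $mh/(2n_0) \ge h/(n_\ell+1)$.
  \item Interior to endpoint: this gives $h/n_\ell \ge mh/(2(n_0+1))$.
  \end{itemize}
\item \textbf{Conditions 6--7.} These come from a lone proposer at $q_1$ or $q_{k-m-1}$. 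When she leaves, her cell is redistributed, which changes her neighbours' markets. Her best deviation is joining $q_0$, whose market grows to $(m+2)h/2$. This yields a bound on $n_0$ that I will simplify to the stated $(m-1)/2$.
\item \textbf{Lone interior proposers elsewhere.} For a lone proposer at $q_\ell$ with $2 \le \ell \le k-m-2$, leaving gives the neighbours her cell. I must check that joining any location then yields at most $h$, and show that this is implied by conditions 3--5.
\end{itemize}

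\textbf{Step 4: sufficiency, and the main obstacle.} Sufficiency is the same computation read in reverse: under conditions 1--7 every deviation into $X$ is unprofitable, and deviations outside $X$ are covered by Step~2.

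I expect the main obstacle to be exhaustiveness of this casework. The delicate cases are deviations by lone proposers, which change other groups' markets, and the boundary cases $k-m \le 2$. In those cases $q_1$ or $q_{k-m-1}$ coincides with an endpoint, or there are no interior points at all, so the conditions must be read appropriately. I would first tabulate the deviation payoffs for $k-m \in \{1,2,3\}$ explicitly to confirm the inequalities, including the exact constants in conditions 4--6, before writing the general argument.
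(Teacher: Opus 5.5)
Your overall route is the paper's. Theorem~\ref{theorem:attesters-Hotelling-reduction} puts proposers $2,\dots,n-1$ in $X$. The minimum-share proposer moving to $q_0$ (resp.\ $q_{k-m}$) pins the extreme proposers there. Empty midpoints are then excluded, and the rest is casework with per-proposer shares $mh/(2n_0)$, $h/n_\ell$, $mh/(2n_{k-m})$, where lone proposers next to an endpoint produce conditions~6--7. Using Lemma~\ref{lemma:attester-midpoints-weakly-dominant} to restrict the sufficiency check to deviations into $X$ is fine, and more direct than the paper's appeal to Theorem~\ref{theorem:attesters-Hotelling-reduction}. But several steps fail as written.

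\textbf{Condition~2.} The deviator must be the minimum-share proposer herself. Her share is at most $1/n \le 1/k < h$, and every other proposer lies in $X\setminus\{q_\ell\}$, so moving to the empty $q_\ell$ gives her at least $h/2$ on each side. Comparing some other proposer's new payoff with the minimum share proves nothing, and a proposer from a shared location need not profit. For example, with $q_1$ empty, $q_2$ occupied and $n_0=2$, the two proposers at $q_0$ split $[0,q_1]$ and each get $(m+1)h/4 \ge h$, while moving to $q_1$ yields exactly $h$.

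\textbf{Condition~6.} Your constant is wrong. When the lone proposer leaves $q_1$, her cell is split at $(q_0+q_2)/2=q_1$, so the group at $q_0$ grows to $(m+1)h/2$, not $(m+2)h/2$. Then $h \ge (m+1)h/(2(n_0+1))$ is exactly $n_0 \ge (m-1)/2$. Your figure gives $n_0 \ge m/2$ instead, which does not simplify to the stated bound.

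\textbf{Lone endpoint proposers.} Your endpoint inequalities for conditions~3--4 are those of a proposer at a \emph{shared} endpoint. You never treat a lone endpoint proposer, whose departure changes her neighbour's market. The paper closes this by proving $\min\{n_0,n_{k-m}\}\ge 2$:
\begin{itemize}
\item For $k\ge 7$ (so $m\ge 5$): condition~5 gives $n_0\ge mn_1/2-1\ge m-1$ if $n_1\ge 2$, and condition~6 gives $n_0\ge (m-1)/2\ge 2$ if $n_1=1$. The same holds symmetrically for $n_{k-m}$.
\item For $4\le k\le 6$: here $X=\{q_0,q_1\}$ and each endpoint group holds half the market. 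A proposer at the crowded endpoint would switch; equivalently, use condition~3 with $n\ge 4$.
\end{itemize}
These same implications are what make your sufficiency casework exhaustive.

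\textbf{Smaller points.}
\begin{itemize}
\item In Step~2, moving just left of $p_1$ only ties when $p_1\in X$, because $D$ grows at unit rate off $X$. Use the move to $q_0$.
\item Joining $q_0$ need not be a lone $q_1$-proposer's best move, so check every target. Joining an adjacent interior group pays $3h/(2(n_{j\pm1}+1))\le 3h/4<h$ automatically. Joining a non-adjacent endpoint requires $n_0,n_{k-m}\ge m/2-1$, which needs conditions~6--7 as well as~5, not just conditions~3--5.
\end{itemize}
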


\begin{proof}
With $k$ attesters uniformly distributed along $[0,1]$, $X$ consists of the midpoints of each group of $m$ contiguous attesters.
That is, $X = \{q_0, \dots, q_{k-m}\}$.
By Lemma~\ref{lemma:attester-quorum-radius-identity},
\[
    D(x) = d(x, X) + \frac{m-1}{2(k-1)}.
\]
In particular, if no proposer locates outside of $X$, then they all have the same quorum radius, and users send to their nearest proposers.
Let $p_1 \leq \dots \leq p_n$ constitute a pure equilibrium.
By Theorem~\ref{theorem:attesters-Hotelling-reduction}, proposers $2, \dots, n-1$ locate at points in $X$.
This forces $p_1 = q_0$ and $p_n = q_{k-m}$: Theorem~\ref{theorem:attesters-Hotelling-reduction} implies that $p_1 \geq q_0$, but if this inequality were strict, then the proposer with the smallest market share (which is at most $1/n$) could capture strictly more than $1/3$ of the market by re-locating to $q_0$ (she would uniquely capture the users in $[0,q_0] \supseteq [0,1/3]$ and at least tie for, if not uniquely capture, some users beyond $q_0$).
The same argument implies that $p_n = q_{k-m}$.
It follows that no proposer locates outside of $X$ and that $\min\{n_0, n_{k-m}\} \geq 1$.
Remember that if no proposer locates outside of $X$, then they all have the same quorum radius, and users send to their nearest proposers.


A similar argument shows that $n_\ell \geq 1$ for the remaining $\ell$.
Suppose by way of contradiction that $n_\ell = 0$ for some such $\ell$, and note that the proposer with the smallest market share (which is at most $1/n$) could capture at least $1/(k-1)$ of the market by re-locating to $q_\ell$.
Since $n \geq k$, this deviation would be strictly profitable, contradicting the assumption that $p$ is an equilibrium.

Once no proposer locates outside of $X$ and each point in $X$ has at least one proposer, it is not too hard to see why the remaining conditions are necessary.
Let $M_\ell$ denote the market share of each proposer at $q_\ell$.
\begin{align*}
    M_0 &= \frac{m}{2(k-1)n_0} \\
    M_{k-m} &= \frac{m}{2(k-1)n_{k-m}} \\
    M_\ell &= \frac{1}{(k-1)n_\ell} \quad \forall\, \ell \in \{1, \dots, k-m-1\}
\end{align*}
Define
\[
    \alpha_\ell = \begin{cases}
        \frac{m}{2(k-1)} & \ell \in \{0, k-m\} \\
        \frac{1}{k-1} & \ell \in \{1, \dots, k-m-1\}
    \end{cases}
\]
Theorem~\ref{theorem:attesters-Hotelling-reduction} implies that if suffices to consider deviations to another location in $X$.
For a proposer at $q_j$ who shares a location with another proposer to not want to deviate to $q_\ell$, it must be that
\[
    M_j = \frac{\alpha_j}{n_j} \geq \frac{\alpha_\ell}{n_\ell + 1}.
\]
This inequality is also immediately necessary for all $\ell \not\in \{j \pm 1\}$ when $j \in \{1, \dots, k-m-1\}$ and $n_j = 1$.

Now, consider $\ell \in \{1, k-m-1\}$ more closely.
If $n_1 = 1$, then for the proposer at $q_1$ to not want to deviate to $q_0$, it must be that
\[
    M_1 = \alpha_1 \geq \frac{q_0 + q_2}{2(n_0 + 1)} = \frac{m+1}{2(k-1)(n_0+1)}.
\]
Similarly, if $n_{k-m-1} = 1$, then for the proposer at $q_{k-m-1}$ to not want to deviate to $q_{k-m}$, it must that
\[
    M_{k-m-1} = \alpha_{k-m-1} \geq \frac{2 - q_{k-m-2} - q_{k-m}}{2(n_{k-m} + 1)} = \frac{m+1}{2(k-1)(n_{k-m}+1)}.
\]

To conclude the necessary direction, we show that $\min\{n_0, n_{k-m}\} \geq 2$ so that the necessary conditions for the proposers at $q_0$ and $q_{k-m}$ are exactly those for proposers who share their locations with other proposers.
We do so by examining the implication of the fact that no proposer at $q_\ell$ for $\ell \in \{1, \dots, k-m-1\}$ wants to deviate to $q_1$ nor $q_{k-m}$.
Note that this approach requires $k \geq m + 2$, which holds for $k \geq 7$. 
We handle $4 \leq k \leq 6$ separately.

For $k \geq 7$ ($m \geq 5$), recall that the fact that a proposer at $q_1$ does not want to deviate to $q_0$ implies that
\[
    M_1 = \frac{\alpha_1}{n_1} \geq \begin{cases}
        \frac{\alpha_0}{(n_0+1)} & n_1 \geq 2 \\
        \frac{m+1}{2(k-1)(n_0+1)} & n_1 = 1.
    \end{cases}
\]
When $n_1 \geq 2$, 
\[
    n_0 \geq \frac{\alpha_0}{\alpha_1} \cdot n_1 - 1 = \frac{m}{2} \cdot n_1 - 1 \geq m - 1 \geq 4
\]
When $n_1 = 1$,
\[
    n_0 \geq \frac{m+1}{2(k-1)\alpha_1} - 1 = \frac{m - 1}{2} \geq 2
\]
The same argument implies that $n_{k-m} \geq 2$.
Thus, when $k \geq 7$, the necessary conditions for the proposers at $q_0$ and $q_{k-m}$ are just those for proposers who share their location with other proposers.

For $4 \leq k \leq 6$, $X$ consists of only two locations: $X = \{q_0, q_1\}$.
It is not hard to see that if one of the locations had only one proposer, then a proposer from the other location would prefer to switch.
Thus, both locations again have at least two proposers, concluding the necessary direction.

Most of the work to prove sufficiency has already been done for the necessary direction since the conditions say that most deviations are not strictly profitable once no proposer locates outside of $X$.
The only deviations not covered by the conditions are those between $j \in \{1, \dots, k-m-2\}$ such that $n_j = 1$ and $\ell \in \{j \pm 1\} \setminus \{0, k-m\}$.
We show that these deviations are never binding.
For $j \in \{1, \dots, k-m-2\}$ such that $n_j = 1$, the market share of deviating to $j + 1$ is exactly
\[
    \frac{(q_{j + 2} + q_{j + 1})/2 - q_j}{n_{j + 1}+1} = \frac{3}{2(k-1)(n_{j + 1}+1)} \leq \frac{3}{4(k-1)} < \alpha_j = \frac{\alpha_j}{n_j} = M_j
\]
The same argument shows that a deviation from $j \in \{2, \dots, k-m-1\}$ such that $n_j = 1$ to $j-1$ is never strictly profitable.
\qed \end{proof}

\end{document}